\newcommand{\R}{\mathbb{R}}
\newcommand{\Rd}{\mathbb{R}^d}
\newcommand{\hRd}{\mathbb{R}^{\hat d}}
\newcommand{\hd}{\hat d}
\newcommand{\opt}{\textsc{opt}}
\newcommand{\eps}{\varepsilon}
\newcommand{\bc}{\mathfrak{c}}
\newcommand{\parent}{\textsc{Parent}}
\newcommand{\mval}{M_{\val}}
\DeclareMathOperator{\poly}{poly}
\DeclareMathOperator{\level}{level}
\DeclareMathOperator{\polylog}{polylog}
\DeclareMathOperator{\dist}{dist}
\DeclareMathOperator{\cost}{cost}
\DeclareMathOperator{\val}{Value}
\DeclareMathOperator{\argmin}{argmin}
\DeclareMathOperator{\isolated}{isolated}
\DeclareMathOperator{\len}{len}
\newcommand{\children}{C}
\newcommand{\calA}{\mathcal{A}} 
\newcommand{\calC}{\mathcal{C}} 
\newcommand{\calD}{\mathcal{D}} 
\newcommand{\calB}{\mathcal{B}}
\newcommand{\calG}{\mathcal{G}}
\newcommand{\calM}{\mathcal{M}}
\newcommand{\calN}{\mathcal{N}}
\newcommand{\calS}{\mathcal{S}}
\newcommand{\calX}{\mathcal{X}}
\newcommand{\mettuP}{\textsc{RecursiveGreedy}}
\newcommand{\mettuPModified}{\textsc{RecursiveGreedyModified}}
\newcommand{\sumEst}{\textsc{Sum}}
\newcommand{\sumNorm}{\textsc{SumNorm}}
\theoremstyle{plain}
\newtheorem{theorem}{Theorem}[section]
\newtheorem{infTheorem}{Informal Theorem}[section]
\newtheorem{lemma}[theorem]{Lemma}
\newtheorem{claim}[theorem]{Claim}
\theoremstyle{definition}
\newtheorem{definition}[theorem]{Definition}
\newtheorem{property}[theorem]{Property}
\newcommand{\erclogowrapped}[1]{%
\setlength\intextsep{0pt}%
\begin{wrapfigure}[3]{r}{#1*\real{1.1}}%
\includegraphics[width=#1]{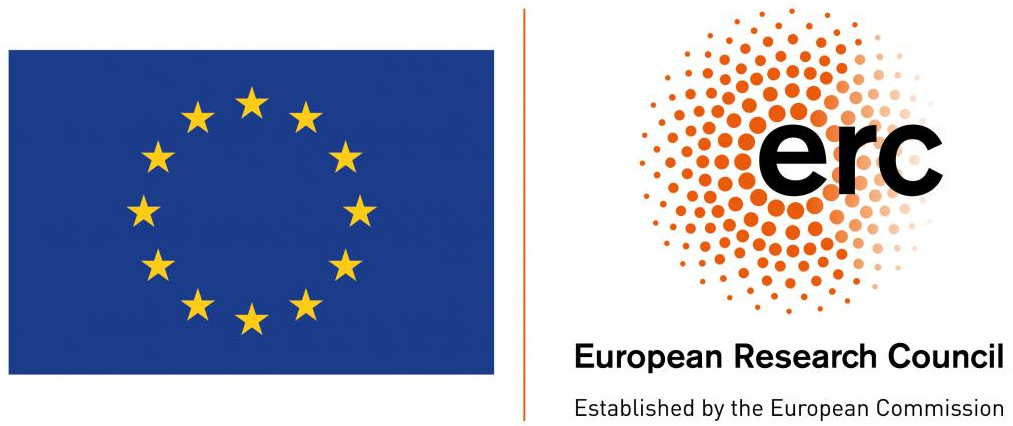}%
\end{wrapfigure}%
}
\newcommand{\lpar}{\left(}
\newcommand{\rpar}{\right)}
\newcommand{\lbra}{\left\{}
\newcommand{\rbra}{\right\}}
\newcommand{\lnor}{\left\|}
\newcommand{\rnor}{\right\|}
\newcommand{\lbrak}{\left[}
\newcommand{\rbrak}{\right]}
\newcounter{sideremark}
\date{}
\title{Making Old Things New: A Unified Algorithm for Differentially Private Clustering}
\author[1]{Max Dupré la Tour\footnote{Equal contribution.}}
\author[2]{Monika Henzinger$^*$}
\author[3]{David Saulpic$^*$}
\affil[1]{McGill University, Montreal, Canada}
\affil[2]{Institute for Science and Technology Austria (ISTA), Klosterneuburg, Austria}
\affil[3]{CNRS \& IRIF, Université Paris Cité, Paris, France}
\begin{document}
\maketitle

\begin{abstract}
    As a staple of data analysis and unsupervised learning, the problem of private clustering has been widely studied under various privacy models. Centralized differential privacy is the first of them, and the problem has also been studied for the local and the shuffle variation. In each case, the goal is to design an algorithm that computes privately a clustering, with the smallest possible error. 
    The study of each variation gave rise to new algorithms: the landscape of private clustering algorithms is therefore quite intricate.
    In this paper, we show that a 20-year-old algorithm can be slightly modified to work for any of these models. This provides a unified picture: while matching almost all previously known results, it allows us to improve some of them and extend it to a new privacy model, the continual observation setting, where the input is changing over time and the algorithm must output a new solution at each time step. 
\end{abstract}


\section{Introduction}
The massive, continuous and automatic collection of personal data by public as well as private organisations has raised privacy concerns, both legally  \cite{gdpr}, and in terms of citizens' demands \cite{nyt, nyt2}.
To address those concerns, formal privacy standards for algorithms were defined and developed, with the most prominent one being Differential Privacy~\cite{dworkDef}. This standard allows for a formal definition of privacy, enabling the development of algorithms with provable privacy guarantees.
Differentially private algorithms are now widely deployed. For instance, the U.S. Census Bureau uses them to release information from a private Census \cite{abowd2018us}, Apple employs them to collect data from its phone users~\cite{apple1, apple2}, and Google has developed an extensive library of private algorithms ready to be used~\cite{googleLib}.

The standard definition of \textit{centralized} Differential Privacy ensures that these algorithms ``behave roughly the same way'' on two databases differing by only a single element. The motivation behind this is to prevent inferring from the result whether a specific element is present in the database, thereby protecting against membership-inference attacks.
Stronger notions of privacy exist: most notably, the \textit{local} model, where the adversary observes not only the result but also all communications between a server and data owners. The communication must not reveal the presence of a specific element in the database. This guarantee is much stronger but comes at a price: achieving it significantly degrades the accuracy of the algorithm's answers. Therefore, some intermediate models have been defined, offering stronger privacy guarantees than the centralized model while maintaining better accuracy than in the local model.

Those guarantees are only valid for a \emph{static} database. However, real-life data often evolves over time, as seen in Apple's case, where personal data is collected and transferred daily. Therefore, a definition of privacy that accounts for such changes is necessary. This is formalized in the \emph{continual observation (or continual release) model} \cite{dwork2010differential}.

In this article, we consider one of the most common data-analysis and unsupervised learning primitive, namely $k$-means clustering.
This problem has been extensively studied under various notions of privacy: essentially each privacy model gives rise to a new algorithm, with a new and often delicate analysis. We describe this complex landscape in detail in \Cref{sec:previous}. 
However, all these privacy definitions share common ground: it seems possible that, instead of having specialized algorithms for each of them, one could identify the key properties of the private $k$-means problem and exploit them to design a unified algorithm. This is precisely the question we consider in this paper:

\begin{center}
        \minibox[c, frame]{ Is there a single clustering algorithm that could perform well in all privacy models?}
\end{center}

We answer this question positively for all privacy models in which clustering is known to be possible. 
This has a significant benefit: when studying a new privacy model (there are already more than 225 variations of differential privacy! \cite{DesfontainesP20}), there is a go-to algorithm to try that is likely to succeed. 
Indeed, we show that this algorithm can easily be made private under continual observation, which is the first such result for that model.

\subsection{Our Results}
We show that a 20-year-old greedy algorithm from Mettu and Plaxton \cite{MettuP00} can be easily made differentially private (DP). 
This algorithm provides a non-private approximation to the more general $(k,z)$-clustering problem. In this problem, the input data consists of a set of points $P$ in $\R^d$, and the goal is to find a set $S$ of $k$ points (the \emph{centers}) in order to minimize the cost, defined as 
$$\cost(P, S) = \sum_{p\in P} \min_{s \in S} \dist(p, s)^z.$$
We focus especially on the case where $z=2$, which is the popular $k$-means problem (while $z=1$ is $k$-median), and also provide results for general $z$. We denote the optimal cost for the $(k,z)$-clustering problem as $\opt_{k,z}$.

We show that a slight variation of the algorithm from Mettu and Plaxton is private, provided that one can privately solve a generalized version of the \textit{max summation} problem. Given a fixed, non-private set of balls in $\R^d$ and a private set of points $P \subset \R^d$, each point in $P$ contributes a value to each ball that contains it. The goal is to output a ball with approximately the maximum value; an algorithm for this problem has error $\theta$ if the absolute value of the difference between the actual maximum value and the one returned is at most $\theta$. This is a simplified version of the problem we need to solve, referred to as the \emph{generalized summation problem}, which we formally define in Section~\ref{sec:approx}.

Before stating our result, we note that the quality of the private $k$-means solution $S$ has to be measured by two parameters:  $S$ has \textit{multiplicative approximation} $M$ and \textit{additive error} $A$ when $\cost(S) \leq M \cdot \opt_{k,z} + A$. 
Since even the non-private problem is NP-hard to approximate within 1.06 \cite{Cohen-AddadSL22}, we must have $M > 1$ if we insist on a polynomial-time algorithm.
The privacy constraints enforce $A > 0$ as well; when the input is in the $d$-dimensional ball $B_d(0,\Lambda)$ in $\R^d$, \cite{chaturvediCentral} showed that $A$ has to be at least $k\sqrt{d} \cdot \Lambda^2$ for any $(\varepsilon, \delta)$ differentially-private mechanism. In light of this lower bound, we will assume $\Lambda=1$ in the following.

Our meta-theorem shows how to reduce the computation of a $k$-means solution to a \textit{repeated} application of an algorithm solving the max summation problem. 
To provide some intuition, a (perhaps too much) simplified version of the algorithm from Mettu and Plaxton repeats the following process $k$ times: select a ball with approximately maximum value, and remove all balls intersecting with the selected one. Thus, we can use an algorithm for repeated max summation as a black-box. Our main result relates the error of the max summation algorithm to the error of the clustering algorithm:

\begin{infTheorem}[see \Cref{thm:mpWithError} \Cref{thm:mainApprox}, and \Cref{lem:centralizedError}]\label{thm:main}
Let $\beta>0$.
    If one can solve privately the repeated max summation problem such that, with probability at least $2/3$, the error is $\theta$, then one can solve DP $k$-means such that, with probability $1-\beta$, either of the following guarantees is achievable:
    \begin{itemize}[noitemsep, topsep=0pt]
        \item multiplicative approximation $O(1)$ and additive error $\approx k \polylog(n/\beta) \cdot \theta$,
        \item or multiplicative approximation $w^*(1+\alpha)$ and additive error $\approx \sqrt{d} \poly(k, \log(n/\beta)) \cdot \theta$, where $w^*$ is the best non-private approximation ratio.\footnote{For this result, we actually need something slightly stronger than max summation, see \Cref{def:bucketSum}}
    \end{itemize} 
    For the more general $(k,z)$-clustering, in the second case the multiplicative approximation is $w^*(2^z+\alpha)$ and additive error $\approx \sqrt{d} \poly(k/\beta, \log(n)) \cdot \theta$.
\end{infTheorem}
To illustrate the above informal theorem, in  centralized $(\eps, \delta$)-DP we can use the exponential mechanism to solve the repeated max summation problem. This is formalized in \Cref{lem:centralizedError}, with $\theta = \sqrt{d} \polylog(n/\delta)/\eps$. 

We highlight a few features of our $k$-means results: in the first case, the additive error is optimal, as it matches the lower bound of \cite{chaturvediCentral}. In the second case, the multiplicative approximation is close to optimal, in the sense that it is almost as good as any polynomial-time non-private algorithm. Furthermore, even starting from an algorithm with constant probability of success, we show that the probability can be boosted arbitrarily high.

Our result is actually even stronger: it computes a solution not only to $k$-means, but to all $k'$-means for $k' \leq k$, with the same multiplicative and additive error guarantee as above.
This allows the use of the so-called elbow method to select the 'correct' value for $k$ without any further loss of privacy. We refer to \Cref{ap:elbow}.

We apply this meta-theorem to several different privacy settings, and state the bounds obtained in \Cref{table:results}. We present in this table the bounds for $(\eps, \delta)$-privacy (see \Cref{sec:prelim}); we address the particular case $\delta=0$ in \Cref{app:epsprivacy}.

To summarize our contribution, we match the previous bounds in almost all settings, and make improvements in several cases. For Local and Shuffle DP in one round, we improve exponentially the dependency in the probability for $k$-means and extend the results to $(k,z)$-clustering, which partially answers an open question from \cite{ChangG0M21}. In the MPC model, we improve the dependency in $k$ to get an optimal bound. Finally, we present the first result in the Continual Observation setting. We summarize our bounds in \Cref{table:results}, and discuss in greater detail the previous algorithms -- and why they do not work in full generality -- in \Cref{sec:previous}.

\begin{table*}
\caption{Comparison with the previous state-of-the-art for $(\eps, \delta)$ privacy. The success probability is $1-\beta$. $\alpha \in (0, 1/4]$ and $c > 0$ are precision parameters. For simplicity, dependency in $\log(1/\beta), 1/\eps, \log(1/\delta), \polylog(nd)$ and $\log \log T$ (for continual observation) are hidden, and the diameter is assumed to be $\Lambda = 1$. The notation $O_\alpha(1)$ is to insist that the constant hidden depends on $\alpha$ -- here it is $\log(1/\alpha)/\alpha^2$.}
\vskip 0.05in
    \label{table:results}
\begin{tabular}{l|c|c|r}
    Model & Approximation & Error & \\
    \hline
    \multirow{2}{*}{Centralized DP} & $w^*(1+\alpha)$ & $k^{O_\alpha(1)} + k\sqrt{d}$ & \cite{ghaziTight}, Cor.~\ref{cor:mainResApprox} \\
    & $O(1)$ & $k \sqrt{d}$ & \cite{chaturvediCentral}, Lem.~\ref{lem:centralizedError}\\
    \hline
    \multirow{4}{*}{Local DP} & $w^*(1+\alpha)$ & $\sqrt{n} \cdot \lpar (k/\beta)^{O_\alpha(1)} + k\sqrt{d}\rpar $ & $k$-means only, 1 round, \cite{ChangG0M21}\\
    & $w^*(1+\alpha)$ & $\sqrt{n} \cdot \lpar k^{O_\alpha(1)} + k\sqrt{d}\rpar$ & $k$-means only, 1 round, Cor.~\ref{cor:mainResApprox}\\
    & $w^*(2^z+\alpha)$ & $\sqrt{n} \cdot \lpar (k/\beta)^{O_\alpha(1)} + k\sqrt{d}\rpar$ & $(k,z)$-clustering, 1 round, Cor.~\ref{cor:mainResApprox}\\
    & $O(1/c)$ & $\sqrt{nd} \cdot k^{1+O_c(1)}$ &  $k$-means only, \cite{ChaturvediJN22}\\   
    \hline
     \multirow{3}{*}{Shuffle DP,  1 round} & $w^*(1+\alpha)$ & $\lpar (k/\beta)^{O_\alpha(1)} + k\sqrt{d}\rpar$ & $k$-means only, \cite{ChangG0M21}\\
    & $w^*(1+\alpha)$ & $ k^{O_\alpha(1)} + k\sqrt{d}$ & $k$-means only, Cor.~\ref{cor:mainResApprox}\\
    & $w^*(2^z+\alpha)$ & $ (k/\beta)^{O_\alpha(1)} + k\sqrt{d}$ & $(k,z)$-clustering, Cor.~\ref{cor:mainResApprox}\\
    \hline
   \multirow{3}{*}{MPC} & $w^*(1+\alpha)$ & $k^{O_\alpha(1)} + k\sqrt{d}$ & \cite{Cohen-AddadEMNZ22}, Thm.\ref{thm:mpc}\\
   & $O(1)$ & $k^{2.5} + k^{1.01}\sqrt{d}$ & \cite{Cohen-AddadEMNZ22}\\
    & $O(1)$ & $k\sqrt{d}$ & \Cref{thm:mpc}\\
    \hline
    Continual  & $w^*(1+\alpha)$ & $ (k^{O_\alpha(1)} + k\sqrt{d})\log^{1.5}(T)$ & $k$-means only, Cor.~\ref{cor:mainResApprox}\\ 
    ~~observation& $w^*(2^z+\alpha)$ & $ \lpar (k/\beta)^{O_\alpha(1)} + k\sqrt{d}\rpar \log^{1.5}(T)$ & $(k,z)$-clustering, Cor.~\ref{cor:mainResApprox}

\end{tabular}
\end{table*}

\subsection{Brief Overview}
To show the first point of \Cref{thm:main}, we rely on the algorithm from Mettu and Plaxton. We reinterpret this algorithm, introducing some key changes: first, to make it private, and second, to enable implementation based on an algorithm for the generalized summation problem. To provide some intuition, this algorithm iteratively chooses cluster centers, intuitively by selecting smaller and smaller regions that are far away from any center previously selected, based on the region's ``value" (a proxy for the contribution to the cost). 
For any $k$, the first $k$ centers form a constant-factor approximation to $(k,z)$-clustering.
It turns out that we can repeatedly use a generalized summation algorithm to compute those ``values": we show that, if we have a private generalized summation algorithm with an error of $\theta$, then we can solve $(k,z)$-clustering with additive error $k \theta$ (see \Cref{sec:MPalgo}).

We start this paper by formalizing some general building blocks for private clustering in \Cref{sec:coating}, namely techniques that can be used to simplify the input and the problem. We show how to perform all of them based only on estimating the size of the clusters and other related quantities.
This includes for $k$-means (a) a dimension-reduction technique; (b) a technique that improves the approximation ratio from $O(1)$ to almost $w^*$, the best non-private approximation ratio; (c) a new way of boosting the success probability for $k$-means. (a) and (b) are well-known for $k$-means, we extend them for the general $(k,z)$-clustering problem.

We combine these techniques in \Cref{sec:approx} to obtain the near-optimal approximation factor that we presented in \Cref{table:results}. 
For this, we show that the max summation problem can be solved privately using histograms to estimate the value of each ball, and then to select the maximum estimated value. 
This leads directly to a novel private algorithm for the centralized model that is also much simpler than prior algorithms.
However, applying the other building blocks (a)-(c) requires estimating the size of each cluster. 
This would be doable with histograms \emph{if the clusters were known a priori and remained fixed throughout the algorithm}; however,  the clusters depend on the input data and are not known a priori, and, thus, computing their size cannot be reduced to a simple histogram query. 
To solve this issue, we introduce a structural result on the shape of clusters: 
we show that each cluster is the disjoint union of a small number of \emph{pre-determined} sets. Therefore, to estimate the size of each cluster, it is enough to apply a general summation algorithm on the pre-determined sets, and combine the results on those.

Finally, we present another option: instead of histograms, one can use the exponential mechanism in \Cref{sec:error} to show that the max summation problem can be solved (in some privacy settings) with a very tiny $\theta$, resulting in a near-optimal additive error.

\subsection{Privacy Models}\label{sec:prelim}
As it is common in the differential privacy literature (see e.g. \cite{dwork2014algorithmic}), we will assume our input is given as a multiset, as formalized in \Cref{app:defPriv}. 

\paragraph{Central Differential Privacy:}
We will use the formalism of \cite{dwork2014algorithmic}. A \emph{dataset} is a multiset $P$ of points of a universe $X$. We say that two datasets $P, P'$ are \emph{neighboring} when they differ by a single point, namely $\sum_{x\in X}|P(x)-P'(x)| = 1$.
We say that a mechanism $\mathcal{M}$ is $(\varepsilon, \delta)$-\emph{differentially private} if for any two neighboring  datasets $P, P'$ and any set $S$, we have:
\[\mathbb{P}(\mathcal{M}(P) \in S) \leq \exp(\varepsilon)\cdot \mathbb{P}(\mathcal{M}(P') \in S) + \delta.\] 
We say that an algorithm is $\varepsilon$-\emph{differentially private} if it is $(\varepsilon, 0)$-differentially private. To emphasize the difference to other models, we will refer to this privacy model as \emph{central differential privacy}. 

In this paper, we will also study the following other privacy models, deferring the formal definitions to \Cref{app:defPriv}.
\noindent \paragraph{Local Model}~\cite{kasiviswanathan2011can}: In the local model, there is no trusted central server with access to the entire raw database. Instead, we have $n$ individuals, each with one data point. Two inputs are  adjacent if the data of a single user changes. The \textit{transcript} of an algorithm is the sequence of messages exchanged between clients and the server: an algorithm is \emph{$(\eps, \delta)$-local differentially-private (LDP)} if the transcript is $(\eps, \delta)$-DP. 
In this paper, we focus on the local privacy model with a single round of communication from clients to the server, also known as the non-interactive model.

\paragraph{Shuffle Model}~\cite{bittau2017prochlo}:
Similarly to the local model, we have $n$ individuals, each with one data point. However, in the shuffle model, a trusted intermediary comes into play between the individuals and the server: the \emph{shuffler}. The shuffler gathers the messages from the individuals and shuffles them randomly before sending them to the server, preventing the server from attributing a specific message to a particular individual.\footnote{The original motivation behind this is that the random shuffling can be done via secure cryptographic protocols.} Only the transcript of interactions between the server and the shuffler has to be DP. 

\paragraph{Continual Observation Model}~\cite{dwork2010differential}: In the continual observation model, the input is not static but evolves over time. The algorithm is given a stream of updates (insertion or deletion) to its dataset, one per time step, and outputs a solution for the input so far at each time step. Two streams are \emph{(event-level) neighboring} if they differ by a single update. The algorithm is $(\eps, \delta)$-DP under continual observation if the algorithm mapping a stream to a sequence of outputs is $(\eps, \delta)$-DP.

\paragraph{Massively Parallel Computing Model (MPC)}: The MPC model is a model for distributed, scalable computation -- not necessarily private. The input is initially split among several machines, each of them having local memory sub-polynomial in the total database size ($n^\kappa$, for some fixed $\kappa \in (0, 1)$). The machines can send and receive messages from other machines, but the message length cannot exceed the machine's local memory. As opposed to the Local Model, the messages exchanged don't have to be private: the algorithm is $(\eps, \delta)$-private if its output is $(\eps, \delta)$-DP.

\section{General Building Blocks for Private Clustering} \label{sec:coating}
In this section, we present several techniques used in the literature as preprocessing or postprocessing steps to simplify the task of computing a private clustering. We also extend some of these techniques to allow for greater generality. We will not fixate on a specific privacy model in order to present the results in a modular way. The lemmas in this section will apply to any privacy model, assuming that we are given a partition of the space into $k$ subsets $S_1,\dots,S_k$ corresponding to a clustering, and that we can estimate for all $i$, $|S_i \cap P|$, $\sum_{p \in P \cap S_i} p$, and $\sum_{p \in P \cap S_i} |p|_2$. We will prove the existence of private algorithms to compute such a partition and the corresponding estimation in the next sections.

\begin{property}\label{assumption:coating}
We say that three sequences $(n_i)_{1 \leq i \leq k}$, $(\sumEst_i)_{1 \leq i \leq k}$, $(\sumNorm_i)_{1 \leq i \leq k}$ verify the \Cref{assumption:coating} for a partition of the space $\R^d = S_1\cup \dots \cup S_k$ with error parameter $e \geq 0$ if:
\begin{itemize}
\item $|n_i - |P \cap S_i|| \leq e$.
\item $\|\sumEst_i - \sum_{p \in P \cap S_i} p\|_2 \leq e$.
\item $|\sumNorm_i - \sum_{p \in P \cap S_i} \|p\|_2| \leq e$.
\end{itemize}
\end{property}

\subsection{Reducing the dimension}\label{sec:dimred}

Dimension-reduction techniques based on the Johnson-Lindenstrauss lemma allow the projection of the dataset onto $\hd = O\lpar z^4 \cdot \log (k/\beta) \alpha^{-2}\rpar$ dimensions, such that with probability $1-\beta/2$ the clustering cost is preserved up to a $(1\pm \alpha)$ factor. We call $\pi(P)$ the projected and rescaled dataset. We provide a more detailed description in \Cref{ap:dimred} and concentrate here on the main challenge, which is to "lift up" the solution: given a clustering of the projected dataset, how can we compute centers in the original space?

Our main "lifting" technique is modular, as it merely requires approximating the size of each cluster and the sum of the points inside each cluster. Following the analysis of \cite{ghaziTight}, by applying a standard concentration bound, it holds with probability at least $1-\beta/2$ that all the points of the projected and rescaled dataset $\pi(P)$ lie within the ball $B_{\hd}(0, \sqrt{2\log(n/\beta)})$. Using a private algorithm $\calA$, we compute a solution to $(k,z)$-clustering $\calC$ of $\pi(P)$, with multiplicative approximation $M$ and additive error $A$. The set of centers $\calC$ induces a \emph{private} partition $\hat{S_1}, \dots, \hat{S_k}$ of $\mathbb{R}^{\hd}$, defined by the Voronoi diagram of $\calC$, and a \emph{private} partition $S_1, \dots, S_k$ of the original space $\mathbb{R}^{d}$, defined as the preimage $\pi^{-1}(S_1, \dots, S_k)$.

The natural way of defining centers in $\R^d$ to lift the partition would be to take the $k$ optimal centers $\mu_z$ for the $(1,z)$-clustering of each $S_i \cap P$. We say that the \emph{cost induced} by the partition $S_1,\dots,S_k$ is $\sum_{i=1}^k \cost(P\cap S_i, \mu_z(P\cap S_i))$. To lift privately a partition, we will use an approximation of each average $\mu_2(S_i \cap P) = \frac{\sum_{p \in P_i} p}{|P_i|}$ by the quantity $\frac{\sumEst_i}{n_i}$ where $(n_i)_{1\leq i \leq k}, (\sumEst_i)_{1\leq i \leq k}$ are two sequences computed \emph{privately} and verifying \Cref{assumption:coating} for the partition $S_1, \dots , S_k$. This is formalized in the next lemma.

\begin{restatable}{lemma}{liftingViaHist}\label{lem:liftingViaHist}
Let $\hd = O(z^4 \cdot \log (k/\beta) \alpha^{-2})$, and let $\pi(P)$ be the projected and rescaled dataset in $\R^{\hd}$. Suppose that we are given a partition $\hat{S_1}, \dots, \hat{S_k}$ of $\R^{\hd}$ that induces a $(M,A)$-approximation of the $(k,z)$-clustering problem on $\pi(P)$.

Let $S_1,\dots,S_k$ be the partition of $\R^d$ that we obtain by taking the preimage $\pi^{-1}$ of $\hat{S_1}, \dots, \hat{S_k}$. Assume that we have two sequences $(n_i)$ and $(\sumEst_i)$ verifying \Cref{assumption:coating} for this partition, and consider the set of centers $S = \lbra \frac{\sumEst_1}{n_1}, \dots, \frac{\sumEst_k}{  n_k}\rbra$.
        The following holds with probability $1-\beta$:

    In the case of $k$-means $(z=2)$, we have $\cost(P, S) \leq (1+\alpha)M \cdot \opt_{k,2}(P) + \polylog n  \cdot A + O(ke)$. 
    For general $(k,z)$-clustering, we have instead 
    $\cost(P, S) \leq 2^z(1+\alpha)M \cdot \opt_{k,z}(P) +  \polylog n \cdot A + O(k e)$.
\end{restatable}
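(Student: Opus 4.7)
The strategy is to decompose
\[ \cost(P,S) \;\le\; \text{(ideal cost)} \;+\; \text{(perturbation)}, \]
where the ideal cost is the $(1,z)$-cost of the induced partition with ideal centers (the centroids $\mu_2(P\cap S_i)$) and the perturbation accounts for the fact that the algorithm uses $\sumEst_i/n_i$ instead. I will bound the ideal cost via the $(M,A)$-approximation in $\R^{\hat d}$ combined with the JL guarantees of $\pi$, and the perturbation directly from \Cref{assumption:coating}.

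\textbf{Step 1 (ideal cost).} The linearity of $\pi$ and the definition $S_i=\pi^{-1}(\hat S_i)$ ensure that $P\cap S_i$ projects bijectively onto $\pi(P)\cap\hat S_i$ and that centroids commute with the projection: $\pi(\mu_2(P\cap S_i))=\mu_2(\pi(P)\cap\hat S_i)$. The JL variant used in~\cite{ghaziTight} preserves the $(k,z)$-clustering cost of any fixed partition-with-centers up to a $(1\pm\alpha)$ factor with probability at least $1-\beta/2$, which is what dictates $\hat d=O(z^4\log(k/\beta)\alpha^{-2})$. Together with the $(M,A)$-guarantee in $\R^{\hat d}$ and JL applied once more to the optimum of $P$ in $\R^d$, this yields $\sum_i\cost(\pi(P)\cap\hat S_i,\mu_z)\le (1+\alpha)M\opt_{k,z}(P)+A$. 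For $z=2$, the centroid commutes with $\pi$ so the $1$-means cost transports directly, giving $\sum_i\cost(P\cap S_i,\mu_2)\le(1+\alpha)^2 M\opt_{k,2}(P)+\polylog(n)\cdot A$. For general $z$, I use $\mu_2$ as a proxy for the true $1$-$z$-center $\mu_z$: combining $(a+b)^z\le 2^{z-1}(a^z+b^z)$ with Jensen applied to $\|\mu_z-\mu_2\|\le\tfrac{1}{N}\sum_p\|\mu_z-p\|$ gives $\cost(T,\mu_2)\le 2^z\opt_{1,z}(T)$ for any $T$, whence $\sum_i\cost(P\cap S_i,\mu_2)\le 2^z(1+\alpha)M\opt_{k,z}(P)+\polylog(n)\cdot A$.

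\textbf{Step 2 (perturbation).} Writing $\mu=\mu_2(P\cap S_i)$, $N_i=|P\cap S_i|$, $\delta_n=n_i-N_i$, $\delta_s=\sumEst_i-\sum_{p\in P\cap S_i}p$, the algebraic identity
\[ \frac{\sumEst_i}{n_i}-\mu \;=\; \frac{\delta_s}{n_i}-\frac{\delta_n}{n_i}\mu \]
combined with $\|\mu\|\le 1$ (since $P\subseteq B_d(0,1)$) and \Cref{assumption:coating} gives $\|\sumEst_i/n_i-\mu\|\le 2e/n_i$ whenever $n_i>0$. For $z=2$, the exact bias-variance identity $\cost(T,c)=\cost(T,\mu_2(T))+|T|\,\|c-\mu_2(T)\|^2$ produces a per-cluster perturbation of $4e^2N_i/n_i^2=O(e)$ when $N_i\ge 2e$, and the trivial bound $N_i\cdot\diam(B_d(0,1))^2$ handles $N_i<2e$ (after safely clipping $\sumEst_i/n_i$ back to the bounding ball when $n_i$ is too small), so the total perturbation is $O(ke)$. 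For general $z$, I employ the \emph{weighted} triangle inequality $(a+b)^z\le(1+\gamma)^{z-1}a^z+(1+1/\gamma)^{z-1}b^z$ with $\gamma=\Theta(\alpha/z)$, so that $(1+\gamma)^{z-1}\le 1+O(\alpha)$: this inflates Step~1's bound by a factor absorbable into $(1+\alpha)$ (crucially avoiding an extra compounding $2^{z-1}$), while the perturbation picks up only a constant $(1+1/\gamma)^{z-1}$ factor depending on $\alpha,z$, keeping the total perturbation at $O(ke)$.

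Summing the ideal cost and perturbation yields the two announced inequalities. The main obstacle will be the careful bookkeeping of the many $(1+\alpha)$ factors coming from the several JL invocations and from the centroid-as-proxy step in the general $z$ case (where the weighted triangle inequality is what prevents the final multiplicative factor from becoming $2^{2z-1}$ instead of $2^z$), together with the degenerate regime $N_i\lesssim e$ where $n_i$ may be zero or negative and $\sumEst_i/n_i$ far outside the input ball.
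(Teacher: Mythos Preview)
Your proposal is correct and follows essentially the same approach as the paper's proof: transport the partition cost via the JL guarantee (\Cref{lem:dimremFull}), invoke the bias--variance identity for $z=2$ and the $2^z$-approximation of $\mu_2$ for $\mu_z$ (your Jensen argument is exactly the paper's \Cref{lem:meanMed}) for general $z$, and control the perturbation from $(n_i,\sumEst_i)$ via the weighted triangle inequality. Your algebraic identity $\sumEst_i/n_i-\mu=(\delta_s-\delta_n\mu)/n_i$ is in fact a slightly cleaner route to the perturbation bound than the paper's $(1-x)^{-1}$ expansion, and your explicit clipping in the small-$n_i$ regime fills a detail the paper glosses over.
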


The previous result for $k$-means is a direct application of the dimension reduction results, but we introduce here the generalization to $(k,z)$-clustering. This relies on the following new lemma:
\begin{restatable}{lemma}{meanMed}\label{lem:meanMed}
    Let $P$ be a multiset of points in $\R^d$ with optimal center $\mu_z$ for $(1,z)$-clustering  and optimal $(1,2)$-clustering solution $\mu = \mu_2$. Then, 
    \[\sum_{p\in P} \|p-\mu\|^z \leq 2^z \sum_{p \in P} \|p-\mu_z\|^z.\]
\end{restatable}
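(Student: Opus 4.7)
The plan is to bound the cost at $\mu$ by the cost at $\mu_z$ using two applications of convexity, plus the triangle inequality. The key leverage is that $\mu = \mu_2$ is the centroid of $P$, so $\mu - \mu_z = \tfrac{1}{|P|}\sum_{p\in P}(p - \mu_z)$, which will let us control $\|\mu - \mu_z\|$ by an average of the quantities $\|p - \mu_z\|$.

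First, I would apply the triangle inequality pointwise:
\[
\|p - \mu\| \;\leq\; \|p - \mu_z\| + \|\mu_z - \mu\|.
\]
Raising to the $z$-th power and using the standard convexity inequality $(a+b)^z \leq 2^{z-1}(a^z + b^z)$ (valid for $z \geq 1$, since $x \mapsto x^z$ is convex), then summing over $p \in P$, yields
\[
\sum_{p\in P} \|p - \mu\|^z \;\leq\; 2^{z-1}\sum_{p\in P}\|p - \mu_z\|^z \,+\, 2^{z-1}\,|P|\cdot\|\mu - \mu_z\|^z.
\]

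Next, I would bound the second term. Since $\mu = \tfrac{1}{|P|}\sum_{p\in P} p$, we have $\mu - \mu_z = \tfrac{1}{|P|}\sum_{p\in P}(p - \mu_z)$, and the triangle inequality (Jensen applied to the norm) gives
\[
\|\mu - \mu_z\| \;\leq\; \frac{1}{|P|}\sum_{p\in P}\|p - \mu_z\|.
\]
Raising to the $z$-th power and applying Jensen's inequality a second time to the convex function $x \mapsto x^z$ gives
\[
\|\mu - \mu_z\|^z \;\leq\; \Bigl(\frac{1}{|P|}\sum_{p\in P}\|p - \mu_z\|\Bigr)^z \;\leq\; \frac{1}{|P|}\sum_{p\in P}\|p - \mu_z\|^z,
\]
so $|P|\cdot\|\mu - \mu_z\|^z \leq \sum_{p\in P}\|p - \mu_z\|^z$. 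Plugging back into the previous display yields
$\sum_{p\in P}\|p - \mu\|^z \leq 2^{z-1}\sum_{p\in P}\|p - \mu_z\|^z + 2^{z-1}\sum_{p\in P}\|p - \mu_z\|^z = 2^z \sum_{p\in P}\|p - \mu_z\|^z$, as required.

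There is no real obstacle here; the argument is a clean double application of convexity. The only subtlety worth flagging is that the argument crucially uses $z \geq 1$ in two places: once for the inequality $(a+b)^z \leq 2^{z-1}(a^z+b^z)$, and once for Jensen on $x \mapsto x^z$. Notably, nothing is used about $\mu_z$ except that it is a point in $\R^d$, so the same inequality would hold if $\mu_z$ were replaced by any other center; the statement is simply phrased in its most useful form for lifting a partition.
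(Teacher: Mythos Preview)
Your proof is correct and follows essentially the same approach as the paper: both bound $\|\mu-\mu_z\|^z$ by $\tfrac{1}{|P|}\sum_p\|p-\mu_z\|^z$ via Jensen, then combine with the triangle inequality and $(a+b)^z\le 2^{z-1}(a^z+b^z)$. The only cosmetic difference is that the paper first projects onto the line through $\mu$ and $\mu_z$ to derive $\|\mu-\mu_z\|\le\tfrac{1}{|P|}\sum_p\|p-\mu_z\|$, whereas you obtain this directly from the triangle inequality on the vector average, which is arguably cleaner.
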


\subsection{Boosting the multiplicative approximation}\label{sec:boostApprox}
Let $w^*$ be the best approximation ratio achievable non-privately for $(k,z)$-clustering.
An observation of \cite{dpHD} allows the conversion of any private algorithm with constant approximation into an algorithm with 
approximation almost $w^*$ while increasing the additive error: any $O(1)$-approximation of $(k', z)$-clustering, with $k' = \alpha^{-O(d)} k \log (n/\alpha)$, is an $\alpha$-approximation for $(k, z)$-clustering. 
If we can compute privately such a solution, one can convert it into a true solution for $k$-means with approximation $w^*(1+\alpha)$, while preserving the additive error (see \Cref{lem:bicriteria}).
This yields the following result:

\begin{lemma}[Theorem 4 in \cite{dpHD}]\label{lem:boostApprox}
Suppose we are given a private algorithm $\calA$ for $(k,z)$-clustering that has multiplicative approximation $M$ and additive error $A(k, d)$. Suppose we can privately compute a sequence $(n_i)$ verifying \Cref{assumption:coating} for the Voronoi diagram of the centers output by $\calA$. And let $w^*$ be the best approximation ratio achievable non-privately for $(k,z)$-clustering. 
    Then, for any $1/4 > \alpha > 0$, there is a private algorithm that computes a solution for $(k,z)$-clustering with cost at most  

    $w^*(1+\alpha) \cdot \opt_{k,z} +  O(A(k', d)+ k'e)$, where $k' = (\alpha/M)^{-O(d)} k \log (nM/\alpha)$. 
\end{lemma}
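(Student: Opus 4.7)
The plan is to combine $\calA$ with a non-private $w^*$-approximation via over-clustering, exploiting the observation of \cite{dpHD} that an $O(1)$-approximation for $(k',z)$-clustering is already an $\alpha$-approximation for $(k,z)$-clustering when $k' = (\alpha/M)^{-O(d)} k \log(Mn/\alpha)$. First I would invoke $\calA$ with parameter $k'$ in place of $k$, producing a set $C' = \{c_1', \ldots, c_{k'}'\}$ of private centers with $\cost(P, C') \leq M \cdot \opt_{k',z}(P) + A(k', d)$, which by the bicriteria observation is at most $\alpha \cdot \opt_{k,z}(P) + A(k', d)$.

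Next, I would use the privately computed counts $(n_i)$ from \Cref{assumption:coating}, applied to the Voronoi partition $V_1, \ldots, V_{k'}$ of $C'$, so that $|n_i - |P \cap V_i|| \leq e$ for every $i$. I would then run a non-private $w^*$-approximation algorithm for weighted $(k,z)$-clustering on the instance where $c_i'$ carries weight $n_i$, producing a set $S$ of $k$ centers. Because $C'$ and the $(n_i)$ are both private and the non-private step is deterministic post-processing of them, $S$ inherits the privacy guarantee.

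Finally, I would bound $\cost(P, S)$ by chaining two applications of the weak triangle inequality $(a+b)^z \leq (1+\gamma)^{z-1} a^z + (1+1/\gamma)^{z-1} b^z$. For each $p \in P \cap V_i$, this relates $\dist(p, S)^z$ to $\dist(p, c_i')^z$ and $\dist(c_i', S)^z$; summing over $p$ yields $\cost(P, S) \leq (1+\gamma)^{z-1} \sum_i |P \cap V_i| \cdot \dist(c_i', S)^z + (1+1/\gamma)^{z-1} \cost(P, C')$. A symmetric inequality bounds $\sum_i |P \cap V_i| \cdot \dist(c_i', C^*)^z$ by $(1+\gamma)^{z-1} \opt_{k,z}(P) + (1+1/\gamma)^{z-1} \cost(P, C')$, where $C^*$ is an optimal $k$-center set. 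Replacing the true weights $|P \cap V_i|$ by $n_i$ costs at most $O(k' e)$ since all distances are $O(1)$ under the diameter-$1$ assumption. Combining these with the $w^*$-guarantee $\sum_i n_i \cdot \dist(c_i', S)^z \leq w^* \sum_i n_i \cdot \dist(c_i', C^*)^z$ and the first-step bound on $\cost(P, C')$ gives $\cost(P, S) \leq w^* (1+\gamma)^{2(z-1)}(1+O(\alpha)) \cdot \opt_{k,z}(P) + O(A(k', d) + k' e)$. Choosing $\gamma$ as a suitable function of $\alpha$ and $z$, and rescaling $\alpha$, so that $(1+\gamma)^{2(z-1)}(1+O(\alpha)) \leq 1+\alpha$, yields the stated guarantee.

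The main obstacle is keeping the multiplicative blow-up at precisely $w^*(1+\alpha)$: both the bicriteria step and the weighted reduction from $k'$ to $k$ centers each contribute a $(1+O(\gamma))^{z-1}$ factor, so $\gamma$ must be tuned carefully and the resulting $(1+1/\gamma)^{z-1}$ blow-up must be absorbed into the additive error. This absorption is possible because the diameter-$1$ assumption caps the penalty of miscounting each cluster weight at $O(e)$ per cluster, giving a total weight-estimation error of $O(k' e)$ that fits inside the target additive term.
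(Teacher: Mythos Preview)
Your proposal is correct and follows essentially the same approach as the paper: the paper proves the general reduction as \Cref{lem:bicriteria} (two applications of the generalized triangle inequality \Cref{lem:weaktri}, replacing true cluster sizes by the $n_i$ at a cost of $O(k'e)$, then using the $w^*$-approximation on the weighted proxy), and then derives \Cref{lem:boostApprox} from it by invoking the over-clustering observation of \cite{dpHD} exactly as you do. Your parameter-tuning discussion in the last paragraph matches the paper's remark after \Cref{lem:bicriteria} about taking the bicriteria slack to be roughly $(\alpha'/z)^z$ so that the $(1+1/\gamma)^{z-1}$ blow-up is absorbed.
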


This can be combined with dimension reduction from the previous section, i.e.,~we apply this lemma in dimension $\log(k)/\alpha^2$, resulting in $k' = k^{O_\alpha(1)} \log (n/\alpha)$. 

\subsection{Boosting the success probability}
Assume we are given a private algorithm $\calA$ that computes with probability $2/3$ (or any constant $>1/2$) a $(M, A)$-approximation to $(k,z)$-clustering. To increase the success probability to $1-\beta$, the standard technique is to run $\log(1/\beta)$ copies of $\calA$ in parallel, and select (privately) the one with the best output. 
This can be easily implemented using the exponential mechanism, which requires computing the cost of each solution. While this is possible in many settings, in some settings (e.g., Local DP or continual observation), it is not obvious how to do so without ``losing'' too much privacy. 

In the particular case of $k$-means, we show how this can be done using mere histogram queries. This relies on the following new lemma: the $k$-means cost of a cluster can be expressed as a function of the points of a cluster, regardless of the location of its center.
 
\begin{restatable}{lemma}{costVariance}\label{lem:costVariance} 
    For any multiset $E$,
    \vspace{-1.5em}
    \begin{gather*}
        \cost(E, \mu(E)) =\sum_{p\in E} \|p\|_2^2 - \frac{\lnor \sum_{p \in E} p\rnor_2^2}{|E|}.
    \end{gather*}
\end{restatable}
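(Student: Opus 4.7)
The plan is to prove this identity by a direct computation expanding the squared norm $\|p - \mu(E)\|_2^2$ using the bilinearity of the inner product, then exploiting the defining property of the centroid $\mu(E) = \frac{1}{|E|}\sum_{p \in E} p$. This is essentially the classical ``variance decomposition'' identity, and the statement itself is not surprising; the only point of the lemma is to record it in the exact form needed elsewhere.

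Concretely, I would set $n := |E|$ and $\mu := \mu(E) = \frac{1}{n}\sum_{p \in E} p$, and then expand
\[
\cost(E, \mu) \;=\; \sum_{p \in E} \|p - \mu\|_2^2 \;=\; \sum_{p \in E} \|p\|_2^2 \;-\; 2 \Big\langle \sum_{p \in E} p,\; \mu \Big\rangle \;+\; n\,\|\mu\|_2^2.
\]
Substituting $\sum_{p \in E} p = n\mu$ in the middle term gives $-2n\|\mu\|_2^2$, which combines with the last term to leave $-n\|\mu\|_2^2$. Finally I substitute $\|\mu\|_2^2 = \|\sum_{p \in E} p\|_2^2 / n^2$ to obtain $-\frac{1}{n}\|\sum_{p \in E} p\|_2^2$, which is precisely the claimed expression.

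There is no real obstacle here: no concentration, no approximation, no privacy content. The lemma is a purely algebraic identity, and the only care needed is to keep track of the cross-term cancellation and to use $\mu$ in both its ``average'' form (for the cross term) and its ``ratio of sums'' form (to match the statement). The reason the lemma is useful in context, and worth singling out, is that the right-hand side is linear in simple sufficient statistics of $E$ (namely $|E|$, $\sum_{p\in E} p$, and $\sum_{p\in E}\|p\|_2^2$), which are exactly the quantities that \Cref{assumption:coating} gives access to privately; this will later allow the $k$-means cost of a cluster to be estimated without knowing its center, enabling the success-probability boosting described in the surrounding subsection.
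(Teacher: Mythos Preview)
Your proof is correct and follows essentially the same approach as the paper: a direct algebraic expansion of $\|p-\mu\|_2^2$ followed by simplification using $\sum_{p\in E} p = |E|\mu$. The only cosmetic difference is that the paper carries out the expansion coordinate-by-coordinate rather than via the inner product, but the content is identical.
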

    \vspace{-1.5em}
Together with the algorithms of the second point of \Cref{assumption:coating}, this yields the following corollary:

\begin{restatable}{corollary}{boostProba}\label{cor:boostProba}
Let $S_1,\dots,S_k$ be a partition of $B_d(0,\Lambda)$, and suppose we can privately compute the three sequences $(n_i)_{1 \leq i \leq k}$, $(\sumEst_i)_{1 \leq i \leq k}$, $(\sumNorm_i)_{1 \leq i \leq k}$ verifying the \Cref{assumption:coating}. Then one can estimate the $k$-means cost induced by the partition up to an additive error $O\lpar k e \rpar$.

    Therefore, given an $(\eps, \delta)$-DP algorithm $\calA$ with multiplicative approximation $M(\eps, \delta)$ and additive error $A(\eps, \delta)$ that succeeds with probability $2/3$, there exists a private algorithm that succeeds with probability $1-\beta$ with multiplicative approximation $M(\eps / \log(1/\beta), \delta/ \log(1/\beta))$ and additive error 
    $A(\eps/ \log(1/\beta), \delta/ \log(1/\beta)) + O(ke)$.
\end{restatable}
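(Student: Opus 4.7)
The corollary combines a cost-estimation formula with its use for private boosting. For the first part, my plan is to apply \Cref{lem:costVariance} cluster-wise to rewrite
\[
\cost(P,S)\;=\;\sum_{i=1}^{k}\!\left(\sum_{p\in P\cap S_i}\!\|p\|_2^2 \;-\; \frac{\|\sum_{p\in P\cap S_i}p\|_2^2}{|P\cap S_i|}\right),
\]
which depends only on the three per-cluster quantities that $\sumNorm_i$, $\sumEst_i$, and $n_i$ approximate up to additive error $e$. I would then analyse the estimator $\widetilde{\cost}:=\sum_{i\,:\,n_i>\tau}\bigl(\sumNorm_i-\|\sumEst_i\|_2^2/n_i\bigr)$ with clipping threshold $\tau=\Theta(e)$, and bound its per-cluster error by $O(e)$.

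The main step is this per-cluster bound. Clusters with $n_i\le\tau$ have $|P\cap S_i|\le n_i+e=O(e)$; since $\Lambda=1$, every point contributes at most $4$ to the true cost, so dropping such clusters from the sum loses only $O(e)$ per cluster. For the remaining clusters, I set $a=\|\sumEst_i\|_2$, $b=\|\sum_{p\in P\cap S_i}p\|_2$, $c=n_i$, $d=|P\cap S_i|$ (so $|a-b|,|c-d|\le e$) and use
\[
\frac{a^2}{c}-\frac{b^2}{d}\;=\;\frac{(a-b)(a+b)}{c}+\frac{b^2(d-c)}{cd},
\]
combined with $a+b\le 2b+e$ and $b\le d$ (from $\|p\|_2\le 1$), so that $b^2/d\le d$. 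With $\tau$ chosen so that $c,d=\Omega(e)$ and $d/c=O(1)$, both terms are $O(e)$. Combining with the $e$-error of $\sumNorm_i$ and summing over the $k$ clusters gives $|\widetilde{\cost}-\cost(P,S)|=O(ke)$.

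For the probability boost, I would run $T=\lceil\log_3(1/\beta)\rceil$ independent copies of $\calA$ in parallel, each with privacy budget $(\eps/T,\delta/T)$; basic composition yields overall $(\eps,\delta)$-DP. All $T$ runs fail with probability at most $(1/3)^T\le\beta$, so with probability at least $1-\beta$ some run $j^{\star}$ returns a solution with cost at most $M(\eps/T,\delta/T)\cdot\opt_{k,2}+A(\eps/T,\delta/T)$. For each of the $T$ candidates I evaluate $\widetilde{\cost}$ on the Voronoi partition of its centers using the hypothesised private routine for the three sequences (whose small additional budget is folded into the per-round $(\eps/T,\delta/T)$ allocation up to constants) and output the candidate with the smallest estimate. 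The selection is post-processing of private estimates, hence free. Because each estimate is within $O(ke)$ of the truth, the selected solution's true cost exceeds that of $j^{\star}$ by at most $2\cdot O(ke)$, yielding the claimed additive error.

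The main delicate point is the per-cluster error bound for $\widetilde{\cost}$: the ratio $\|\sumEst_i\|_2^2/n_i$ is unstable near $n_i=0$, and the clipping argument—supported by the observation that nearly-empty clusters contribute negligibly to the true cost—is the clean way around it. Everything else (basic composition, parallel amplification, and post-processing selection) is standard DP machinery.
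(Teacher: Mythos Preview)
Your proposal is correct and follows essentially the same approach as the paper: apply \Cref{lem:costVariance} per cluster, threshold out clusters with $n_i=O(e)$ (whose true cost is $O(e)$ since $\Lambda=1$), control the error in the remaining ratio term by an algebraic splitting, and then boost by running $\Theta(\log(1/\beta))$ independent copies under basic composition and selecting the one with smallest private cost estimate. Your decomposition $\frac{a^2}{c}-\frac{b^2}{d}=\frac{(a-b)(a+b)}{c}+\frac{b^2(d-c)}{cd}$ is a slightly cleaner variant of the paper's two-term triangle-inequality split, but the substance is identical.
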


\section{The Greedy Algorithm}\label{sec:MPalgo}
In this section, we present the original non-private algorithm by Mettu and Plaxton~\cite{MettuP00}. This algorithm takes as input a multiset $P$ of points of $X$, and outputs a sequence $\lpar c_1, \dots, c_n \rpar$ in such a way that, for all $k \leq n$, the set $C_k = \{c_1, \dots, c_k\}$ approximates the optimal $(k,z)$-clustering cost.
\footnote{This particular variant, initially referred to as ``online" by Mettu and Plaxton, has undergone a name change in more recent literature to ``incremental." This change was made to avoid any confusion with other, more common meanings of ``online."}
\begin{definition}
Given a ball $B = B(x,r):= \{y\in X, \dist(x,y) \leq r\}$, the \emph{value} of $B$ is $\val(B):= \sum_{p\in B\cap P} (r-\dist(x,p))^z$.

A \emph{child} of a ball $B(x,r)$ is any ball $B(y,r/2)$, where $y\in P$ and $\dist(x,y) \leq 10 r$.

For any point $x\in X$ and a set of centers $C$, let $\isolated(x,C)$ denote the ball $B(x,\dist(x,C)/100)$ if $C$ is not empty; and $B(x,\max_{y\in P}d(x,y))$ if $C = \emptyset$. Intuitively, this corresponds to very large ball centered at $x$ that is far away from any center of $C$.\footnote{In those definitions, we chose the scalar constants $2, 10, 100$ for convenience: the whole analysis can be parameterized more carefully in order to optimize the approximation ratio. We opted for simplicity.}
\end{definition}
The algorithm is a simple greedy procedure, that starts with $C = \emptyset$ and repeats $n$ times the following steps:  start with the ball $\isolated(x, C)$ with maximum value over all $x \in P$ (with ties broken arbitrarily), and as long as this ball has more than one child (i.e. as long that there are at least two distinct points of the input $P$ "close" to the ball) replace it with the child with maximum value. Let $x$ be the center of the last chosen ball: add $x$ to $C$, and repeat. We give the pseudo-code of the procedure in the appendix (see \cref{alg:mp1}). We call $C_k$ the solution $C$ after $k$ repetitions of the loop. 

\begin{theorem}{\cite{MettuP00}}\label{thm:mp1}
    For any fixed $z$ and for all $k$, the cost of $C_k$ is a $O(1)$-approximation of the optimal $(k,z)$-clustering cost.  
\end{theorem}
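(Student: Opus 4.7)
My plan is to follow the classical template of Mettu and Plaxton, adapted for general $(k,z)$-clustering. The key object is a \emph{characteristic radius} $\rho_p$ defined for each $p \in P$ via an implicit equation on the value function --- essentially, $\rho_p$ is chosen so that $\val(B(p, \rho_p))$ equals a canonical scale such as $\rho_p^z \cdot |P \cap B(p,\rho_p)|$ or $\rho_p^z$ itself. Intuitively, $\rho_p$ captures the ``typical'' distance from $p$ to its nearby neighbors in the input. The first structural lemma I would establish is that $\sum_{p} \rho_p^z$ is at most a constant multiple of $\opt_{k,z}$, which would follow by comparing each $\rho_p$ to the radius of the optimal cluster containing $p$ and exploiting convexity of the $z$-th power. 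A companion lemma should say that large value of a ball $B(y,r)$ certifies that $B$ contains a substantial optimal mass at scale $\lesssim r$ --- this direction is needed to pull cost bounds out of value bounds.

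Next, I would analyze a single iteration of the greedy loop with current center set $C$. Let $\mathcal{O} = \{o_1, \dots, o_k\}$ be an optimal set of centers, with clusters $P_1, \dots, P_k$. A counting argument should show that some $o^*$ is such that $\isolated(o^*, C)$ has value bounded from below by a constant fraction of the residual $(k,z)$-cost divided by $k$. Since the greedy step picks the maximum-value isolated ball, the selected ball $\isolated(x, C)$ has value at least this large. The child-refinement step --- halving the radius while staying within $10r$ of the previous center --- then zooms in on a point $c_{i+1}$ located within $O(r)$ of a dense concentration of input points; the isolation scalar $1/100$ ensures that $c_{i+1}$ is genuinely far from $C$, so that adding it to the solution truly removes mass from the residual and the marginal improvement is comparable to the ball's value.

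The final step is a charging argument that proceeds simultaneously for every prefix $C_k$. Each point $p \in P$ is charged to the first center $c_i \in C_k$ lying within $O(\rho_p)$ of $p$; the resulting contribution to $\cost(P, C_k)$ is $O(\rho_p^z)$, and summing gives $O\bigl(\sum_p \rho_p^z\bigr) = O(\opt_{k,z})$ by the first structural lemma. Uncharged points must have small $\rho_p$ yet be far from every $c_i$; one can show by the value-maximality of each greedy choice that, had such points contributed a large aggregate cost, some isolated ball would have had value larger than the one actually selected, a contradiction. Hence their aggregate cost is absorbed by the same bound, possibly with an extra constant factor depending on $z$.

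The main obstacle is the mismatch between the \emph{value} $\sum_{p}(r-\dist(x,p))^z$ and the actual \emph{clustering cost} $\sum_p \dist(x,p)^z$. These are only comparable under assumptions on the density profile inside the ball, and relating them through the child refinement, where the radius can halve many times before stabilizing, is what forces the precise choice of the numerical constants $2$, $10$, and $100$ in the algorithm's definitions. A secondary difficulty is the $z$-dependence: for $z \neq 1$, triangle-inequality manipulations on values versus costs produce $2^z$-style factors, and the characteristic-radius lemma requires a careful convexity argument rather than the direct linearity argument that suffices in the original $k$-median case.
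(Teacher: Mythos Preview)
The paper does not prove \Cref{thm:mp1} itself; it cites it from \cite{MettuP00} and instead proves the closely related \Cref{thm:mpWithError} for the modified algorithm, stating explicitly that the argument follows the original Mettu--Plaxton proof. That proof (in \Cref{sec:appMP}) is \emph{not} a characteristic-radius/per-point charging argument. It works per \emph{optimal center}: fix an optimal solution $\Gamma$, split it into $\Gamma_0$ (centers near some $c_i$) and $\Gamma_1$ (centers far from all $c_i$); for each $\gamma\in\Gamma_1$ define $B_\gamma$ as the largest available ball close to $\gamma$, split $P_\gamma$ into $In(P_\gamma)=P_\gamma\cap B_\gamma$ and $Out(P_\gamma)$, and bound each part in terms of $\cost(\cdot,\Gamma)$ plus $\val(B_\gamma)$ (\Cref{lem:inAndout}). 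The heart of the proof is then bounding $\sum_\gamma \val(B_\gamma)$ via a one-to-one matching $\phi$ from $\Gamma_1$ to \emph{disjoint, uncovered} balls inside the recorded sequences $\sigma_i$, each of value at least $\val(B_\gamma)$; this relies on a delicate pruning lemma (\Cref{lem:pruning}) showing that the $\sigma_i$ can be truncated to be pairwise disjoint while keeping their head values high. No point-wise $\rho_p$ appears anywhere.

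Your proposal takes a genuinely different route, and it has a real gap at the charging step. You assert that each $p$ can be charged to the first $c_i$ within $O(\rho_p)$, and that uncharged points are handled by ``had such points contributed a large aggregate cost, some isolated ball would have had value larger than the one actually selected, a contradiction.'' But that sentence is precisely the content of the theorem. Value-maximality of the greedy choice is a \emph{global} statement about the ball chosen; it does not by itself pin any center near a specific $p$, and a large collection of points with small $\rho_p$ spread thinly over the space can contribute large total cost without any single isolated ball having large value. Moreover, the child-refinement phase can halve the radius $\Theta(\log n)$ times, so the center $c_{i+1}$ can land far (relative to the final radius) from where the initial isolated ball was; tracking value through this descent is exactly what the paper's pruning lemma does and what your sketch omits. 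Finally, your definition of $\rho_p$ is left ambiguous (two non-equivalent candidates are offered), and the claimed bound $\sum_p \rho_p^z \le O(\opt_{k,z})$ --- while reminiscent of the Mettu--Plaxton \emph{facility location} analysis --- does not obviously hold here, where there is no facility cost to normalize against. If you want to push this approach through, you would need an explicit lemma of the form ``if $c_{i+1}$ is produced by the recursive greedy from an isolated ball of value $V$, then $c_{i+1}$ lies within $O(\rho_p)$ of $\Omega(V)$ worth of points $p$,'' and that is essentially as hard as the theorem.
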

This algorithm is not private for two reasons: the balls in the sequences $\sigma_i$ are centered at points of the input, and the value is computed non-privately. To ensure privacy, we introduce the following key modifications to \cref{alg:mp1}, resulting in Algorithm~\ref{alg:mp}. 

\begin{restatable}{algorithm}{MetPla}
\caption{$\mettuPModified(P,\theta)$}
\label{alg:mp}
\begin{algorithmic}[1]
\STATE{$\calA = \calB$, ($\calA$ is the set of \emph{available} balls)}
\STATE{Let $C_0 = \emptyset$}
\FOR{$i$ from $0$ to $n-1$}
\STATE{Let $\sigma_i$ denote the singleton sequence $(B)$ where $B\in \calA$ has a maximum value up to $\theta$ among the available balls.}
\WHILE{The last element of $\sigma_i$ has level less than $\lceil \log n \rceil$}
\STATE{Select a child with maximum value up to $\theta$ of the last element of $\sigma_i$, and append it to $\sigma_i$.}
\ENDWHILE
\STATE{$c_{i+1}$ is the center of the last ball of $\sigma_i$}
\STATE{$C_{i+1} = C_i \cup \{c_{i+1}\}$}
\STATE{Remove from $\calA$ the balls forbidden by $c_{i+1}$}
\ENDFOR
\end{algorithmic}
\end{restatable}

\paragraph{Modification 1}: We fix a set of balls $\calB$ independent from the data: $\calB$ consists essentially of balls  centered at the points of a fine-grained discretization of the space based on \emph{nets} instead of input data. An $\eta$-net of $X$ is a subset $\calN \subset X$ satisfying the two following properties: \emph{Packing:} for all distinct $x,x'\in \calN$ we have $\dist(x,x')>\eta$, and  \emph{Covering:} for all $x\in X$ we have $\dist(x,\calN)\leq \eta$ (see \Cref{app:net}).

For any $i\in \lbra1,\dots,\lceil \log n\rceil\rbra$, we let $\calN_i$ be a $2^{-i}/2$-net of $B(0,1)$. An element of $\calN_i$ is a \emph{net point} of \emph{level} $i$. We use the notation $\level(x)$ to denote the level of a net point $x$.\footnote{In order to uniquely define the level of a net point, we make the assumption that all $\calN_i$ are disjoint.}
We further define  $\calB_i$ to be the set of balls of radius $2^{-i}$ centered in the points of $\calN_i$, and $\calB = \calB_1 \cup \dots \cup \calB_{\lceil \log n \rceil}$. 

We change the while condition to stop when we reach a ball of the last level $\lceil \log n \rceil$: at that level, the region considered is so small that any point in it is a good center.

To replace the (non-private) concept of isolated balls $\isolated(x,C)$, we introduce the following new definitions: We say that a ball $B(x,2^{-i}) \in \calB_i$ is \emph{forbidden} by a center $c \in B(0, 1)$ if $\dist(x, c) \leq  100 \cdot 2^{-i}$. A ball is \emph{available} when it is not forbidden. Note that the radius under consideration is now only dependent on the level $i$ and no longer on the distance of $x$ to the centers.
This is a new point of view on the algorithm of \cite{MettuP00}, that we believe sheds a new, simplified light on the algorithm.
\paragraph{Modification 2}: The algorithm does not have access to the actual values of the balls. Instead of selecting the ball with the maximum value, 
the modified algorithm chooses any ball whose value is sufficiently close to the maximum value, with an additive error of $\theta$. Depending on the privacy model, this selection will be performed using either the standard exponential mechanism (\Cref{lem:exp_mechanism}), or using private noisy values instead of the actual values.

\begin{definition}
Let $\theta>0$.
    For any set of balls $S$, we say that a ball $B \in S$ has a maximum value in $S$ \emph{up to} $\theta$, if $\val(B) \geq \max_{A \in S} \val(A) - \theta$.
\end{definition}

\begin{restatable}{theorem}{mp}\label{thm:mpWithError}
     For any fixed $z>0$, for all integer $k>0$ and for all $\theta \geq 1$, the centers $C_k$ produced by \Cref{alg:mp} have cost at most $O(1) \cdot \opt_{k,z} + O(k\theta)$.
\end{restatable}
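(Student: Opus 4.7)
The plan is to adapt the charging argument Mettu and Plaxton use to prove \Cref{thm:mp1}, tracking how each of the two substantive modifications perturbs their guarantee: replacing data-centered balls by the fixed net family $\calB$ preserves the $O(1)$ approximation up to constants, and allowing $\theta$-slack at every max selection contributes the advertised additive $O(k\theta)$.

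For Modification~1, the key geometric fact is that for every input point $x$ and every scale $2^{-i}$ the net $\calN_i$ contains some $\tilde x$ with $\dist(x,\tilde x) \leq 2^{-i}/2$, so the net ball $B(\tilde x, 2^{-i}) \in \calB_i$ and any MP ball $B(x,r)$ with $r \in [2^{-i-1}, 2^{-i}]$ lie within constant Hausdorff distance and hence have values $\val(\cdot)$ comparable up to universal constants. Under this correspondence, the level-based forbidden condition $\dist(x,c) \leq 100\cdot 2^{-i}$ is exactly a dyadic discretization of MP's isolation condition $r \leq \dist(x,C)/100$. The early-stopping rule at level $\lceil\log n\rceil$ costs at most $n \cdot O(1/n)^z = O(1)$ additional cost (since each leaf ball has diameter $2/n$ and the ambient diameter is $1$). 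These three changes together allow MP's analysis to go through verbatim against $\calB$ and yield an $O(1)$-approximation in the $\theta=0$ regime.

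The core new difficulty is Modification~2, and here I would use a hybrid argument across the $k$ outer iterations. The key observation is that MP's charging bounds $\cost(C_k)$ by a sum of values of the \emph{root} balls of each outer iteration (the ball chosen on line~4 of \Cref{alg:mp}); the drill-down only needs to guarantee that $c_{i+1}$ lies inside that root region, which is all MP's analysis requires, since at the finest level every remaining input point is within $O(1/n)$ of $c_{i+1}$. Therefore each outer iteration suffers at most $\theta$ of charging loss at the root, for a cumulative $O(k\theta)$ additive overhead; any drill-down slack merely shifts $c_{i+1}$ within a ball of radius $1/n$ and is absorbed into the $O(1)$ early-stopping term. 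The main obstacle I anticipate is justifying the claim that MP's charging is governed by the root value rather than the leaf value; if one is forced to charge against the leaf instead, one falls back on the fact that a net ball has only a constant number of children whose mass partitions the parent's value up to constants, so each drill-down step reduces the achievable value by only a constant factor plus $\theta$, giving a geometrically convergent $O(\theta)$ total contribution per iteration and preserving the $O(k\theta)$ bound.
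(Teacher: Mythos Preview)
Your plan has a genuine gap: it rests on a mischaracterization of Mettu--Plaxton's charging scheme. The cost of $C_k$ is \emph{not} bounded by the sum of the root-ball values $\val(\sigma_i^1)$. In the actual argument one first shows (via \Cref{lem:inAndout} and \Cref{lem:redvalue}) that
\[
\cost(P,C_k) \;\le\; O(1)\cdot\cost(P,\Gamma) + O(1) + O(1)\sum_{\gamma\in\Gamma_1}\val(B_\gamma),
\]
where each $B_\gamma$ is the largest ball near the optimal center $\gamma$ that remains \emph{available at the end}. To bound $\sum_\gamma \val(B_\gamma)$ one must exhibit, for each $\gamma$, a ball $\phi(\gamma)$ in some algorithmic sequence that is (i) not covered by $\Gamma$, (ii) disjoint from all other $\phi(\gamma')$, and (iii) has $\val(\phi(\gamma)) \ge \val(B_\gamma)-\theta$. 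These matched balls are typically \emph{not} the roots $\sigma_i^1$; they can sit at arbitrary depth in the drill-down, and establishing (iii) at that depth is exactly where the work is. Your assertion that ``the drill-down only needs to guarantee that $c_{i+1}$ lies inside that root region'' would, if true, make the drill-down superfluous---but picking an arbitrary point of the root ball does not give an $O(1)$-approximation even when $\theta=0$.

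Your fallback also fails. Children of a net ball do not partition the parent's value (they overlap, and a single child can have value zero while the parent is large), so there is no constant $c<1$ with $\val(\sigma_i^{j+1}) \ge c\cdot\val(\sigma_i^j)-\theta$ in any useful sense; and even if there were, iterating $\lceil\log n\rceil$ times yields a factor $c^{\log n}$ that destroys the multiplicative guarantee. The paper avoids accumulating $\theta$ over the drill-down by a different mechanism (\Cref{lem:pruningOneStep}): whenever two pruned sequences collide, one shows directly that a specific child of $\sigma_i^l$ \emph{contains} $\sigma_{i'}^1$ and hence has value at least $\val(\sigma_{i'}^1)$ exactly, so the algorithm's $\theta$-approximate child choice gives $\val(\sigma_i^{l+1}) \ge \val(\sigma_{i'}^1)-\theta \ge \mval-\theta$. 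The comparison is reset to a fresh root at each pruning step rather than chained through the previous value, and that is why only a single $\theta$ appears per $\gamma$ rather than $\log n$ of them.
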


Thus \Cref{alg:mp} and \Cref{thm:mpWithError} reduces private $(k,z)$-clustering to the problem of privately maintaining the ball with maximum value. We show how to solve the latter problem in the next section.

\section{Near Optimal Multiplicative Approximation from Histograms}
\label{sec:approx}
The combination of results from \Cref{sec:MPalgo} and \Cref{sec:coating} shows that, to compute a solution to $(k,z)$-clustering privately (in any privacy model), one merely needs to compute privately the values of each ball (for \Cref{thm:mpWithError}), the size of each cluster, and $\sum_{p\in P_i} p$ (for \Cref{assumption:coating}). 

All this can be done through \textit{generalized bucketized vector summation} (which we abbreviate to \emph{generalized summation})  which is a generalization of histograms: given sets, standard histograms  estimate the size of each set, while generalized summation  allows for the summation over more complex functions of the sets' elements: 

\begin{restatable}{definition}{bucketSum}[Generalized bucketized vector summation]\label{def:bucketSum}
    Let $S_1, ..., S_m \subset X$ be fixed subsets of the universe $X$, and for $i= 1, \dots, m$, let $f_i : X \rightarrow B_D(0,1)$ be a function, where $B_D(0,1)$ is the $D$-dimensional unit ball.
    Given a set $P \subseteq X$,
    a generalized bucketized vector summation algorithm computes an $m$-dimensional vector $v$, whose $i$-th entry $v_i \in \mathbb{R}^D$ for $i=1,...,m$,  estimates $\sum_{p \in P\cap S_i} f_i(p)$. 
    It has additive error $\eta$ if 
    $\eta \ge \max_{i \in [1,m]} \lnor v_i - \sum_{p \in P\cap S_i} f_i(p)\rnor_2.$ 

    One important parameter for the performance of such an algorithm is the maximum frequency $b$ such that each item $x \in X$ appears in at most $b$ sets: $b = \max_{x} \left| \lbra i: x \in S_i \rbra \right|$.  
\end{restatable}

For example, generalized summation for $f_i(p) = 1$ estimates the number of input points $|S_i \cap P|$ in each $S_i$ ; in that case, the summation problem can be solved by a histogram algorithm, where each column/item corresponds to a set $S_i$. 
We show in \Cref{app:GenHist} how to turn an algorithm for private histogram into one for private generalized summation.
As an illustration, we prove in \Cref{lem:histoContinual} that there exists a private algorithm for generalized summation under continual observation, with additive error roughly $\eta = b \sqrt{D} \log(m DT)^{3/2} \log (b \log T)/ \eps$. \cite{ChangG0M21} shows an equivalent result for Local and Shuffle DP.

\paragraph{Using generalized summation for clustering.}To privately estimate the value of each ball, we apply a generalized summation algorithm with the sets being the balls, and for the $j$-th ball $B(x_j, r_j)$, $f_j(e) = r_j - \dist(x_j, e)$ when $e \in B(x_j, r_j)$, $f_j(e) = 0$ otherwise. 
If the private generalized summation has additive error $\eta$, then \Cref{thm:mpWithError} implies that our algorithm is differentially private with multiplicative approximation $O(1)$ and additive error $k \eta$. 
Thus, we only need to show the existence of a generalized summation algorithm in the different privacy models, and bound the additive error $\eta$.

We additionally \emph{boost the multiplicative approximation and success probability} using the techniques from \Cref{sec:coating}.
This requires estimating the size of each cluster (see \Cref{assumption:coating}). While this is very similar to a generalized summation problem, there is a key difficulty: \textit{the clusters are not fixed in advance}, as opposed to the sets $S_1, ..., S_m$! Therefore, it is not possible to directly apply a generalized summation algorithm for the cluster size estimation.

To resolve this issue, we introduce in the next subsection a decomposition of the space that allows the expression of each cluster as a union of \emph{pre-determined} sets. Our approach is then to apply private generalized summation to these sets, to estimate the size of each cluster (and all other quantities required for the results in \Cref{sec:coating}).

\subsection{Structured clusters}\label{sec:structuredCluster}
We show in this section how to transform the clusters into sets with enough structure, to be able to estimate their size using the generalized summation algorithm.

\begin{restatable}{lemma}{structClusters}[See formal statement in \Cref{lem:structClustersFormal}]
    \label{lem:structClusters}
Fix $\alpha >0$. There exists a family of sets $\calG = \lbra G_1,..., G_m\rbra$, independent of $P$, efficiently computable, with $m = n^{O(d)}$, and with the following properties. 
(1) Any point from $B_d(0,1)$ is part of at most $O(\log n)$ sets $G_i$. 
(2) Any possible cluster can be transformed to consist of the union of at most $k\cdot \alpha^{-O(d)} \log(n)$ sets $G_i$. (3)
This transformation preserves the cost up to an additive error of $\alpha\cost(P, \calC) + \frac{9}{\alpha}$. 
\end{restatable}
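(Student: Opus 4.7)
The plan is to define $\calG$ directly from the hierarchy of nets $\calN_0,\calN_1,\ldots,\calN_{\lceil\log n\rceil}$ already used in \Cref{sec:MPalgo}: for each level $i$ and each net point $y\in\calN_i$, include the ball $B(y,2^{-i})$ in $\calG$ (possibly with a small constant factor on the radius, tuned in the full proof). Since $|\calN_i|\le (C\cdot 2^{i})^d$ by the covering/packing properties of a $2^{-i}/2$-net of $B_d(0,1)$, summing over the $\lceil \log n\rceil +1$ scales gives $m=\sum_{i}|\calN_i|=n^{O(d)}$. The sets are efficiently computable since the nets themselves are.

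For property (1), I would use a standard packing argument. Fix $x\in B_d(0,1)$ and a level $i$; the centers $y\in\calN_i$ with $x\in B(y,2^{-i})$ are exactly those in $B(x,2^{-i})$, and since $\calN_i$ is $2^{-i}/2$-separated, a volume/packing bound yields at most $O(4^d)$ such $y$. Summing across the $O(\log n)$ levels gives the claimed $O(\log n)$ bound.

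For properties (2) and (3), I would use scale-adaptive rounding. Given a reference clustering $\calC$, for each input point $p$ with assigned center $c\in\calC$, pick the unique level $i(p)$ with $2^{-i(p)}\approx\alpha\cdot\dist(p,c)/C_0$ for a suitable constant $C_0$, capped at $i(p)=\lceil\log n\rceil$ when $\alpha\dist(p,c)<1/n$. Snap $p$ to its nearest net point $y_p\in\calN_{i(p)}$ and declare the transformed cluster of $c$ to be the union of the balls $B(y_p,2^{-i(p)})$ over all $p$ originally assigned to $c$. By the triangle inequality, $\dist(y_p,c)\le\dist(p,c)+2^{-i(p)}/2\le(1+\alpha/(2C_0))\dist(p,c)$, so the per-point cost changes by at most $\alpha\dist(p,c)^z$ after tuning $C_0$, and summing yields the multiplicative error $\alpha\cost(P,\calC)$; the capped regime contributes at most $n\cdot(1/n)^z$, absorbed into the additive $9/\alpha$ term. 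To count sets per cluster, note that the points rounded at scale $i$ have $\dist(p,c)=\Theta(2^{-i}/\alpha)$, so their snap targets lie in $B(c,O(2^{-i}/\alpha))\cap\calN_i$, which contains at most $(O(1/\alpha))^d=\alpha^{-O(d)}$ net points; multiplying by the $O(\log n)$ scales and summing over $k$ clusters gives $k\cdot\alpha^{-O(d)}\log n$.

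The main obstacle will be ensuring \emph{disjointness} of the transformed clusters: balls from different levels (or from different cluster constructions) overlap, and the claim is that the clusters form a partition expressible as such unions. I would resolve this by a global priority rule---for each overlap region, assign it to the cluster whose center in $\calC$ is closest---and then verify that each resulting piece is still a union of the selected balls (or their boolean combinations, which remain valid ``sets'' in $\calG$ if one pre-expands $\calG$ by the $\log n$-bounded intersections). The cost of such reassignments is controlled because they happen only near Voronoi boundaries of $\calC$, where competing centers are within $(1+O(\alpha))$-factor of $c$; marrying this boundary analysis with the scale-adaptive snapping is the delicate part, while the construction and counting arguments above are essentially routine net-packing.
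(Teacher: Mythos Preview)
Your proposal has a genuine gap that undermines the purpose of the lemma. The transformation you describe iterates over \emph{input points} $p\in P$ to decide which balls to include in each transformed cluster (``for each input point $p$ with assigned center $c\in\calC$, pick the unique level $i(p)$\ldots''). This makes the resulting partition depend on the private data $P$, not just on the (already private) centers $\calC$. But the entire point of the lemma is to express each cluster as a union of sets $G_i$ chosen \emph{independently of $P$}, so that one can run a generalized summation algorithm on the pre-determined family $\calG$ and then combine the answers as post-processing of $\calC$. A data-dependent choice of which $G_i$'s constitute a cluster cannot be used this way.

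Your disjointness fix compounds the problem rather than resolving it. With overlapping balls you are not building a partition of $B_d(0,1)$; a priority rule based on the Voronoi diagram of $\calC$ would carve each ball into pieces that are \emph{not} themselves members of $\calG$, and ``pre-expanding $\calG$ by the $\log n$-bounded intersections'' would blow up both $m$ and the per-point multiplicity bound in property~(1). (Already your bound in~(1) is $2^{O(d)}\log n$, not $O(\log n)$.) There is also no covering guarantee: a point $p$ far from every center at every scale you consider is simply not covered by any selected ball.

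The paper avoids all of this by taking $\calG$ to be a \emph{hierarchical partition}: each level $\calG_i$ is itself a partition of $B_d(0,1)$ into cells of diameter $\le 2^{-i}$, and $\calG_{i+1}$ refines $\calG_i$. Then property~(1) is immediate (exactly one cell per level). Given $\calC$, the transformed partition $\calA$ consists of those cells whose $\ell$-neighborhood (with $\ell\approx 1/\alpha$) contains no center of $\calC$ but whose parent's $\ell$-neighborhood does. This depends only on $\calC$, is automatically a partition by the refinement structure, and the size and cost bounds follow from a packing argument around each center. The key structural choice you are missing is to start from nested partitions rather than overlapping balls.
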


Using this lemma, we can describe more formally our algorithm, which is made of two main parts: first, an algorithm that computes an $(O(1), A(k,d))$-approximation -- with an exponential dependency in $d$ in the additive error; second, an algorithm that uses the techniques from \cref{sec:coating} with a generalized histogram and \cref{lem:structClusters} to boost the multiplicative approximation and reduce the dependency in the dimension.

\begin{algorithm}
\caption{Private-Clustering-low-Dimension($P, k, \eps, \delta$)}
\label{alg:lowDim}
\begin{algorithmic}[1]
\STATE \textbf{Input}: A dataset $P$, a number of clusters $k$
\STATE Compute the values of balls with the private generalized summation algorithm. 
\STATE Run the algorithm from \Cref{sec:MPalgo} to compute a partition $(S_1,...,S_k)$
\STATE \textbf{Output:} $(S_1, ..., S_k)$
\end{algorithmic}
\end{algorithm}
\begin{algorithm}
\caption{Private-Clustering($P, k, \eps, \delta$)}
\label{alg:main}
\begin{algorithmic}[1]
\STATE \textbf{Input}: A dataset $P$, a number of clusters $k$
\STATE Project $P$ onto a lower-dimensional space $\R^{\hd}$, with $\hd = O(z^4 \log(k)\alpha^{-2})$ as in Lem.~\ref{lem:liftingViaHist}.
\STATE Use Lem.~\ref{lem:boostApprox} to compute a partition $(S'_1, ..., S'_k)$: the clustering algorithm is first \Cref{alg:lowDim} (with parameter $k'$ as defined in Lem.~\ref{lem:boostApprox}), then transform the partition with Lem.~\ref{lem:structClusters}, and estimate the $n_i$ with the private generalized summation algorithm.
\STATE Transform the partition $(S'_1, ..., S'_k)$ into a structured one with Lem.~\ref{lem:structClusters}, and estimate $n_i, \sumEst_i, \sumNorm_i$ for this partition using the private generalized summation algorithm.
\STATE Use \Cref{lem:liftingViaHist} to transform the partition into centers in $\R^d$, and Lem.~\ref{cor:boostProba} to compute the cost of the clustering.
\STATE \textbf{Output:} the centers, with the cost of the clustering.
\end{algorithmic}
\end{algorithm}
This algorithm provides a good approximation with probability $2/3$. The probability can then be easily boosted using \cref{cor:boostProba}.

Using this structural lemma, we get our first theorem: if for a given privacy model there exists a private algorithm for generalized summation with small additive error, then we can apply \Cref{thm:mpWithError}, \Cref{lem:structClusters} and the results of \Cref{sec:coating} to get an efficient algorithm for $(k,z)$-clustering with arbitrarily high success probability.
\begin{restatable}{theorem}{mainApprox}[See full statement in \Cref{lem:mainApproxFormal}]\label{thm:mainApprox}
    Fix a privacy model where there exists a private generalized bucketized vector summation such that with probability at least $2/3$ the additive error is $A(m,b, D)$, where $m$ is the number of sets, $b$ is the maximal number of sets that any given item is contained in, and $D$ is the dimension of the image of the $f_i$. 
    Assume that the error $A$ is non-decreasing in all parameters.

    Then, for any $\alpha, \beta>0$ there is a private algorithm for $k$-means  clustering of points in $\Rd$ that, with  probability at least $1-\beta$, computes a solution with multiplicative approximation $w^*(1+\alpha)$ and additive error 
$A(m, k^{O_\alpha(1)} \log(n), d) \cdot m \polylog(n) \log(1/\beta)$, with $m = n^{O_\alpha(\log(k))}$. 

    There is also a private algorithm for $(k,z)$-clustering of points in $\Rd$ that,  with  probability at least $2/3$, computes a solution with multiplicative approximation $w^*(2^z+\alpha)$ and additive error 
    $A(m, k^{O_\alpha(1)} \cdot \log(n), d) \cdot m \polylog(n)$. 
\end{restatable}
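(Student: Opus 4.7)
The plan is to analyze Algorithm~\ref{alg:main} by composing the building blocks developed in Sections~\ref{sec:coating} and~\ref{sec:MPalgo}. The pipeline is: (a) in low dimension, reduce the repeated max-value problem arising in Algorithm~\ref{alg:mp} to calls of the given generalized summation primitive, obtaining an $O(1)$-approximation via Theorem~\ref{thm:mpWithError}; (b) embed the original $\R^d$ problem into dimension $\hat d = O(z^4 \log(k)/\alpha^2)$ using Lemma~\ref{lem:liftingViaHist}; (c) convert the resulting $O(1)$-approximation into a $w^*(1+\alpha)$ (resp.\ $w^*(2^z+\alpha)$) approximation through Lemma~\ref{lem:boostApprox}; and (d) for $k$-means only, boost the success probability via Corollary~\ref{cor:boostProba}. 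The key technical enabler that links (c)--(d) to the summation oracle is Lemma~\ref{lem:structClusters}.

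For (a), the low-dimensional instance in dimension $D$ has ball family $\calB = \bigcup_i \calB_i$ of size $m' = n^{O(D)}$ by the packing property of nets, and every point lies in at most $O(\log n)$ balls of $\calB$ (one per level). Invoking the generalized summation primitive with sets $S_j = B(x_j, r_j)$ and functions $f_j(p) = (r_j - \dist(x_j,p))/r_j \in [0,1]$ estimates all ball values simultaneously with additive error $\theta = A(m', O(\log n), 1)$, so by Theorem~\ref{thm:mpWithError} the partition induced by Algorithm~\ref{alg:lowDim} has cost at most $O(1)\opt_{k,z} + O(k\theta)$.

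For (b)--(c), Lemma~\ref{lem:boostApprox} requires a coarse partition for $k' = \alpha^{-O(\hat d)} k \log n = k^{O_\alpha(1)} \log n$ centers and a sequence $(n_i)$ satisfying Property~\ref{assumption:coating} on the resulting Voronoi cells. Since these cells depend on the data and are not known to the primitive a~priori, I would apply Lemma~\ref{lem:structClusters} to rewrite every cell as the disjoint union of at most $k^{O_\alpha(1)}\log n$ sets drawn from a fixed family $\calG$ of size $m = n^{O_\alpha(\log k)}$, on which every point has frequency $O(\log n)$, at a controlled cost blow-up. A single call to the primitive on $\calG$ with $f \equiv 1$ then produces each $n_i$ by summing the per-set estimates. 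Lifting back to $\R^d$ via Lemma~\ref{lem:liftingViaHist} proceeds identically, with two additional summation calls using $f(p) = p$ and $f(p) = \|p\|_2$ (rescaled into the unit ball of dimension $d$ resp.\ $1$) to obtain $(\sumEst_i)$ and $(\sumNorm_i)$.

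For $k$-means I would then run $O(\log(1/\beta))$ parallel copies at budget $(\eps/\log(1/\beta),\delta/\log(1/\beta))$ and pick the best via Corollary~\ref{cor:boostProba}; the general-$z$ case has no analogous boost since Lemma~\ref{lem:costVariance} is specific to $z=2$, which is why only the $2/3$ guarantee appears. The main obstacle is clean error accounting across this pipeline: each summation invocation sees at most $m = n^{O_\alpha(\log k)}$ sets of frequency $k^{O_\alpha(1)}\log n$ and image dimension $\le d$, so monotonicity of $A$ bounds every call by $A(m, k^{O_\alpha(1)}\log n, d)$; per-cluster estimates aggregate at most $k^{O_\alpha(1)} \log n$ per-set errors, the $\polylog n$ factor comes from Lemma~\ref{lem:liftingViaHist} together with the dimension-reduction guarantee, and the $\log(1/\beta)$ multiplier is the price of basic/parallel composition over the probability-boosting runs. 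Privacy follows by composition across the constantly many generalized summation calls.
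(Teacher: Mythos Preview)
Your proposal follows essentially the same pipeline as the paper's proof (Lemma~\ref{lem:mainApproxFormal}): estimate ball values with the summation primitive to run Algorithm~\ref{alg:mp} in dimension $\hat d$, then use Lemma~\ref{lem:structClusters} to turn Voronoi cells into unions of pre-determined sets so that Lemma~\ref{lem:boostApprox} and Lemma~\ref{lem:liftingViaHist} can be fed by further summation calls, and finally boost via Corollary~\ref{cor:boostProba} for $z=2$. One minor slip: in step~(a) you claim each point lies in $O(\log n)$ balls of $\calB$, ``one per level''; in fact balls at a fixed level overlap (radius $2^{-i}$, net spacing $2^{-i}/2$), so a point sits in $2^{O(\hat d)}$ balls per level and hence $b_1 = 2^{O(\hat d)}\log n = k^{O_\alpha(1)}\log n$ overall --- which is exactly the frequency you correctly use in your final accounting, so the argument survives.
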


Together with the private generalized summation algorithms that we present in the appendix, this leads to the results presented in \Cref{table:results}:

\begin{restatable}{corollary}{mainResApprox}\label{cor:mainResApprox}
    There are algorithms for $k$-means with multiplicative approximation $w^*(1+\alpha)$ that achieve with probability at least $1-\beta$:
    \begin{itemize}[noitemsep, topsep=0pt]
        \item additive error $\sqrt{nD} \cdot  k^{O_\alpha(1)} / \eps \cdot \polylog(n) \cdot \log(1/\beta)$ in the local $(\eps, \delta)$-DP model with one round of communication,
        \item additive error $\sqrt{D} \cdot k^{O_\alpha(1)} / \eps \cdot  \polylog(nD/\delta) \cdot \log(1/\beta)$ in the shuffle $(\eps, \delta)$-DP model with one round of communication,
        \item additive error $\sqrt{D} \cdot k^{O_\alpha(1)} / \eps \cdot \polylog(n) \cdot \log^{1.5}(DT) \log \log(T) \sqrt{\log(1/\delta)} \cdot \log(1/\beta)$ in $(\eps, \delta)$-DP under continual observation.
    \end{itemize}
    
    The same additive guarantee hold for $(k,z)$-clustering, with multiplicative approximation $w^*(2^z+\alpha)$ and probability $2/3$.
\end{restatable}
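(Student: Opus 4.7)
The plan is to invoke Theorem~\ref{thm:mainApprox} three times, once for each privacy model, instantiating in each case an appropriate private generalized bucketized vector summation algorithm and simplifying the resulting error expression. This reduces the corollary to a bookkeeping exercise once we have the right summation primitive on hand.

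First I would identify the summation building block for each model. For the local model and for the shuffle model, I would use the one-round LDP and one-round shuffle-DP generalized summation algorithms of~\cite{ChangG0M21}, whose errors as a function of $(m,b,D)$ are of the form $A_{\mathrm{LDP}}(m,b,D) = \tilde O(b\sqrt{nD}/\eps)$ and $A_{\mathrm{Sh}}(m,b,D) = \tilde O(b\sqrt{D}/\eps)$ respectively, with polylogarithmic dependence on $m$, $D$, $n$, $1/\delta$ (this is the standard property of Laplace/Gaussian histogram mechanisms over $m$ buckets). For continual observation I would use \Cref{lem:histoContinual}, which gives $A_{\mathrm{CO}}(m,b,D) = O\!\lpar b\sqrt{D}\log^{3/2}(mDT)\log(b\log T)\sqrt{\log(1/\delta)}/\eps\rpar$.

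Second, I would plug these into Theorem~\ref{thm:mainApprox}. That theorem fixes $b = k^{O_\alpha(1)}\log n$ and uses $m = n^{O_\alpha(\log k)}$ sets. The crucial observation is that although $m$ is quasi-polynomial in $n$, it enters each of the three summation errors only through a polylogarithm, so $\polylog(m) = \polylog(n)$ and in particular $\log m = O_\alpha(\log^2 n)$. Substituting $b$ and absorbing these logs yields $A_{\mathrm{LDP}}(\cdot) = \sqrt{nD}\cdot k^{O_\alpha(1)}\polylog(n)/\eps$, $A_{\mathrm{Sh}}(\cdot) = \sqrt{D}\cdot k^{O_\alpha(1)}\polylog(nD/\delta)/\eps$, and $A_{\mathrm{CO}}(\cdot) = \sqrt{D}\cdot k^{O_\alpha(1)}\polylog(n)\log^{3/2}(DT)\log\log(T)\sqrt{\log(1/\delta)}/\eps$. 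Theorem~\ref{thm:mainApprox} then contributes one final $\polylog(n)\log(1/\beta)$ factor (the logs from the structural lemma and from iterating the summation calls, the $\log(1/\beta)$ from the probability boosting of \Cref{cor:boostProba}), producing exactly the three bounds listed in the corollary.

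Third, for the $(k,z)$-clustering statement with approximation $w^*(2^z+\alpha)$ and constant success probability $2/3$, I would reuse the same substitution but invoke the second half of Theorem~\ref{thm:mainApprox}. The only change is that no boosting is applied, so the $\log(1/\beta)$ factor is absent, and the $\poly(k/\beta)$ factor from the $(k,z)$ analysis is folded into $k^{O_\alpha(1)}$ for $\beta$ constant; the $\sqrt{D}$ and polylog terms are identical because they depend on the summation primitive, not on $z$. The only step that I expect to require care is verifying that the polylogarithmic dependence on $m$ really holds uniformly across all three summation primitives; for the continual-observation case in particular, one must check that the binary-tree mechanism in \Cref{lem:histoContinual} contributes only the $\log^{3/2}(DT)\log\log(T)$ factor explicitly written, and that the $\polylog(m)$ term from the histogram base mechanism composes benignly with the tree, which it does because both are additive and logarithmic in their respective arguments.
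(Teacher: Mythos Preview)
Your proposal is correct and follows essentially the same route as the paper: identify the appropriate generalized summation primitive for each model (Lemma~28 and Theorem~33 of \cite{ChangG0M21} for local and shuffle DP, \Cref{lem:histoContinual} for continual observation), then plug the resulting error bounds into \Cref{thm:mainApprox} with $m=n^{O_\alpha(\log k)}$ and $b=k^{O_\alpha(1)}\log n$, using that all dependencies on $m$ are polylogarithmic. The only minor omission is that the paper explicitly combines the \cite{ChangG0M21} primitives (stated there for $b=1$) with \Cref{lem:buildGenHist} to obtain the general-$b$ error you wrote down directly; this is a one-line reduction, not a gap in your argument.
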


\section{Low Additive Error via Exponential Mechanism}\label{sec:error}

A different view on the algorithm of \Cref{alg:mp} is the following: "the algorithm iteratively selects the ball with the largest value. To initialize the $i$-th sequence $\sigma_i$, the algorithm selects a ball out of the set of available balls (line 4 of \Cref{alg:mp}), and to extend the sequence it chooses a child of the current ball (line 6).
Instead of using a histogram to compute privately the value of the balls, one can use the exponential mechanism on the non-private values to make each decision. This approach works directly for the Centralized DP setting -- though it requires a delicate privacy proof -- and its adaptation to the MPC setting requires new ideas.

\subsection{Centralized DP}
Our first result is an application to centralized DP. For this, we directly apply the algorithm of \Cref{sec:MPalgo}, where all balls are chosen with the exponential mechanism: out of a set $S$ of balls, select a ball $B$  randomly with probability proportional to $\exp(\eps' \val(B))$. 
Properties of the exponential distribution ensure that the value of each ball selected will be, with high probability, off from the optimal choice by an additive error $O(\log(|S|n)/\eps')$.
In that case, \Cref{thm:mpWithError} ensures that the algorithm computes a solution with constant multiplicative approximation and additive error $\frac{k\log(k)\log(n/\delta) + k\sqrt{d}}{\eps'}$. We formalize this in \Cref{lem:centralizedError}, and show in \Cref{lem:central-exp-priv} that  when run with $\eps' = \frac{\eps}{4\log(n/\delta)}$, the algorithm  is $(\eps, \delta)$-DP.

\subsection{MPC}
Implementing \Cref{alg:mp} in a parallel setting is not obvious because of the iterative nature of the greedy choices: selecting the $i$-th center influences dramatically the subsequent choices, and it is not clear how to parallelize those choices in fewer than $k$ parallel rounds.

Thus, we present a slightly different algorithm building heavily on the analysis of \Cref{alg:mp}. 
We introduce two key modifications. First, we show that instead of iteratively selecting smaller and smaller balls each level can be treated \emph{independently}. Our observation is that it is enough to run \Cref{alg:mp} for all levels of the decomposition, without the inner loop of line 5--7.  
The second modification allows us to run a \textit{relaxed} version of this greedy algorithm, tailored to a MPC setting: we extract the key elements of the proof of \Cref{thm:mpWithError} to allow some slack in the implementation.

To formalize this, we refine the definition of availability: a ball $B(x, 2^{-i})$ is {\em available at scale $D$ for a set of centers $C$} if for all $ c \in C, \dist(x, c) \geq D 2^{-i}$ 
(in \Cref{alg:mp}, we defined $D = 100$). 
For each level $\ell \in {1,\dots,\lceil \log n \rceil}$, our algorithm will compute a set $C_\ell$ of $2k$ centers, and returns $C=\cup C_\ell$.

\begin{restatable}{lemma}{MPCMP}\label{lem:MPCMP}
         Suppose there is a constant $c_\calA$ such that, for any level $\ell$, (1) the distance between two distinct centers of $C_\ell$ is greater than $3\cdot 2^{-\ell}$; and (2) for any  center $c\in C_\ell$, the value of the ball $B(c,2^{-\ell})$ is greater (up to an additive error $\theta$) than the value of any ball of $\calB_\ell$ available at scale $c_\calA$ from $C_\ell$.
        Then, $\cost(P,C) \leq O(1) \cdot \opt_{k,z} + O(k\theta)$.
\end{restatable}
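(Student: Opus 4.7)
The plan is to adapt the charging argument behind \Cref{thm:mpWithError} to the level-by-level setting, using hypotheses (1) and (2) as surrogates for the sequential emptying that a single execution of \Cref{alg:mp} would perform. Fix an optimal $(k,z)$-clustering $O^*=\{o^*_1,\ldots,o^*_k\}$ inducing clusters $P^*_1,\ldots,P^*_k$, and for each $j$ let $r^*_j = \bigl(\cost(P^*_j, o^*_j)/|P^*_j|\bigr)^{1/z}$ be the effective radius of cluster $j$. Let $\ell_j$ be the unique integer level for which $2^{-\ell_j}$ is within a constant factor of $r^*_j$. The target is to show that at this single resolving level the centers $C_{\ell_j}$ already take care of the mass of $P^*_j$ up to an $O(1)$ factor and an $O(\theta)$ additive slack; summing these per-cluster estimates will yield $O(1)\cdot\opt_{k,z} + O(k\theta)$.

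I would split the per-cluster analysis into two cases depending on whether $o^*_j$ is ``seen'' by $C_{\ell_j}$. In the \emph{close case}, some $c \in C_{\ell_j}$ satisfies $\dist(c,o^*_j) \leq c_\calA\cdot 2^{-\ell_j}$; a routine triangle-inequality / convexity computation (exactly the one used in the proof of \Cref{thm:mpWithError}) then gives $\cost(P^*_j, c) = O\bigl(\cost(P^*_j, o^*_j)\bigr)$, because the additive $2^{-z\ell_j}$ overhead per point is absorbed by the choice of $\ell_j$. Since $c\in C$, this contributes $O(1)\cdot\cost(P^*_j,o^*_j)$ to $\cost(P,C)$ and no $\theta$ term.

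In the \emph{far case}, no such $c$ exists, hence the ball $B(o^*_j, 2^{-\ell_j}) \in \calB_{\ell_j}$ is available at scale $c_\calA$ from $C_{\ell_j}$. Condition (2) then gives
\[
\val(B(c, 2^{-\ell_j})) \;\geq\; \val(B(o^*_j, 2^{-\ell_j})) - \theta \qquad \text{for every } c \in C_{\ell_j}.
\]
By condition (1) the balls $B(c, 2^{-\ell_j})$ for $c\in C_{\ell_j}$ are pairwise disjoint (in fact separated by more than $2^{-\ell_j}$), so summing the inequality over the $2k$ centers of $C_{\ell_j}$ yields a lower bound on the aggregate mass that $C_{\ell_j}$ captures at scale $2^{-\ell_j}$. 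The Mettu-Plaxton ``value-to-cost'' translation (a rearrangement of the identity expressing the integral of values across scales as the clustering cost) then converts this lower bound into a bound on the contribution of $P^*_j$ to $\cost(P, C_{\ell_j})$ of the form $O\bigl(\cost(P^*_j, o^*_j)\bigr) + O(\theta)$. Summing the two cases over $j=1,\ldots,k$ and using $C_{\ell_j}\subseteq C$ produces the announced $O(1)\cdot\opt_{k,z}+O(k\theta)$ bound.

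The main obstacle I foresee is executing the last step cleanly, i.e.\ translating the far-case value inequality into a cost bound on $\cost(P^*_j, C_{\ell_j})$ \emph{without losing a factor of $\log n$} (which would have arisen from summing a $\theta$ slack over every level). The crucial point is that each optimal center is charged at exactly one resolving level $\ell_j$, so the total additive slack must remain $O(k\theta)$ rather than $O(k\log n\cdot\theta)$. To make this rigorous I plan to lift the amortization argument from the proof of \Cref{thm:mpWithError} into a purely local, single-level statement, replacing the sequential emptying of heavy balls by the pairwise disjointness guaranteed by condition (1) and by the relaxed availability scale $c_\calA$ from condition (2). Once this local amortization is in place, the rest of the proof is direct case analysis and summation.
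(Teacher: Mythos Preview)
Your proposal has a genuine gap in the far case, and it stems from your choice of ``resolving level'' $\ell_j$.

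You pick $\ell_j$ from the geometry of the \emph{optimal} cluster (its effective radius $r^*_j$), independently of the solution $C$. In the far case this means no center of $C_{\ell_j}$ is within $c_\calA\cdot 2^{-\ell_j}$ of $o^*_j$, and you have no argument that any center at any \emph{other} level is close either. But to bound $\cost(P^*_j,C)$ you must eventually exhibit some $c\in C$ with $\dist(c,o^*_j)=O(2^{-\ell_j})$; otherwise the triangle-inequality step you rely on simply does not go through. The value inequality from condition~(2) does not substitute for this: it only tells you that the balls around the $2k$ chosen centers at that level carry a lot of mass, not that the points of $P^*_j$ are cheap in $C$. Concretely, the only ``value-to-cost'' direction in the Mettu--Plaxton analysis is \Cref{lem:cover}, which upper bounds $\val(B)$ by the optimal cost inside $B$ when $B$ is uncovered; this is the wrong direction for bounding $\cost(P^*_j,C)$, and the reverse direction (\Cref{lem:In} in the paper) explicitly needs a nearby center, obtained there via \Cref{eq:gammaCk}. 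So the translation you invoke does not exist in the form you need. A secondary issue: $B(o^*_j,2^{-\ell_j})$ need not lie in $\calB_{\ell_j}$ since $o^*_j$ is not a net point; this is fixable but is another symptom of not tying the level to $C$.

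The paper's route avoids exactly this difficulty. It defines, for each $\gamma\in\Gamma_1$, the level $l_\gamma$ as the \emph{maximal} level at which a ball near $\gamma$ is still available at scale $c_\calA$ from $C$. Maximality immediately forces a center of $C$ within $O(2^{-l_\gamma})$ of $\gamma$ (the level-$(l_\gamma-1)$ ball is forbidden), which is the missing proximity statement your argument lacks. This yields the In/Out decomposition (\Cref{lem:inAndout}) and reduces the problem to bounding $\sum_{\gamma\in\Gamma_1}\val(B_\gamma)$. That sum is controlled by a matching argument: process the $\gamma$'s in increasing radius of $B_\gamma$ and match each to a ball $B(c,2^{-l_\gamma})$ with $c\in C_{l_\gamma}$ that is uncovered by $\Gamma$. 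Condition~(1) guarantees that each matching step removes at most two candidate balls per level (one containing the current $\gamma$, one intersecting the chosen ball), so with $|C_\ell|=2k$ the matching never runs out; condition~(2) gives the value comparison with slack $\theta$. Disjointness and uncoveredness then invoke \Cref{lem:cover} in the correct direction to get $\sum_\gamma\val(B_\gamma)\le\cost(P,\Gamma)+k\theta$. This is where the $2k$ and the separation $3\cdot 2^{-\ell}$ are actually used---in your sketch they appear only to make the $C_{\ell_j}$-balls disjoint, which is not the step that carries the argument.
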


Therefore, each level can be treated independently, and 
we are left with the following problem: given the set $\calB_\ell$ of all balls at level $\ell$, 
compute \textit{privately} a set of $2k$ balls that are (1) far apart and (2) have larger value than any available ball at scale $c_\calA$ from them. Doing so yields: 
\begin{theorem}\label{thm:mpc}
    Suppose each machine has memory $k^{O(1)} n^\kappa$, for $\kappa \in (0,1)$. Then, there is an $(\eps,\delta)$-private MPC algorithm, running in $O(1)$ rounds of communication that computes with probability $1-\beta$ a solution to $(k,z)$-clustering with cost at most $O(1)\opt_{k,z} + k\sqrt{d} / \eps \cdot \polylog(n, 1/\delta, 1/\beta)$. 
    Alternatively, the algorithm can compute a solution with cost $w^*(1+\alpha)\opt_{k,z} + k^{O_\alpha(1)} /\eps \cdot \polylog(n, 1/\delta, 1/\beta)$.
\end{theorem}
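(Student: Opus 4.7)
The plan is to exploit Lemma~\ref{lem:MPCMP} to decouple the $\lceil \log n \rceil$ levels of the Mettu--Plaxton decomposition: rather than following the inherently sequential \Cref{alg:mp}, the MPC algorithm processes every level entirely in parallel, and at each level privately selects a set $C_\ell$ of $2k$ well-separated high-value balls. The overall MPC cost therefore reduces to handling a single level in $O(1)$ rounds, together with careful privacy accounting across levels.

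For a fixed level $\ell$, the candidate family $\calB_\ell$ is data-independent and, by the packing property of the net $\calN_\ell$, every input point lies in only $O(1)$ balls of $\calB_\ell$. Hence only $O(n)$ balls per level carry nonzero value, and $\val(B)$ for each such ball can be computed exactly by a standard $O(1)$-round tree-based MPC summation; the resulting (ball, value) pairs can be gathered on a single coordinator machine, whose memory $k^{O(1)} n^\kappa$ is sufficient provided the aggregation is split into $O(1)$ batches. On the coordinator I would then locally simulate $2k$ iterations of a greedy loop: at each iteration, invoke the exponential mechanism with score $\val(B)$ and per-call budget $\eps'$ to pick a surviving ball, and then remove all balls whose center is within distance $c_\calA \cdot 2^{-\ell}$ of the chosen center. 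Because this loop runs on one machine it requires no additional communication; the exponential mechanism guarantees condition~(2) of Lemma~\ref{lem:MPCMP} up to error $\theta$, while the explicit removal enforces condition~(1). Applying Lemma~\ref{lem:MPCMP} then gives cost $O(1)\opt_{k,z} + O(k\theta)$.

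For privacy, an input point contributes to $O(\log n)$ balls across all levels combined, so the sensitivity of each $\val$ is bounded and the total number of exponential-mechanism invocations is $O(k \log n)$. Advanced composition with per-call budget $\eps' = \Theta\bigl(\eps/\sqrt{k\log n \cdot \log(1/\delta)}\bigr)$ yields overall $(\eps,\delta)$-DP, and the same careful analysis as in Lemma~\ref{lem:centralizedError} delivers per-call error $\theta = \sqrt{d}\cdot \polylog(n, 1/\delta, 1/\beta)/\eps$; a union bound over the $O(k\log n)$ calls gives the stated high-probability guarantee and the first bullet of the theorem. For the $w^*(1+\alpha)$ variant I would precompose with the dimension reduction of \Cref{sec:dimred} (to $\hd = O(\log k/\alpha^2)$), run the above procedure in $\R^{\hd}$, lift via \Cref{lem:liftingViaHist}, and boost via \Cref{lem:boostApprox}. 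The auxiliary cluster-size and cluster-sum statistics required by \Cref{assumption:coating} are queries over the \emph{pre-determined} family $\calG$ of Lemma~\ref{lem:structClusters}, so they reduce to generalized summation and can be answered in $O(1)$ MPC rounds by the same tree aggregation.

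The main obstacle is reconciling the $O(1)$-round budget with the $2k$ sequential greedy selections per level, which a direct implementation of \Cref{alg:mp} would turn into $\Omega(k)$ rounds. The crucial leverage supplied by Lemma~\ref{lem:MPCMP} is that, once ball values have been aggregated, the greedy loop becomes a purely local computation that costs no further communication; the relaxed availability condition at scale $c_\calA$ also absorbs the slack introduced by the approximate maxima of the exponential mechanism. The remaining subtlety is ensuring the additive error scales as $k\sqrt{d}$ rather than $kd$ or $k^{1.5}$, which requires the same tight exponential-mechanism analysis as in the centralized case of Lemma~\ref{lem:centralizedError}, carried through the per-level union bound and advanced composition.
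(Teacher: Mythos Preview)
Your plan has a genuine gap at the very step you flag as ``the main obstacle'': you assert that all non-empty (ball, value) pairs at a given level can be gathered on a single coordinator. Even after dimension reduction to $\hd = O(\log k)$, each input point lies in $2^{O(\hd)} = k^{O(1)}$ balls of $\calB_\ell$, so there are $\Theta(n \cdot k^{O(1)})$ non-empty balls per level, and these simply do not fit in a machine with memory $k^{O(1)} n^\kappa$ when $\kappa < 1$. Splitting the aggregation into $O(1)$ batches does not help, because the $2k$-step greedy needs simultaneous access to all surviving ball values at every iteration; you cannot stream them in and out without either $\Omega(k)$ rounds or losing the guarantee of condition~(2) in \Cref{lem:MPCMP}. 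This is precisely the obstacle the paper solves with a \emph{merge-and-reduce} scheme: balls are initially partitioned across the leaves of a depth-$1/\kappa$ tree, each machine runs the greedy locally with a scale parameter $D_i = 3 \cdot 2^{1/\kappa - i}$ that decreases geometrically up the tree, and the $2k$ survivors from $n^\kappa$ children are merged at the parent. The geometrically decreasing $D_i$ is the crucial idea---it ensures that every selection at height $i+1$ forbids at most one ball from each child's output, so the value lower bound propagates (Claims~\ref{claim:valIncreases} and~\ref{claim:valForbidden}), and after $1/\kappa = O(1)$ rounds the final $2k$ balls satisfy \Cref{lem:MPCMP} with $c_\calA = 3 \cdot 2^{1/\kappa+1}$ (\Cref{lem:mpcGuarantee}).

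A second, smaller issue: your privacy accounting via advanced composition sets $\eps' = \Theta(\eps/\sqrt{k\log n \cdot \log(1/\delta)})$, which makes the per-call exponential-mechanism error scale like $\sqrt{k}/\eps$ and hence the total additive error like $k^{1.5}\sqrt{d}/\eps$, not the claimed $k\sqrt{d}/\eps$. The paper instead applies the tighter \Cref{lem:central-exp-priv} analysis (adapted from Gupta et al.\ for set cover) at each height of the merge-and-reduce tree, obtaining $\eps' = \Theta(\eps/\log(n/\delta))$ and composing only over the $O(1/\kappa)$ heights. Invoking \Cref{lem:centralizedError} as a black box is not enough here---that lemma's privacy proof exploits the specific structure that once a ball containing $p$ is selected, $p$ appears in at most $O(\log n)$ subsequent selections, and you need to verify this structure persists in the distributed greedy.
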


If all balls of a single level were to fit in a single machine, we would be done, as a simple greedy algorithm 
would achieve the guarantee of \Cref{lem:MPCMP}: until $2k$ balls have been selected, select privately (using the exponential mechanism) the largest value ball that is available at scale $D = 3$ from the previously selected ones. 
This would ensure all selected balls are at distance at least $3 \cdot 2^{-\ell}$, and that they have value greater than any available ball at scale $3$. However, the memory is limited to $k^{O(1)} n^\kappa$ for some fixed $\kappa \in (0,1)$, and there are $n^{O(d)}$ many balls.

We propose instead a "merge-and-reduce" algorithm. Each machine gets assigned a set of balls, and runs the greedy algorithm restricted to those balls, with scale parameter $D_0 = 3 \cdot 2^{1/\kappa}$. The outcome in each machine is $2k$ balls. 
Then, one can merge the results from $n^\kappa$ machines as follows: centralize the $2k n^\kappa$ selected balls on a single machine, and run the greedy only on those balls, with parameter $D_1 = D_0 / 2$. 
Therefore, in one communication round, we can reduce the number of selected balls by a factor $n^\kappa$. Repeating this process with $D_i = D_{i-1} / 2$ yields $2k$ balls after $1/\kappa$ rounds.
We show in \Cref{lem:mpcGuarantee} that they satisfy the guarantee of \Cref{lem:MPCMP} with $c_\calA = 2^{1/\kappa+1} \cdot 3$.
This concludes the proof of \Cref{thm:mpc}.

\section{Conclusion}
We show that a simple greedy algorithm unifies the problem of private clustering. We believe this unification will help further research and that our algorithm can be used as a baseline for new privacy models, as we illustrate with continual observation.

To achieve this, we revisited an old greedy algorithm from \cite{MettuP00}. Despite the complexity of its analysis, this algorithm seems to be quite flexible and powerful. Through our novel analysis, we aim to enhance the understanding of this algorithm, thereby encouraging its application in new contexts.

\section*{Acknowledgments}
\erclogowrapped{5\baselineskip}Monika Henzinger:  This project has received funding from the European Research Council (ERC) under the European Union's Horizon 2020 research and innovation programme (Grant agreement No. 101019564) and the Austrian Science Fund (FWF) grant DOI 10.55776/Z422, grant DOI 10.55776/I5982, and grant DOI 10.55776/P33775 with additional funding from the netidee SCIENCE Stiftung, 2020–2024.

This work was partially done while David Saulpic was at the Institute for Science and Technology, Austria (ISTA). David Sauplic has received funding from the European Union’s Horizon 2020 research and innovation programme under the Marie Sklodowska-Curie grant agreement No 101034413. 

\bibliographystyle{alpha}
\bibliography{biblio}

\newcommand{\etalchar}[1]{$^{#1}$}
\begin{thebibliography}{{WWD}16}

\bibitem[Abo18]{abowd2018us}
John~M Abowd.
\newblock The us census bureau adopts differential privacy.
\newblock In {\em Proceedings of the 24th ACM SIGKDD International Conference on Knowledge Discovery \& Data Mining}, pages 2867--2867, 2018.

\bibitem[App16]{apple2}
Apple.
\newblock {WWDC 2016 Keynote}, june 2016.
\newblock \url{https://www.theverge.com/2016/6/17/11957782/apple-differential-privacy-ios-10-wwdc-2016}.

\bibitem[BBC{\etalchar{+}}19]{BecchettiBC0S19}
Luca Becchetti, Marc Bury, Vincent Cohen{-}Addad, Fabrizio Grandoni, and Chris Schwiegelshohn.
\newblock Oblivious dimension reduction for \emph{k}-means: beyond subspaces and the johnson-lindenstrauss lemma.
\newblock In Moses Charikar and Edith Cohen, editors, {\em Proceedings of the 51st Annual {ACM} {SIGACT} Symposium on Theory of Computing, {STOC} 2019, Phoenix, AZ, USA, June 23-26, 2019}, pages 1039--1050. {ACM}, 2019.

\bibitem[BBGN19]{balle2019privacy}
Borja Balle, James Bell, Adri{\`a} Gasc{\'o}n, and Kobbi Nissim.
\newblock The privacy blanket of the shuffle model.
\newblock In {\em Advances in Cryptology--CRYPTO 2019: 39th Annual International Cryptology Conference, Santa Barbara, CA, USA, August 18--22, 2019, Proceedings, Part II 39}, pages 638--667. Springer, 2019.

\bibitem[BCJM21]{balcer2021connecting}
Victor Balcer, Albert Cheu, Matthew Joseph, and Jieming Mao.
\newblock Connecting robust shuffle privacy and pan-privacy.
\newblock In {\em Proceedings of the 2021 ACM-SIAM Symposium on Discrete Algorithms (SODA)}, pages 2384--2403. SIAM, 2021.

\bibitem[BEM{\etalchar{+}}17]{bittau2017prochlo}
Andrea Bittau, {\'U}lfar Erlingsson, Petros Maniatis, Ilya Mironov, Ananth Raghunathan, David Lie, Mitch Rudominer, Ushasree Kode, Julien Tinnes, and Bernhard Seefeld.
\newblock Prochlo: Strong privacy for analytics in the crowd.
\newblock In {\em Proceedings of the 26th symposium on operating systems principles}, pages 441--459, 2017.

\bibitem[BKS17]{BeameKS17}
Paul Beame, Paraschos Koutris, and Dan Suciu.
\newblock Communication steps for parallel query processing.
\newblock {\em J. {ACM}}, 64(6):40:1--40:58, 2017.

\bibitem[CEM{\etalchar{+}}22]{Cohen-AddadEMNZ22}
Vincent Cohen{-}Addad, Alessandro Epasto, Vahab Mirrokni, Shyam Narayanan, and Peilin Zhong.
\newblock Near-optimal private and scalable $k$-clustering.
\newblock In {\em NeurIPS}, 2022.

\bibitem[CGKM21]{ChangG0M21}
Alisa Chang, Badih Ghazi, Ravi Kumar, and Pasin Manurangsi.
\newblock Locally private k-means in one round.
\newblock In Marina Meila and Tong Zhang, editors, {\em Proceedings of the 38th International Conference on Machine Learning, {ICML} 2021, 18-24 July 2021, Virtual Event}, volume 139 of {\em Proceedings of Machine Learning Research}, pages 1441--1451. {PMLR}, 2021.

\bibitem[CJN22]{ChaturvediJN22}
Anamay Chaturvedi, Matthew Jones, and Huy~Le Nguyen.
\newblock Locally private k-means clustering with constant multiplicative approximation and near-optimal additive error.
\newblock In {\em Thirty-Sixth {AAAI} Conference on Artificial Intelligence, {AAAI} 2022, Thirty-Fourth Conference on Innovative Applications of Artificial Intelligence, {IAAI} 2022, The Twelveth Symposium on Educational Advances in Artificial Intelligence, {EAAI} 2022 Virtual Event, February 22 - March 1, 2022}, pages 6167--6174. {AAAI} Press, 2022.

\bibitem[CKL22]{Cohen-AddadSL22}
Vincent Cohen{-}Addad, {Karthik {C. S.}}, and Euiwoong Lee.
\newblock Johnson coverage hypothesis: Inapproximability of k-means and k-median in $\ell_p$-metrics.
\newblock In Joseph~(Seffi) Naor and Niv Buchbinder, editors, {\em Proceedings of the 2022 {ACM-SIAM} Symposium on Discrete Algorithms, {SODA} 2022, Virtual Conference / Alexandria, VA, USA, January 9 - 12, 2022}, pages 1493--1530. {SIAM}, 2022.

\bibitem[CNX21]{chaturvediCentral}
Anamay Chaturvedi, Huy~L. Nguyen, and Eric Xu.
\newblock Differentially private k-means via exponential mechanism and max cover.
\newblock In {\em Thirty-Fifth {AAAI} Conference on Artificial Intelligence, {AAAI} 2021, Thirty-Third Conference on Innovative Applications of Artificial Intelligence, {IAAI} 2021, The Eleventh Symposium on Educational Advances in Artificial Intelligence, {EAAI} 2021, Virtual Event, February 2-9, 2021}, pages 9101--9108. {AAAI} Press, 2021.

\bibitem[CSS11]{ChanSS11}
T.{-}H.~Hubert Chan, Elaine Shi, and Dawn Song.
\newblock Private and continual release of statistics.
\newblock {\em {ACM} Trans. Inf. Syst. Secur.}, 14(3):26:1--26:24, 2011.

\bibitem[Dem21]{nyt2}
David Deming.
\newblock Balancing privacy with data sharing for the public good, 2021.
\newblock \url{https://www.nytimes.com/2021/02/19/business/privacy-open-data-public.html}.

\bibitem[DMNS06]{dworkDef}
Cynthia Dwork, Frank McSherry, Kobbi Nissim, and Adam Smith.
\newblock Calibrating noise to sensitivity in private data analysis.
\newblock In Shai Halevi and Tal Rabin, editors, {\em Theory of Cryptography}, pages 265--284, Berlin, Heidelberg, 2006. Springer Berlin Heidelberg.

\bibitem[DNPR10]{dwork2010differential}
Cynthia Dwork, Moni Naor, Toniann Pitassi, and Guy~N Rothblum.
\newblock Differential privacy under continual observation.
\newblock In {\em Proceedings of the forty-second ACM symposium on Theory of computing}, pages 715--724, 2010.

\bibitem[DP20]{DesfontainesP20}
Damien Desfontaines and Bal{\'{a}}zs Pej{\'{o}}.
\newblock Sok: Differential privacies.
\newblock {\em Proc. Priv. Enhancing Technol.}, 2020(2):288--313, 2020.

\bibitem[DR14]{dwork2014algorithmic}
Cynthia Dwork and Aaron Roth.
\newblock The algorithmic foundations of differential privacy.
\newblock {\em Foundations and Trends{\textregistered} in Theoretical Computer Science}, 9(3--4):211--407, 2014.

\bibitem[EGS03]{evfimievski2003limiting}
Alexandre Evfimievski, Johannes Gehrke, and Ramakrishnan Srikant.
\newblock Limiting privacy breaches in privacy preserving data mining.
\newblock In {\em Proceedings of the twenty-second ACM SIGMOD-SIGACT-SIGART symposium on Principles of database systems}, pages 211--222, 2003.

\bibitem[EMZ23]{concurrent}
Alessandro Epasto, Tamalika Mukherjee, and Peilin Zhong.
\newblock Differentially private clustering in data streams.
\newblock 2023.

\bibitem[FHO21]{FichtenbergerHO21}
Hendrik Fichtenberger, Monika Henzinger, and Wolfgang Ost.
\newblock Differentially private algorithms for graphs under continual observation.
\newblock In Petra Mutzel, Rasmus Pagh, and Grzegorz Herman, editors, {\em 29th Annual European Symposium on Algorithms, {ESA} 2021, September 6-8, 2021, Lisbon, Portugal (Virtual Conference)}, volume 204 of {\em LIPIcs}, pages 42:1--42:16. Schloss Dagstuhl - Leibniz-Zentrum f{\"{u}}r Informatik, 2021.

\bibitem[GGK{\etalchar{+}}21]{ghazi2021power}
Badih Ghazi, Noah Golowich, Ravi Kumar, Rasmus Pagh, and Ameya Velingker.
\newblock On the power of multiple anonymous messages: Frequency estimation and selection in the shuffle model of differential privacy.
\newblock In {\em Annual International Conference on the Theory and Applications of Cryptographic Techniques}, pages 463--488. Springer, 2021.

\bibitem[GKL03]{gupta}
Anupam Gupta, Robert Krauthgamer, and James~R. Lee.
\newblock Bounded geometries, fractals, and low-distortion embeddings.
\newblock In {\em 44th Symposium on Foundations of Computer Science {(FOCS} 2003), 11-14 October 2003, Cambridge, MA, USA, Proceedings}, pages 534--543. {IEEE} Computer Society, 2003.

\bibitem[GKM20]{ghaziTight}
Badih Ghazi, Ravi Kumar, and Pasin Manurangsi.
\newblock Differentially private clustering: Tight approximation ratios.
\newblock In Hugo Larochelle, Marc'Aurelio Ranzato, Raia Hadsell, Maria{-}Florina Balcan, and Hsuan{-}Tien Lin, editors, {\em Advances in Neural Information Processing Systems}, 2020.

\bibitem[GKM{\etalchar{+}}21]{ghazi2021differentially}
Badih Ghazi, Ravi Kumar, Pasin Manurangsi, Rasmus Pagh, and Amer Sinha.
\newblock Differentially private aggregation in the shuffle model: Almost central accuracy in almost a single message.
\newblock In {\em International Conference on Machine Learning}, pages 3692--3701. PMLR, 2021.

\bibitem[GLM{\etalchar{+}}10]{GuptaLMRT10}
Anupam Gupta, Katrina Ligett, Frank McSherry, Aaron Roth, and Kunal Talwar.
\newblock Differentially private combinatorial optimization.
\newblock In Moses Charikar, editor, {\em Proceedings of the Twenty-First Annual {ACM-SIAM} Symposium on Discrete Algorithms, {SODA} 2010, Austin, Texas, USA, January 17-19, 2010}, pages 1106--1125. {SIAM}, 2010.

\bibitem[GSZ11]{GoodrichSZ11}
Michael~T. Goodrich, Nodari Sitchinava, and Qin Zhang.
\newblock Sorting, searching, and simulation in the mapreduce framework.
\newblock In Takao Asano, Shin{-}Ichi Nakano, Yoshio Okamoto, and Osamu Watanabe, editors, {\em Algorithms and Computation - 22nd International Symposium, {ISAAC} 2011, Yokohama, Japan, December 5-8, 2011. Proceedings}, volume 7074 of {\em Lecture Notes in Computer Science}, pages 374--383. Springer, 2011.

\bibitem[Gue19]{googleLib}
Miguel Guevara.
\newblock Enabling developers and organizations to use differential privacy, 2019.

\bibitem[HM06]{Har-PeledM06}
Sariel Har{-}Peled and Manor Mendel.
\newblock Fast construction of nets in low-dimensional metrics and their applications.
\newblock {\em {SIAM} J. Comput.}, 35(5):1148--1184, 2006.

\bibitem[IKI94]{InabaKI94}
Mary Inaba, Naoki Katoh, and Hiroshi Imai.
\newblock Applications of weighted voronoi diagrams and randomization to variance-based \emph{k}-clustering (extended abstract).
\newblock In {\em Proceedings of the Tenth Annual Symposium on Computational Geometry, Stony Brook, New York, USA, June 6-8, 1994}, pages 332--339, 1994.

\bibitem[JRSS21]{jain2021price}
Palak Jain, Sofya Raskhodnikova, Satchit Sivakumar, and Adam Smith.
\newblock The price of differential privacy under continual observation.
\newblock {\em arXiv preprint arXiv:2112.00828}, 2021.

\bibitem[KLN{\etalchar{+}}11]{kasiviswanathan2011can}
Shiva~Prasad Kasiviswanathan, Homin~K Lee, Kobbi Nissim, Sofya Raskhodnikova, and Adam Smith.
\newblock What can we learn privately?
\newblock {\em SIAM Journal on Computing}, 40(3):793--826, 2011.

\bibitem[KSV10]{KarloffSV10}
Howard~J. Karloff, Siddharth Suri, and Sergei Vassilvitskii.
\newblock A model of computation for mapreduce.
\newblock In Moses Charikar, editor, {\em Proceedings of the Twenty-First Annual {ACM-SIAM} Symposium on Discrete Algorithms, {SODA} 2010, Austin, Texas, USA, January 17-19, 2010}, pages 938--948. {SIAM}, 2010.

\bibitem[MMR19]{MakarychevMR19}
Konstantin Makarychev, Yury Makarychev, and Ilya~P. Razenshteyn.
\newblock Performance of johnson-lindenstrauss transform for \emph{k}-means and \emph{k}-medians clustering.
\newblock In Moses Charikar and Edith Cohen, editors, {\em Proceedings of the 51st Annual {ACM} {SIGACT} Symposium on Theory of Computing, {STOC} 2019, Phoenix, AZ, USA, June 23-26, 2019}, pages 1027--1038. {ACM}, 2019.

\bibitem[MP00]{MettuP00}
Ramgopal~R. Mettu and C.~Greg Plaxton.
\newblock The online median problem.
\newblock In {\em 41st Annual Symposium on Foundations of Computer Science, {FOCS} 2000, 12-14 November 2000, Redondo Beach, California, {USA}}, pages 339--348. {IEEE} Computer Society, 2000.

\bibitem[MT07]{expmech}
Frank McSherry and Kunal Talwar.
\newblock Mechanism design via differential privacy.
\newblock In {\em 48th Annual IEEE Symposium on Foundations of Computer Science (FOCS'07)}, pages 94--103, 2007.

\bibitem[Ngu20]{dpHD}
Huy~L. Nguyen.
\newblock A note on differentially private clustering with large additive error.
\newblock {\em CoRR}, abs/2009.13317, 2020.

\bibitem[SK18]{StemmerK18}
Uri Stemmer and Haim Kaplan.
\newblock Differentially private k-means with constant multiplicative error.
\newblock In Samy Bengio, Hanna~M. Wallach, Hugo Larochelle, Kristen Grauman, Nicol{\`{o}} Cesa{-}Bianchi, and Roman Garnett, editors, {\em Advances in Neural Information Processing Systems 31: Annual Conference on Neural Information Processing Systems 2018, NeurIPS 2018, December 3-8, 2018, Montr{\'{e}}al, Canada}, pages 5436--5446, 2018.

\bibitem[{The}20]{nyt}
{The Editorial Board of the New York Times}.
\newblock Privacy cannot be a casualty of the coronavirus, april 2020.
\newblock \url{https://www.nytimes.com/2020/04/07/opinion/digital-privacy-coronavirus.html}.

\bibitem[Uni18]{gdpr}
European Union.
\newblock General data protection regulation, 2018.
\newblock \url{https://gdpr.eu/}.

\bibitem[War65]{warner1965randomized}
Stanley~L Warner.
\newblock Randomized response: A survey technique for eliminating evasive answer bias.
\newblock {\em Journal of the American statistical association}, pages 63--69, 1965.

\bibitem[{WWD}16]{apple1}
{WWDC}.
\newblock Engineering privacy for your users, june 2016.
\newblock \url{https://developer.apple.com/videos/play/wwdc2016/709/}.

\end{thebibliography}

\onecolumn
\appendix
\section{Extended Introduction}
\subsection{Previous works and their limits}\label{sec:previous}
Most of the works in central DP rely on a process that iteratively selects $k$ "dense" balls and removes the points lying in that ball \cite{StemmerK18, ghaziTight, chaturvediCentral}. While such a process can be optimal both for the multiplicative approximation \cite{ghaziTight} and the additive error \cite{chaturvediCentral}, its iterative nature makes it hard to implement in other models. In particular, because the points of dense balls are removed, this method appears very difficult to implement in the MPC model with fewer than $k$ rounds or  in the continual observation model.

The Local-DP algorithm of \cite{ChangG0M21} requires only histogram queries and uses them to build a private \emph{coreset} -- i.e., a set such that the cost of any $k$-means solution in the coreset is roughly the same as in the original dataset. The additive error for it is inherently $2^d$ (where $d$ can be assumed to be $O(\log k)$), and it seems hard to use this algorithm to achieve tight additive error.
Furthermore, to achieve a success probability of $1-\beta$, the dimension $d$ has to be $\log(k/\beta)$: therefore, the additive error $2^d$ reduces to $\poly(k/\beta)$, and it is not clear how to boost the success probability using a single round of communication.

The MPC model differs from the others (e.g. Local DP) as only the output needs to be private, not the communication. Here, the key is to find a summary of the data that fits on a single machine and allows for computing a solution. This summary does not need to be completely private, which is different from the algorithm for Local DP (and from our algorithm for continual observation). 
In particular, the MPC algorithms of \cite{Cohen-AddadEMNZ22} compute a summary consisting of weighted input points. This is clearly not private, and cannot be applied to Local DP or other models. Furthermore, the lowest additive error they achieve is a suboptimal $(k^{2.5} + k^{1.01}d) \polylog(n)$. 

We describe in slightly more detail the MPC algorithms of \cite{Cohen-AddadEMNZ22} to explain how ours can be seen as a simplification. Both algorithms for low approximation ratio or additive error work in two steps as follows. First, embed the metric into a quadtree: at each level, the space is partitioned by a grid centered at a random location. Then, the algorithm selects, at each level, the $k$ grid cells containing most points. Their centers provide an initial bi-criteria solution, with small additive error but multiplicative $\poly(d)$. To reduce this multiplicative error, the authors use a sampling technique from the coreset literature. Starting from this $\poly(d)$ approximation, they construct $\poly(k, \log n)$ many disjoint instances with the following guarantees. First, each instance has small size, such that it fits on a single machine, allowing the use of any private non-MPC algorithm. Second, as the instances are disjoint, taking the union of the solutions for each of them results in a good bi-criteria solution: it has too many centers, but a good cost. It is then standard to reduce the number of centers, as we describe in greater detail later.

Our algorithm is very similar to the first step, with the key difference being that at each level we allow overlap between the grid cells, instead of a partition. We show that this is directly an $O(1)$-approximation, which avoid the need of building the subinstances.

\textbf{Parallel work on Continual Observation.} Epasto, Mukherjee, and Zhong~\cite{concurrent} recently released similar work on private $k$-means under dynamic updates. 
Their results are incomparable with ours: their main focus was on designing an algorithm using low memory, which they manage to do in \emph{insertion-only} streams. 
For those streams, they can achieve a smaller additive error at the cost of a larger constant multiplicative error, see their Corollary 14.

\subsection{Technical lemma and Definitions}\label{app:def}
To work with powered distances, we will often use a modification of the triangle inequality:
\begin{lemma}[Triangle Inequality for Powers, see e.g. \cite{MakarychevMR19}]
\label{lem:weaktri}
Let $a,b,c$ be three arbitrary points in a metric space with distance function $\dist$ and let $z$ be a positive integer. Then for any $\varepsilon>0$
$$\dist(a,b)^z \leq (1+\varepsilon)^{z-1} \dist(a,c)^z + \left(\frac{1+\varepsilon}{\varepsilon}\right)^{z-1} \dist(b,c)^z $$ 
\end{lemma}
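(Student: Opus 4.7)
The plan is to reduce the lemma to a purely scalar inequality. Set $x := \dist(a,c)$ and $y := \dist(b,c)$. The standard triangle inequality for the metric $\dist$ gives $\dist(a,b) \leq x+y$, and since $z$ is a positive integer the map $t \mapsto t^z$ is monotone non-decreasing on $[0,\infty)$, so $\dist(a,b)^z \leq (x+y)^z$. It therefore suffices to prove the following real-variable claim: for all $x,y\geq 0$ and all $\varepsilon>0$,
\[
(x+y)^z \;\leq\; (1+\varepsilon)^{z-1}\, x^z + \left(\frac{1+\varepsilon}{\varepsilon}\right)^{z-1} y^z.
\]

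For the scalar inequality, I would apply Jensen's inequality using the convexity of $t \mapsto t^z$ (valid since $z\geq 1$). The key move is to choose weights $\lambda_1 := \frac{1}{1+\varepsilon}$ and $\lambda_2 := \frac{\varepsilon}{1+\varepsilon}$, which are positive and sum to $1$. Writing $x+y = \lambda_1 \cdot \frac{x}{\lambda_1} + \lambda_2 \cdot \frac{y}{\lambda_2}$ and applying convexity,
\[
(x+y)^z \;\leq\; \lambda_1 \left(\frac{x}{\lambda_1}\right)^{\!z} + \lambda_2 \left(\frac{y}{\lambda_2}\right)^{\!z} \;=\; \lambda_1^{1-z} x^z + \lambda_2^{1-z} y^z.
\]
Plugging in $\lambda_1^{1-z} = (1+\varepsilon)^{z-1}$ and $\lambda_2^{1-z} = \left((1+\varepsilon)/\varepsilon\right)^{z-1}$ yields exactly the claim, and composing with the previous reduction completes the proof.

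The only real obstacle is spotting the correct convex combination; once the weights are identified, everything is mechanical. A less elegant alternative would be to expand $(x+y)^z$ via the binomial theorem and bound each cross term $\binom{z}{j} x^{z-j} y^j$ via a weighted AM-GM with powers of $\varepsilon$, but that forces an awkward case analysis over $j$ and produces the same constant. The Jensen-based argument also makes the boundary behavior transparent: for $z=1$ both coefficients on the right-hand side equal $1$ and one recovers the usual triangle inequality, while sending $\varepsilon \to 0$ or $\varepsilon \to \infty$ recovers the familiar one-sided $2^{z-1}$-type relaxations used elsewhere in the clustering literature.
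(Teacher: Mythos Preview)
Your proof is correct. The paper does not actually prove this lemma; it is stated as a known result with a citation to \cite{MakarychevMR19}, so there is nothing to compare against, and your Jensen-with-weights argument is the standard (and clean) way to establish it.
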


\paragraph{Net Decomposition:}\label{app:net}
A $\eta$-net of $X$ is a subset $Z \subset X$ satisfying the following two properties:
\emph{Packing:} for all distinct $z,z'\in Z$ we have $\dist(z,z')>\eta$, and  \emph{Covering:} for all $x\in X$ we have $\dist(x,Z)\leq \eta$.
It is well known that the unit ball in $\R^d$ has a $\eta$-net 
for all $\eta > 0$ \cite{gupta}.
Standard net constructions allow to efficiently \emph{decode}: given any point $p \in B_d(0,1)$, one can find in time $\frac{2^{O(d)}}{r}$ all net points at distance at most $r$ of $p$. By the packing property of a $\eta$-net, there can be at most $\frac{2^{O(d)}}{r}$ such points, and therefore decoding is done efficiently. We refer to \cite{Har-PeledM06} for more details, and to the notion of \emph{Efficiently List-Decodable Covers} of \cite{ghaziTight}.

\subsection{About the Elbow Method}\label{ap:elbow}
The elbow method is a heuristic used to determine the ``right" number of clusters. The idea is that the curve of the $(k,z)$-clustering cost as a function of $k$ is decreasing, and typically admits an ``elbow" point, where the gain of adding one new cluster significantly decreases. The elbow method chooses this value of $k$ as the true number of clusters. 

To plot this curve, one therefore needs to compute the $(k,z)$-clustering cost for all value of $k$: \Cref{thm:mpWithError} precisely allows that, as for all $k'$ the set of centers $C_{k'}$ computed with \Cref{alg:mp1} is a good approximation of optimal $k'$-means solution. 
Provided a general summation algorithm, our analysis shows how to run privately the elbow method as a post-processing. Indeed,
we can evaluate the cost of all those solutions without leaking any privacy, using \Cref{lem:costVariance} shows that we can evaluate the cost of this solution using the structure of clusters and the general summation algorithm (as for boosting the probabilities). 
Lifting the result up in $\Rd$ can also be done as post-processing, using the result of the general summation algorithm.

\section{About Differential Privacy}

\subsection{Formal Definitions}\label{app:defPriv}
In this section, we define formally the different models of privacy considered. 

All of those use multisets, that are standardly defined as follows.
A \emph{multiset} $P$ of points in a \emph{universe} $X$ is a function $P:X \rightarrow \mathbb{N}$. We will use set notation and vocabulary for multisets. In particular for any function $f$ we will denote by $\sum_{p\in P} f(p):= \sum_{x\in X} P(x) \cdot f(x)$, and for any subset $Y$ of $X$, $P\cap Y$ is the multiset $x \mapsto P(x) \cdot \mathds{1}_{x \in Y}$.

We let $\eps > 0$ and $\delta \in [0, 1]$ be the privacy parameters.
\paragraph{Local Differential Privacy}: 
This model was formally defined in \cite{kasiviswanathan2011can}, although having old roots in the study of randomized response (see e.g. \cite{warner1965randomized, evfimievski2003limiting}). We follow here the presentation of \cite{dwork2014algorithmic}.
In the local model of privacy, there are $n$ users. User $i$ holds a datapoint $p_i$, and the server is interacting with users via \emph{local randomizers}: which are simply algorithms  taking as input a database of size 1. 

The server runs an algorithm that interacts with users via those local randomizers: this algorithm is $(\eps, \delta)$-DP in the local model if, for any user $i$, the mechanism outputting the results of all local-randomizer run by user $i$ is $(\eps, \delta)$-DP.

In this paper, we consider non-interactive setting, where the server interacts only once with each user.

\paragraph{Shuffle Differential Privacy}: 
This model is an attempt to provide the best of the two worlds between centralized and local DP: achieving the utility of centralized while preserving a stronger privacy as in local. 
This stronger privacy is obtained as the intermediate shuffling step can be implemented via secure cryptographic protocols, therefore strongly protecting individual's privacy.

In the shuffle model of privacy, there are $n$ users and one \emph{shuffler}. Each user holds a datapoint $p_i$, and the interaction goes in rounds. At each round, users run local-randomizers, and communicate the results to the shuffler. The shuffler releases the results, shuffled uniformly at randomized. The server then receives those messages and goes on with the computation.
The procedure is $(\eps, \delta)$-DP in the shuffle model when the set of output of the shuffler is $(\eps, \delta)$-DP.

This model was introduced by \cite{bittau2017prochlo}, and studied e.g. for aggregation \cite{balle2019privacy, ghazi2021differentially}, histograms and heavy hitters \cite{ghazi2021power}, or counting distinct elements \cite{balcer2021connecting}. Recently clustering \cite{ChangG0M21}.

\paragraph{Continual Observation Model}: 
Let $\sigma$ be a stream of updates to a multiset (the dataset): $\sigma(t)$ corresponds to either the addition of a new item to the dataset, the removal of one, or a ``no operation" time (where nothing happens to the dataset).  
Two streams are \textit{neighboring} if they differ by the addition of a single item at a single time step (and possibly its subsequent removal). 
This is the so-called \emph{event-level} privacy \cite{dwork2010differential}.
Let $\calM$ be a randomized mechanism that associates a stream $\sigma$ to an output $\calM(\sigma) \in \calX$.
For an $\eps \in \R^+$, $\calM$ is $(\eps, \delta)$-DP under continual observation if $\calM$ is $(\eps, \delta)$-DP. Formally, for any pair of neighboring streams $\sigma, \sigma'$, and any possible set of outcomes 
$\calS \subseteq \calX$,  it holds that 
$$\Pr\left[ \calM(\sigma) \in \calS\right]  \leq \exp(\eps) \Pr\left[ \calM(\sigma') \in \calS\right] + \delta.$$

\paragraph{The Massively Parellel Computation Model}
This model is not a privacy model per se, but a restricted model of computation (regardless of privacy). The input data is spread on many machines, each having sublinear memory (typically $n^\kappa$, for $\kappa \in (0, 1)$). The computation happens in round. 
At each round, the machines can send and receive messages one message from each machine, but the message length cannot exceed the memory $n^\kappa$. In the setting we consider, at the end of the computation each machine knows the solution and can output it. This model was formalized in \cite{KarloffSV10, GoodrichSZ11, BeameKS17} and is now a standard for analyzing parallel algorithms designed for large-scale environments.
The privacy constraint is that outputing this solution is $(\eps, \delta)$-DP. Note that there is no privacy constraints on how exactly the communication happens: we assume the exchanges between machines are trusted. This is different from the Local privacy model, where both the output and the exchanges have to be private.

\subsection{The Exponential Mechanism}

The \emph{exponential mechanism} is a standard tool in the literature, that allows the following: given a set of $n$ items with private value, select privately an item with value almost maximum. Formally, we have:

\begin{lemma}[Exponential Mechanism \cite{expmech}]\label{lem:exp_mechanism}
 Let $\mathcal{R}$ be some arbitrary range set, and let $f:\mathbb{N}^X \times \mathcal{R} \rightarrow \mathbb{R}$. Let $\Delta_f:= max_{r\in \mathcal{R}} max_{\|P-P'\|_1 = 1} |f(P,r)-f(P',r)|$ be the \emph{$\ell_1$-sensitivity} of $f$. The \emph{exponential mechanism} $\mathcal{M}_E(P,f,\varepsilon)$ outputs an element $r\in \mathcal{R}$ with probability proportional to $exp(\frac{\varepsilon\cdot f(P,r)}{2\cdot \Delta_f})$.
\begin{itemize}
    \item The exponential mechanism is $\varepsilon$-differentially private.
    \item For all $t>0$, with probability at least $ 1-e^{-t}$ we have: $$\max_{r\in \mathcal{R}} f(P,r) - f\lpar P,\mathcal{M}_E(P,f,\varepsilon)\rpar \leq \frac{2\cdot\Delta_f}{\varepsilon}\cdot \ln \lpar|\mathcal{R}|+t\rpar   $$
\end{itemize}
\end{lemma}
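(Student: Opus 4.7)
\textbf{Proof proposal for Lemma \ref{lem:exp_mechanism}.} The plan is the classical two‑part argument: a direct pointwise ratio calculation for privacy, followed by a union bound / denominator lower bound for utility. I will assume throughout that $\mathcal{R}$ is finite (the countable and continuous cases are analogous with densities replacing probabilities).

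\emph{Privacy.} Fix neighboring multisets $P, P'$ with $\|P - P'\|_1 = 1$ and any target $r \in \mathcal{R}$. Writing $p(P,r) := \exp\!\bigl(\tfrac{\varepsilon f(P,r)}{2\Delta_f}\bigr)$ and $Z(P) := \sum_{r' \in \mathcal{R}} p(P,r')$, we have
\begin{align*}
\frac{\Pr[\mathcal{M}_E(P,f,\varepsilon) = r]}{\Pr[\mathcal{M}_E(P',f,\varepsilon) = r]}
= \frac{p(P,r)}{p(P',r)} \cdot \frac{Z(P')}{Z(P)}.
\end{align*}
The first factor is $\exp\!\bigl(\tfrac{\varepsilon(f(P,r)-f(P',r))}{2\Delta_f}\bigr) \le e^{\varepsilon/2}$ by definition of $\Delta_f$. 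For the second factor, apply the same inequality termwise: every summand satisfies $p(P',r') \le e^{\varepsilon/2} p(P,r')$, so $Z(P')/Z(P) \le e^{\varepsilon/2}$. Multiplying gives the required $e^{\varepsilon}$ bound, and the same argument with $P, P'$ swapped gives the lower bound. Since this holds pointwise for every $r$, summing over any measurable subset of $\mathcal{R}$ yields $\varepsilon$-DP.

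\emph{Utility.} Let $F^\star := \max_{r \in \mathcal{R}} f(P,r)$, let $r^\star$ be an arbitrary maximizer, and fix a threshold $s > 0$ to be chosen. Let $B := \{r \in \mathcal{R} : f(P,r) < F^\star - s\}$ be the ``bad'' set. We want to upper bound $\Pr[\mathcal{M}_E \in B]$. The numerator of this probability is
\[
\sum_{r \in B} \exp\!\Bigl(\tfrac{\varepsilon f(P,r)}{2\Delta_f}\Bigr) \;\le\; |B| \cdot \exp\!\Bigl(\tfrac{\varepsilon(F^\star - s)}{2\Delta_f}\Bigr) \;\le\; |\mathcal{R}| \cdot \exp\!\Bigl(\tfrac{\varepsilon(F^\star - s)}{2\Delta_f}\Bigr),
\]
while the denominator is lower bounded by the single term $r = r^\star$, giving $Z(P) \ge \exp\!\bigl(\tfrac{\varepsilon F^\star}{2\Delta_f}\bigr)$. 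Dividing,
\[
\Pr[\mathcal{M}_E \in B] \;\le\; |\mathcal{R}| \cdot \exp\!\Bigl(-\tfrac{\varepsilon s}{2\Delta_f}\Bigr).
\]
Choosing $s = \tfrac{2\Delta_f}{\varepsilon}(\ln|\mathcal{R}| + t)$ makes the right-hand side equal to $e^{-t}$, which is the desired tail bound (in the form that matches the standard statement; the quantity $\ln(|\mathcal{R}|+t)$ in the lemma statement is at most $\ln|\mathcal{R}| + t$ for $t \ge 0$, so the bound as written follows).

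\emph{Main obstacle.} There is no real obstacle: both parts are short and elementary. The only subtlety worth spelling out carefully is the denominator argument in the privacy step — one must resist the temptation to bound each term of $Z(P')$ and $Z(P)$ separately with the full sensitivity (which would give $e^{\varepsilon}$ instead of $e^{\varepsilon/2}$ there, and a total factor of $e^{3\varepsilon/2}$). The normalization $2\Delta_f$ in the exponent is calibrated precisely so that splitting the $\varepsilon$ budget half between the chosen $r$ and half between the competing normalizations gives exactly $e^{\varepsilon}$.
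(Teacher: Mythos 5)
This lemma is cited from \cite{expmech} without a proof in the paper, so there is no internal proof to compare against; your proof is the standard textbook argument (the ratio decomposition for privacy, the numerator/denominator bound plus union over the bad set for utility), and both steps are carried out correctly.

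One remark on your closing sentence, though: you claim that since $\ln(|\mathcal{R}|+t) \le \ln|\mathcal{R}| + t$, ``the bound as written follows.'' This is backwards. Your argument establishes $\max_r f(P,r) - f(P,\mathcal{M}_E) \le \tfrac{2\Delta_f}{\varepsilon}(\ln|\mathcal{R}| + t)$ with probability $1-e^{-t}$, while the lemma as printed asserts the \emph{smaller} quantity $\tfrac{2\Delta_f}{\varepsilon}\ln(|\mathcal{R}|+t)$ as the upper bound. A weaker upper bound does not imply a stronger one, so your proof does not literally yield the displayed inequality. In fact the printed form appears to be a transcription typo in the paper: the standard statement (e.g.\ Dwork--Roth, Corollary 3.12) has the sum $\ln|\mathcal{R}| + t$ outside the logarithm, exactly as you derived. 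So your proof is correct and the lemma statement should read $\frac{2\Delta_f}{\varepsilon}\,\lpar \ln|\mathcal{R}|+t\rpar$; just be aware the inequality you invoked runs the wrong way to bridge the two forms.
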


\subsection{Private generalized bucketized vector summation algorithm}\label{app:GenHist}

Recall first the definition of general summation.
\bucketSum*

Note that we use a somewhat unusual error definition: The error is the maximum over all sets $S_i$ of the $\ell_2$-norm of the error within each set.

We now show how to use a private histogram algorithm to create a private algorithm for generalized summation. Private histogram is a well-studied problem, which is the general summation problem with $b=1$ and $\forall i, p, f_i(p) = 1$.
We will show first how to turn a generalized summation problem with $b>1$ into a generalized summation problem with $b=1$, and will show afterwards  how to solve the latter in different settings.

\subsubsection[From b=1 to b >1]{From $b=1$ to $b \geq 1$}
We start by presenting a simple lemma, that uses group privacy to reduce the general summation to the case where each item is in a single set. 

\begin{lemma}\label{lem:buildGenHist}
Let $\delta, \beta > 0$, and $1 \geq \eps > 0$.
    Assume we are given an $(\varepsilon, \delta)$-differentially private algorithm for generalized summation with $b=1$ that has, with probability at least $1-\beta$,  additive error $\eta(m, |P|, \eps, \delta, \beta)$ and runs in time $T(m, |P|, \eps, \delta)$, where $m$ is the total number of sets and $|P|$ is the size of the input.
Then there exists  an $(\varepsilon, \delta)$-differentially private algorithm for generalized summation   with $b>1$,  that has, with probability at least $1-\beta$, additive error $\eta(m, b|P|, \frac{\eps}{b}, \frac{\delta}{3b}, \beta)$ and runs in time $T(m, b|P|, \frac{\eps}{b}, \frac{\delta}{3b}  , \beta)$.
    \end{lemma}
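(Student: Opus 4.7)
My plan is to reduce the $b \geq 1$ case to the $b=1$ case by a standard \emph{point-duplication} construction, and then pay a factor of $b$ in the privacy parameters through group privacy. Concretely, given an instance $(P, \{S_i\}_{i=1}^m, \{f_i\}_{i=1}^m)$ over universe $X$ with maximum frequency $b$, I would work in the enlarged universe $X' := X \times [m]$ and define the transformed multiset
\[
    P' := \bigl\{(p,i) : p \in P,\ i \in [m],\ p \in S_i\bigr\},
\]
new sets $S'_i := \{(p,i) : p \in S_i\}$, and new functions $f'_i(p,j) := f_i(p)$ if $j=i$ and $0$ otherwise. By construction each element of $X'$ lies in exactly one $S'_i$, so the transformed instance has frequency $1$; moreover $|P'| \leq b|P|$ and $\sum_{(q,j) \in P' \cap S'_i} f'_i(q,j) = \sum_{p \in P \cap S_i} f_i(p)$ for every $i$. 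The algorithm I propose is then to run the assumed $b=1$ subroutine on the transformed instance with privacy budget $(\eps/b,\ \delta/(3b))$ and return the resulting vector (which is already a valid answer for the original instance, by the identity above).

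For utility, the subroutine guarantees additive error at most $\eta(m, b|P|, \eps/b, \delta/(3b), \beta)$ with probability at least $1-\beta$, and this transfers to the original instance without loss by the identity. For privacy, two neighboring original inputs $P \sim Q$ (differing in one point of $X$) induce transformed inputs $P', Q'$ that differ in at most $b$ elements of $X'$, i.e.\ they are $b$-group-neighbors. Applying the standard group-privacy theorem (e.g.\ \cite{dwork2014algorithmic}) to an $(\eps/b, \delta/(3b))$-DP mechanism at group size $b$ yields an $\bigl(\eps,\ \tfrac{e^{\eps}-1}{e^{\eps/b}-1}\cdot \tfrac{\delta}{3b}\bigr)$-DP guarantee on the original input. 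Using $e^{\eps/b}-1 \geq \eps/b$ and, by the hypothesis $\eps \leq 1$, $e^\eps - 1 \leq \eps e^\eps \leq \eps e \leq 3\eps$, the blow-up factor is at most $3b$, so the second parameter is at most $\delta$, as required.

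The main (mild) obstacle is the $\delta$-accounting in the group-privacy step: the specific scaling $\delta/(3b)$ is tuned precisely to absorb both the multiplicative $b$ and the $e^{(b-1)\eps/b}$ factor produced by $b$-fold group privacy, and this is exactly where the hypothesis $\eps \leq 1$ is used. The running-time bound is then immediate: constructing $P'$ takes $O(b|P|)$ time and one invocation of the subroutine on an input of size $b|P|$ with parameters $(\eps/b, \delta/(3b))$ gives the claimed $T(m, b|P|, \eps/b, \delta/(3b))$.
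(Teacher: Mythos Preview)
Your proposal is correct and follows essentially the same duplication-plus-group-privacy argument as the paper. The only minor difference is that the paper pads each original point with $b - |\{i : p \in S_i\}|$ dummy (all-zero) copies so the transformed input has size \emph{exactly} $b|P|$, which lets one quote the error bound $\eta(m, b|P|, \ldots)$ without any implicit monotonicity assumption on $\eta$; your version has $|P'| \le b|P|$, so strictly speaking you are using that $\eta$ is non-decreasing in the input size (which is harmless in practice and assumed elsewhere in the paper).
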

\begin{proof}
    We transform the input to the generalized summation problem with  $b>1$ to a generalized summation problem with $b=1$ by creating a new input as follows: 
    for each input data $x$  in the original input, $x$ is duplicated $|\{i:x \in S_i\}|$ times, and each copy is added to the new input as part of a different single set $S_i$, followed by $b - |\{i:x \in S_i\}|$ all-zero inputs. Thus each data item in the original input is replaced by exactly $b$ data items in the new input.
    
    As a consequence the size of the new input is exactly $b|P|$, and each item belongs to a single set.
    Running the generalized summation algorithm  for $b=1$ with parameters $\eps', \delta', \beta$, it is $(\eps',\delta')$-differentially private and has, with probability at least
    $1-\beta$, an additive error of $\eta(m, b|P|, \eps', \delta', \beta)$ and runs in time $T(m, b|P|, \eps', \delta', \beta)$.

    This is, however, not directly private with respect to the original input, as two neighboring inputs for the original problem (which, by definition differ by at most one item) might differ in up to  $b$ input items in the new problem. By group privacy, the algorithm is, thus, $(b\eps', b \exp((b-1)\eps') \delta')$-DP. Choosing $\eps' = \eps/b$ and $\delta' = \delta/(3b)$ yields an algorithm that is $(\eps, \delta)$-DP, as desired (where we used $\eps \leq 1$ to bound $\exp(\eps((b-1)/b)) \leq 3$).
\end{proof}

\subsubsection{Generalized  summation under continual observation}
We now show how to build a general summation algorithm that is private under continual observation. Recall that there are $T$ time steps and a stream $\sigma$ of updates is given as input one by one such that $\sigma(t)$ corresponds to either the addition, resp.,~removal of an item of $X$ to resp.~ from the dataset $P$ or to a ``no operation'', where the dataset remains unchanged. Two neighboring input streams differ in the operation in at most one time step.

\begin{restatable}{lemma}{dpCountNet}
\label{lem:histoContinual}
There exists an algorithm for generalized bucketized vector summation that  is $(\eps, \delta)$-DP under continual observation, and has  with  probability $1-\beta$ additive error (simultaneously over all time steps) of $O\lpar b\sqrt{D} \cdot \eps^{-1} \log(DT)  \sqrt{\log (mDT/\beta) \log(b\log(T)/\delta)}\rpar $. 

When $\delta = 0$, the additive error (simultaneously over all time steps) is, with probability $1-\beta$,
$O(bD \eps^{-1} \log^2 T \log (m/\beta))$.
\end{restatable}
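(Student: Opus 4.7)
The overall plan is to build the algorithm in two stages. First, I would invoke Lemma~\ref{lem:buildGenHist} to reduce to the special case $b=1$ at privacy level $(\eps',\delta') = (\eps/b,\delta/(3b))$ (or $\eps'=\eps/b$, $\delta'=0$ in the pure-DP case), losing a factor of $b$ through group privacy. Here one must also check that the replication scheme of the reduction is compatible with event-level adjacency in a stream: an original update at time $t$ becomes $b$ consecutive updates in the transformed stream, so two event-level-neighboring original streams become $b$-step-neighboring transformed streams, and group privacy applies exactly as in the static case. After the reduction every data item belongs to exactly one set $S_i$, so the $m$ counters $v_1,\dots,v_m$ can be maintained as a single continual-release problem on an $mD$-dimensional stream whose $t$-th increment has $\ell_2$-norm at most $\|f_i(p)\|_2\le 1$ and is supported in the single block $i\in[m]$ indexed by the set containing the current item.

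The core step is then to apply the Dwork--Naor--Pitassi--Rothblum / Chan--Shi--Song binary-tree mechanism to this $mD$-dimensional stream over a tree of depth $\lceil\log T\rceil$. An event-level neighboring pair of streams differs at one update, which modifies exactly one node per tree level, and each modification has $\ell_2$-norm at most $1$; hence the joint $\ell_2$-sensitivity of all $O(mDT)$ node values is $\sqrt{\log T}$. Attaching independent Gaussian noise of coordinate-standard-deviation $\sigma = \Theta(\sqrt{\log T\cdot\log(1/\delta')}/\eps')$ to every node then makes the full transcript of noisy nodes $(\eps',\delta')$-DP by the Gaussian mechanism, so every post-processed release (including the estimates at all $T$ time steps) is $(\eps',\delta')$-DP as well. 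For the pure case I would replace the Gaussian by Laplace noise of scale $\Theta(\sqrt D\,\log T/\eps')$ per coordinate per node, using $\|f_i(p)\|_1\le\sqrt D\|f_i(p)\|_2\le\sqrt D$ to bound the $\ell_1$-sensitivity of the tree by $\sqrt D\,\log T$.

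For the utility, at any time $t$ and set $i$ the released estimate is the sum of $O(\log T)$ noisy node values along the canonical dyadic cover of $[1,t]$. In the Gaussian case each coordinate of the noise is Gaussian with standard deviation $O(\log T\sqrt{\log(1/\delta')}/\eps')$; a $\chi^2$-style tail bound of the form $\|X\|_2\le s(\sqrt D+\sqrt{2\log(1/\beta')})$ applied with failure probability $\beta'=\beta/(mT)$ and then union-bounded over the $mT$ (set, time) pairs yields $\ell_2$-error $O(\sqrt D\log T\cdot\eps'^{-1}\sqrt{\log(1/\delta')\log(mDT/\beta)})$. Substituting $\eps'=\eps/b,\delta'=\delta/(3b)$ and absorbing the resulting logs into the advanced-composition form $\log(b\log T/\delta)$ produces the stated bound. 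The Laplace variant is parallel: a Chernoff bound for a sum of $\log T$ independent Laplace $D$-vectors of scale $\sqrt D\log T/\eps'$, union-bounded over the $mT$ pairs, gives the claimed $O(bD\log^2 T\log(m/\beta)/\eps)$ bound.

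The main obstacle I expect is bookkeeping the sensitivity across the $m$ sets without paying a factor of $\sqrt m$ (or $m$): a naive $m$-fold composition with per-set budget $\eps/m$ would ruin the bound. The saving is precisely that $b=1$ forces an update to perturb a single counter, so the joint $\ell_2$-sensitivity of the whole tree is $\sqrt{\log T}$ rather than $\sqrt{m\log T}$, and $m$ enters only through the union bound on the error. A secondary point is to use the sharp Gaussian $\ell_2$-tail bound rather than a coordinate-wise union bound over $D$, to avoid a spurious $\sqrt{\log D}$ factor; and to take some care in the Laplace case with the $\ell_1$ vs. $\ell_2$ conversion so that only one factor of $\sqrt D$ is paid in the sensitivity.
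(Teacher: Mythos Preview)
Your proposal is correct and follows the same core idea as the paper: the binary-tree mechanism with Gaussian (respectively Laplace) noise, with the dependence on $b$ paid through a group-privacy-style argument. The organization differs in two minor ways. First, you invoke Lemma~\ref{lem:buildGenHist} to reduce to $b=1$ up front and then treat the $m$ counters as a single $mD$-dimensional stream; the paper instead runs $m$ separate $D$-dimensional continual-counting mechanisms in parallel and handles the factor $b$ directly via a parallel-composition claim (Claim~\ref{lem:parallelCounting}), without ever passing through the $b=1$ reduction. Second, you bound privacy via the joint $\ell_2$-sensitivity $\sqrt{\log T}$ of the whole tree and a single application of the Gaussian mechanism, whereas the paper uses basic composition over the $\log T$ tree levels (each made $(\eps/\log T,\delta/\log T)$-DP). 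Your packaging is arguably cleaner and your $\chi^2$ tail bound is slightly sharper than the paper's coordinate-wise $\ell_\infty$-to-$\ell_2$ conversion, but both routes land comfortably inside the stated $O(\cdot)$ bounds.
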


\begin{proof}
In our setting, at each time step $t$ the algorithm needs to output an $m$-dimensional vector $v$ whose $i$-th entry 
$v_i$ is a $D$-dimensional vector that approximates $\sum_{p \in P \cap S_i} f_i(p)$,
where $P$ is the current dataset.
We first describe and analyze the algorithm for one set $S_i$, i.e., the case $m=1$, and then extend it to $m>1$.

{\bf Case 1: $m=1$.}
This is the case where only one set with a corresponding function exists, which we call $S_1$, resp.~$f_1$ for simplicity of notation.
For $\delta > 0$, we will use the Gaussian mechanism, and, therefore, we need to bound the $\ell_2$-sensitivity of $\sum_{p \in P \cap S_1} f_1(p)$.
As two neighboring input streams differ in the operation in at most one time step,  the datasets of two neighboring input streams differ in at most two points.
As the function $f_1$ returns a value from the unit $\ell_2$-ball $B_D(0,1)$, the $\ell_2$-sensitivity of the function $\sum_{p \in P \cap S_1} f_1(p)$ is $L_2 = 2$.

For $\delta = 0$, we will use the Laplace mechanism and need to bound the $\ell_1$-sensitivity  of the function $\sum_{p \in P \cap S_1} f_1(p)$: The Cauchy-Schwarz inequality shows that it is $2\sqrt{D}$.

{\bf Case 1.1: $D=1$.}
In the special case $D=1$ (i.e., $f_1$ returnsa scalar) and $m=1$ (i.e., there is a single set), the problem is called \textit{continual counting}, which can be solved by the so-called binary-tree mechanism, studied by \cite{dwork2010differential,ChanSS11,FichtenbergerHO21}.\footnote{Note that the proof given in these references only shows the
claim for numbers from $\{-1, +1\}$, but the same proof goes through verbatim for real numbers from $[-1,  1]$.}  It works as follows: there is a data structure storing, for each multiple of a power of $2$ (namely $j \cdot 2^\ell$), the value of the sum between time steps $j \cdot 2^\ell$ and $(j+1) 2^\ell - 1$, plus an additive noise drawn from the Laplace distribution with parameter $2 \log(T)/\eps$. 
Let $U_\ell(j)$  be this noisy sum.
Then, to compute the sum of all values before time $t$, the algorithm computes the binary decomposition of $t$, say $t = \sum_j 2^{\ell_j}$ with $\ell_j$ increasing, and outputs $\sum_j U_{\ell_j}\lpar \sum_{j' > j} 2^{\ell_{j'} - \ell_j}\rpar$. One can verify that each time $t' \in [1, t]$ appears exactly once in this sum: more visually, the algorithm maintains a binary tree, where the leafs are the input sequence and each node contains a noisy sum of all leaves below it, and decomposes the segment $(0, t]$ as a disjoint union of nodes.

The privacy of this mechanism stems from the fact that each input value appears in $\log T$ many $U_\ell(j)$ -- one for each $\ell$ with $2^\ell \leq T$.  The $\ell_1$-sensitivity of continual counting is at most $2$, as each input value is in $[-1, 1]$: therefore, standard properties of the Laplace mechanism show that the whole process is $\eps$-DP. 
More generally, 
if $f_1$ has  $\ell _1$-sensitivity $L_1$, the mechanism adds to this sum  Laplace noise with parameter $\Theta(L_1 \log(T)/\varepsilon)$ and remains
$\eps$-DP.

For $(\eps, \delta)$-DP we will replace the Laplace mechanism by the Gaussian. As a consequence we need to show privacy and accuracy for the sum of up to $\log T$ Gaussian variables.
Theorem A.1 in~\cite{dwork2014algorithmic} shows that adding Gaussian noise ${\cal N}(0,\sigma^2)$ to the value of a function $f$  gives an $(\eps,\delta)$-DP mechanism if $\sigma \ge \sqrt{2 \ln(1.25/\delta)}  L_2/\eps$,  where $L_2$ is the $\ell_2$-sensitivity of the function computed. Theorem B.1 in~\cite{dwork2014algorithmic} shows that the composition of $k \ge 2$
$(\eps,\delta)$-DP mechanisms is $(k\eps,k\delta)$-DP.
Note that the privacy loss of outputting  $\sum_j U_{\ell_j}\lpar \sum_{j' > j} 2^{\ell_{j'} - \ell_j}\rpar$ is no larger than outputting each $U_{\ell_j}\lpar \sum_{j' > j} 2^{\ell_{j'} - \ell_j} \rpar$ value individually, which corresponds to the composition of up to $\log T$ many Gaussian mechanisms.
If each Gaussian mechanism is $((\eps/\log T), (\delta/\log T))$-DP, then the resulting mechanism is $(\eps,\delta)$-DP.
Thus to guarantee $(\eps,\delta)$-DP, we choose the noise for each Gaussian variable to be
${\cal N}(0,\sigma'^2)$ with $\sigma' = \sqrt{2 \ln ((1.25 \log T) /\delta)} L_2 (\log T)/\eps = \Theta(L_2 \sqrt{\log(\log (T)/\delta)} \cdot \log(T) /\eps)$.

For the utility, \cite{jain2021price} observed that for $\eps$-DP, with probability $1-\beta'$, the bound on the additive error for any {\em individual} time step is 
$O(L_1 \eps^{-1} \log(T) \sqrt{\log(1/\beta')} ( \sqrt{\log (T)} + \sqrt{\log(1/\beta')}))$.
This is because to output the sum over time steps between $1$ and $t$, the noise computed is $O(\log T)$ Laplace noise with parameter $\Theta(L_1 \log(T) \eps^{-1})$. 
The concentration bound of random Laplacian variable from~\cite{ChanSS11} bounds this sum of noise as desired.
Rescaling $\beta'$ by $T$, a union-bound over all time steps ensures that with probability $1-\beta'$, over all time steps simultaneously, the maximum error is 
$O\lpar L_1 \eps^{-1} \log(T)^2 \log(1/\beta')\rpar$.
For $(\eps, \delta)$-DP, we can get an improved error bound (see \cite{jain2021price}): with probability $1-\beta'$, simultaneously over all time steps the error is at most 
$O\lpar L_2  \eps^{-1} \log(T)  \sqrt{\log (T/\beta') \log(\log(T)/\delta)}\rpar$.

{\bf Case 1.2: $D>1$.}
We extend this result to the case $D > 1$ as follows: instead of scalar values, the inputs and outputs are ($D$-dimensional) vectors in $B_D(0,1)$ and, in the same way as for $D=1$, the noisy sum of a suitable subset of these vectors is stored at the nodes of the binary tree mechanism. We call this a {\em $D$-dimensional continual counting algorithm}.

\textbf{Case $\delta = 0$:}
Recall that the $\ell_1$-sensitivity $L_1$ of $\sum_{p \in P \cap S_i} f_i(p)$ is $O(\sqrt{D})$ in this case.
Thus, to ensure $\eps$-DP, we use at each node of the binary tree $D$-dimensional Laplace noise with parameter $\Theta(\sqrt{D} \log(T) \eps^{-1})$ for each coordinate and the standard privacy proof applies. 
This gives, with probability at least $1-\beta'$ simultaneously for all time steps, an additive $\ell_{\infty}$-error over all $D$ dimensions of 
$O\lpar \sqrt{D} \cdot \eps^{-1} \log(T)^2 \log(1/\beta')\rpar$. This then implies, with probability at least $1-\beta'$ simultaneously for all time steps,  an additive $\ell_2$-error over the $D$ dimensions  of
$O\lpar D \cdot \eps^{-1} \log(T)^2 \log(1/\beta')\rpar$. 

\textbf{Case $\delta > 0$}: To ensure $(\eps, \delta)$-DP, we use  a $D$-dimensional Gaussian noise vector with standard deviation $O\lpar \sqrt{\log(\log(T)/\delta)} \log(T)/\eps\rpar$: since the $\ell_2$-sensitivity of $\sum_{p \in P \cap S_i} f_i(p)$  is $O(1)$, this noise ensures privacy. Furthermore,  with probability at least $1-\beta'$ simultaneously for all time steps, it
results in an additive $\ell_{\infty}$-error over all $D$ dimensions for {\em all} time steps simultaneously of $O\lpar \eps^{-1} \log(DT)  \sqrt{\log (DT/\beta') \log(\log(T)/\delta)}\rpar$.
This in turn implies, with probability at least $1-\beta'$ simultaneously for all time steps,  an additive $\ell_{2}$-error over all $D$ dimensions  of
$O\lpar \sqrt{D} \cdot \eps^{-1} \log(DT)  \sqrt{\log (DT/\beta') \log(\log(T)/\delta)}\rpar$.

{\bf Case 2: $m >1$.}
Now we extend this result to the histogram case: for this, it is enough to run $m$ $D$-dimensional continual counting algorithm in parallel, one for each set. Since each element is part of at most $b$ sets, the difference between the input of two neighboring streams is only on $b$ different executions of continual counting mechanism: therefore, this composition of algorithms is $(b \eps, b\delta)$-DP. 
For completeness, we provide a detailed proof in \Cref{lem:parallelCounting}.
 We therefore rescale $\eps$ and $\delta$ by $b$, to get an $\eps$-DP algorithm, with error at any individual time step and for any fixed set  of
$O\lpar b D \cdot \eps^{-1} \log(T)^2 \log(1/\beta')\rpar$ with probability $1-\beta'$. 
We set $\beta' = \beta/m$ to do a union-bound for all sets, and conclude the bound of the lemma. 
The same settings of parameters conclude the bound for $(\eps, \delta)$-DP.

\end{proof}

\begin{claim}\label{lem:parallelCounting}
    Consider the following procedure for generalized summation under continual observation. Let $\calA_1, ..., \calA_{m}$ be $D$-dimensional continual counting algorithms  that use independent randomness. The procedure creates $m$ streams of input, one for each set $S_i$, and feeds it to $\calA_i$. If each $\calA_i$ is $\eps$-DP and each input element is part of $b$ sets, then the procedure is $b \eps$-DP. When $\delta > 0$, the procedure is $(b\eps, b\delta)$-DP.
\end{claim}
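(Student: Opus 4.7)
The plan is to reduce to basic composition by showing that, between two neighboring input streams $\sigma, \sigma'$ (differing in a single update at a single time step), at most $b$ of the $m$ sub-streams differ, while the remaining $m-b$ sub-streams are identical. Since the $\calA_i$ use independent randomness, the joint output factors as a product of marginals, so the identical sub-streams contribute nothing to the privacy loss, and only the (at most) $b$ differing sub-streams matter.

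First, I would formalize the reduction. Fix neighboring streams $\sigma, \sigma'$ that differ only by the insertion (or deletion) of a single element $x \in X$ at some time $t$. By definition of the procedure, for each $i \in [m]$, the sub-stream fed to $\calA_i$ is obtained from $\sigma$ by keeping only the updates involving elements of $S_i$ (and replacing the rest by ``no-op''). Since $x$ belongs to at most $b$ of the sets $S_1,\dots,S_m$, the sub-streams $\sigma_i, \sigma'_i$ fed to $\calA_i$ coincide for all but at most $b$ indices $i$; for each of those at most $b$ indices, $\sigma_i$ and $\sigma'_i$ are neighboring in the continual-observation sense.

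Next, I would use independence of the randomness. For any measurable set $\calS \subseteq \calX_1 \times \cdots \times \calX_m$ of joint transcripts, we have
\[
\Pr[(\calA_1(\sigma_1),\dots,\calA_m(\sigma_m)) \in \calS] = \int \prod_{i=1}^{m} d\mu_{\calA_i(\sigma_i)},
\]
and likewise for $\sigma'$. Let $I \subseteq [m]$ be the set (of size at most $b$) of indices where $\sigma_i \neq \sigma'_i$. For $i \notin I$, the marginals are identical. For $i \in I$, by the $\eps$-DP guarantee of $\calA_i$, the ratio of marginal densities is pointwise bounded by $e^{\eps}$ (pure DP case) or, integrated against any event, satisfies the $(\eps,\delta)$-DP inequality. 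Applying the standard basic-composition argument (i.e.\ Theorem~3.16 of \cite{dwork2014algorithmic} for pure DP, and Theorem~3.20 for approximate DP) across the at most $b$ differing coordinates yields that the joint mechanism is $(b\eps, b\delta)$-DP on the pair $(\sigma,\sigma')$.

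The main (and only) subtlety is the approximate-DP case: one must be careful that the $\delta$ terms also compose linearly across the $b$ differing coordinates. This follows by applying the standard hybrid argument, which replaces one marginal $\calA_i(\sigma_i)$ by $\calA_i(\sigma'_i)$ at a time and incurs an additive $\delta$ (and a multiplicative $e^{\eps}$) per swap; after at most $b$ swaps, the total loss is $(b\eps, b\delta)$. Since the above holds for any neighboring pair $(\sigma,\sigma')$, the whole procedure is $(b\eps,b\delta)$-DP, which is exactly the claim.
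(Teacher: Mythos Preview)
Your proposal is correct and follows essentially the same approach as the paper: both arguments observe that only at most $b$ of the sub-streams differ between neighboring inputs, use independence to factor the joint distribution, and then invoke basic composition across those $b$ coordinates (the paper spells out the $\delta>0$ hybrid explicitly for $b=2$ and inducts, while you cite the standard theorem, but the content is identical). One minor nit: the relevant result in \cite{dwork2014algorithmic} for linear $(b\eps,b\delta)$ accumulation is basic composition (Theorem~3.16 / Corollary~B.2), not Theorem~3.20, which is advanced composition.
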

\begin{proof}
    On an input stream $\calD$, the algorithm creates one stream per set, say $\calD_{i}$ for set $S_i$, and feeds it to a continual counting algorithm $\calA_{i}$. Let $\calA$ be the whole mechanism that outputs $(\calA_{1}(\calD_{1}), ..., \calA_{m}(\calD_{m}))$.
For a dataset $\calD$ and any potential solution $s = (s_{1}, ..., s_{m})$ we have
$\Pr[\calA(\calD) = s] = \prod_{i=1}^m \prod_{j=1}^D\Pr[\calA_{i}(\calD_{i}) = s_{i}]$,
because the random choices of the $\calA_{i}$ are independent.

Consider two neighboring streams $\calD, \calD'$ : by assumption, only $b$ of the $\calD_i$ differ from $\calD'_i$, and those at a single time step. 

\textbf{Case $\delta = 0$.}
We analyze the ratio of probability 
$$\frac{\Pr[\calA(\calD) = s]}{\Pr[\calA(\calD') = s]}= \prod_{i=1}^m  \frac{\Pr[\calA_{i}(\calD_{i}) = s_{i}]}{\Pr[\calA_{i}(\calD'_{i}) = s_{i} ]}.$$

The ratio inside the product is equal to 1 for at least $(m-b)$ many $\calD_{i}$ (the ones where $\calD_{i} = \calD'_{i}$), and since the $\calA_{i}$ are $\eps$-DP, the ratio is at most $\exp(\eps)$ for the remaining $b$ of them.
Therefore,
$$\frac{\Pr[\calA(\calD) = s]}{\Pr[\calA(\calD') = s]}\leq \exp(b \eps),$$
which concludes the proof when $\delta = 0$.

\textbf{Case $\delta > 0$.} 
For simplicity, we reorder the sets such that only the streams $\calD_1, ..., \calD_b$ differ. 
The standard proof of composition (see e.g. Corollary B.2 in \cite{dwork2014algorithmic}) gives that 
$$\Pr[(\calA_1(\calD_1), ..., \calA_b(\calD_b)) = (s_1, ..., s_b)] \leq \exp(b\eps) \cdot \Pr[(\calA_1(\calD'_1), ..., \calA_b(\calD'_b)) = (s_1, ..., s_b)] + b\delta.$$
Indeed, for $b=2$ we have (using that the $\calA_i$ are independent, which simplifies compared to \cite{dwork2014algorithmic}):
\begin{align*}
    \Pr[(\calA_1(\calD_1), \calA_2(\calD_2)) = (s_1, s_2)]  &= \Pr[\calA_1(\calD_1) = s_1] \cdot \Pr[\calA_2(\calD_2) = s_2]\\
    &\leq \Pr[\calA_1(\calD_1) = s_1] \cdot  \min\big( 1, \exp(\eps)\Pr[\calA_2(\calD'_2) = s_2] + \delta \big)\\
    &\leq \delta + \Pr[\calA_1(\calD_1) = s_1] \cdot  \min\big( 1, \exp(\eps)\Pr[\calA_2(\calD'_2) = s_2]\big)\\
    &\leq \delta + \big( \exp(\eps) \Pr[\calA_1(\calD'_1) = s_1]  + \delta \big) \cdot  \min \big( 1, \exp(\eps)\Pr[\calA_2(\calD'_2) = s_2]\big)\\
    &\leq 2\delta + \exp(2\eps)\Pr[\calA_1(\calD'_1) = s_1]\Pr[\calA_2(\calD'_2) = s_2]\\
    &\leq 2\delta + \exp(2\eps)\Pr[(\calA_1(\calD'_1), \calA_2(\calD'_2)) = (s_1, s_2)].
\end{align*}
Generalizing to larger $b$ is a simple induction.

For any  $i \geq b+1$, we have $\Pr[\calA_i(\calD_i) = s_i] = \Pr[\calA_i(\calD'_i) = s_i]$, and furthermore all $\calA_i$ use independent randomness. Therefore,
$\Pr[\calA(\calD) = s] \leq \exp(b \eps) \Pr[\calA(\calD') = s] + b\delta$. 

Using the same argument as for $\delta=0$ to consider streams $\calD_i$ that do not differ concludes the proof.
\end{proof}

\section[Missing Proof and Statement for Section 2]{Missing Proof and Statement for \Cref{sec:coating}}
\subsection{From Bi-criteria to Proper Solutions}
\label{sec:bicriteria}
One of the idea used repeatedly in literature is to relax the constraint on $k$: instead of looking for a set of exactly $k$ centers with good cost, one can first look for $k'$ centers (with $k' \geq k$). Given $k'$ centers, one can compute a private version of the dataset, where each center is weighted by the number of points in its cluster (with appropriate noise). This is private, and therefore one can use any clustering algorithm to find $k$ centers. Furthermore, triangle inequality ensures that the quality of the solution computed will be good, provided that the $k'$ centers are good as well.

Formally, we have: 
\begin{lemma}\label{lem:bicriteria}
    Let $S$ be a set of $k'$ centers, with $\cost(P, S) \leq M \opt_{k,z} + A$. 
    For each center $s_i \in S$, let $S_i$ be its cluster, and $n_i$ be such that $|n_i - |S_i|| \leq e$.
    Let $C$ be a set of $k$ centers that is a $M'$ approximation on the dataset containing $n_i$ copies of each $s_i$. 
    Then, for any $ \alpha \geq 0$ $C$ is a solution with multiplicative approximation $\lpar 2(1+1/\alpha)^{z-1} M + (1+\alpha)^{z-1}\rpar M'$ and additive error $(1+1/\alpha)^{z-1} A + O(k' e)$.
\end{lemma}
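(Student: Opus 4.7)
The plan is to relate $\cost(P, C)$ to the cost of $C$ on the weighted representative dataset $Q = \{(s_i, n_i)\}$, exploit that $C$ is $M'$-approximate on $Q$ to replace $C$ by the optimum $C^*$ for $P$, and finally compare back to $\opt_{k,z} = \cost(P, C^*)$. Each back-and-forth between $P$ and $Q$ costs one application of the weak triangle inequality (\Cref{lem:weaktri}). The $n_i$ vs.\ $|S_i|$ discrepancy is handled uniformly by the bound $d(s_i, \cdot)^z \le 2^z$, which is valid because all points lie in $B_d(0,1)$; it contributes only $O(k'e)$ additive error at every step.

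First, for every $p \in S_i$, let $c_i \in C$ be the point of $C$ closest to $s_i$, so that $d(p,C)^z \le d(p,c_i)^z$. Applying \Cref{lem:weaktri} with parameter $1/\alpha$ (placing the ``small'' coefficient on the $d(s_i,C)$ side) gives
$d(p,c_i)^z \le (1+1/\alpha)^{z-1} d(p,s_i)^z + (1+\alpha)^{z-1} d(s_i,C)^z$.
Summing over $p$ yields $\cost(P,C) \le (1+1/\alpha)^{z-1}\cost(P,S) + (1+\alpha)^{z-1}\sum_i |S_i|\, d(s_i,C)^z$. Replacing $|S_i|$ by $n_i$ costs $O(k'e)$, and then the $M'$-approximation of $C$ on $Q$ gives $\sum_i n_i d(s_i,C)^z \le M'\cost(Q,C^*)$ since $C^*$ is a feasible $k$-clustering of $Q$.

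Now bound $\cost(Q, C^*) \le \sum_i |S_i|\, d(s_i,C^*)^z + O(k'e)$, again using the closeness of $n_i$ to $|S_i|$. A second application of \Cref{lem:weaktri}, this time to $d(s_i,C^*)^z \le d(s_i, c^*(p))^z$ for the center $c^*(p)$ of $C^*$ closest to a given $p\in S_i$, averaged over $p \in S_i$ and summed over $i$, gives
$\sum_i |S_i|\,d(s_i,C^*)^z \le (1+1/\alpha)^{z-1}\cost(P,S) + (1+\alpha)^{z-1}\opt_{k,z}.$
Chaining these inequalities, substituting $\cost(P,S) \le M\opt_{k,z} + A$, and using $M'\ge 1$ to merge residual $M$-only terms into $M'M$ terms produces the advertised multiplicative factor $[2(1+1/\alpha)^{z-1}M + (1+\alpha)^{z-1}]M'$ on $\opt_{k,z}$ and the additive term $(1+1/\alpha)^{z-1}A + O(k'e)$.

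The main obstacle is the coefficient bookkeeping across the two weak-triangle applications. Each produces a pair $((1+\alpha)^{z-1},(1+1/\alpha)^{z-1})$, and composing them naively yields cross-terms of the form $(1+\alpha)^{z-1}(1+1/\alpha)^{z-1} M' M$ on $\opt_{k,z}$. Achieving the clean stated form requires orienting each weak triangle so that the \emph{large} coefficient $(1+1/\alpha)^{z-1}$ multiplies $\cost(P,S)$ (whose $\opt$-contribution is anyway scaled by $M$), while the \emph{small} coefficient $(1+\alpha)^{z-1}$ multiplies the genuine $\opt_{k,z}$ term coming from $\cost(P,C^*)$, then absorbing any leftover factor $(1+\alpha)^{z-1}$ on the cross-terms by doubling the $(1+1/\alpha)^{z-1}M$ coefficient (which is the source of the factor $2$ in the statement).
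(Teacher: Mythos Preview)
Your proposal is correct and follows essentially the same approach as the paper: two applications of the weak triangle inequality (\Cref{lem:weaktri}) with parameter $1/\alpha$, once to pass from $\cost(P,C)$ to $\sum_i |S_i|\,\dist(s_i,C)^z$, and once to bound $\sum_i |S_i|\,\dist(s_i,\opt)^z$ by $(1+1/\alpha)^{z-1}\cost(P,S)+(1+\alpha)^{z-1}\opt_{k,z}$, with the $M'$-approximation guarantee sandwiched in between and the $|S_i|\leftrightarrow n_i$ swaps each costing $O(k'e)$. Your final paragraph correctly identifies that the cross-terms do not literally collapse to the stated constants without absorbing stray $(1+\alpha)^{z-1}$ and $M'$ factors; the paper's own proof glosses over this in exactly the same way, and the discrepancy is immaterial for the two downstream applications ($\alpha=1$ with $M=O(1)$, and $\alpha$ small with $M$ tiny).
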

To help parse the bounds of this lemma, we note here the two applications we will make: when $M = O(1)$, we will use $\alpha = 1$ to get a multiplicative approximation $O(1)$ and additive error $O(A + k' e)$; and when is much more tiny (e.g. $M = {\alpha'}^z$ for some $\alpha' < 1$) we will use $\alpha = \alpha' / z$ to get multiplicative approximation $1 + O(\alpha')$ and additive error $O(A/{\alpha'}^{z-1} + k'e)$.
\begin{proof}
    Let $\opt$ be the optimal solution for $(k,z)$-clustering on $P$.
    Fix a point $p \in P$, in cluster $S_i$. We have, by the generalized triangle inequality (\Cref{lem:weaktri}):
    \begin{align*}
        \cost(s_i, \opt) \leq (1+1/\alpha)^{z-1} \cost(p, \opt) + (1+\alpha)^{z-1} \cost(p, s_i).
    \end{align*}
    Since $C$ is a $M'$ approximation, we have
        \begin{align}
        \notag
        \sum_{i=1}^{k'} n_i \cost(s_i, \opt) &\leq \sum_{i=1}^{k'} |S_i| \cost(s_i, \opt) + k' e \\
        \notag
        &= \sum_{i=1}^{k'} \sum_{p\in S_i} \cost(s_i, \opt) + k' e \\
        \notag
        &\leq \sum_{i=1}^{k'} \sum_{p\in S_i} (1+1/\alpha)^{z-1}\cost(p, s_i) + (1+\alpha)^{z-1}\cost(p, \opt) + k' e \\
        \notag
        &= (1+1/\alpha)^{z-1}\cost(P, S) + (1+\alpha)^{z-1}\opt_{k,z} + k' e \\
        \label{eq:bicritToOpt}
        &\leq (1+1/\alpha)^{z-1}\cost(P, S) + (1+\alpha)^{z-1}\opt_{k,z} + k' e 
    \end{align}
Then, we can bound the cost of the full dataset $P$ in solution $C$:
\begin{align*}
    \cost(P, C) &= \sum_{p \in P} \cost(p, C)\\
    &\leq \sum_{i=1}^{k'} \sum_{p\in S_i} (1+1/\alpha)^{z-1} \cost(p, s_i) + (1+\alpha)^{z-1} \cost(s_i, C)\\
    &= (1+1/\alpha)^{z-1}\cost(P, S) + (1+\alpha)^{z-1} \sum_{i=1}^{k'} |S_i| \cost(s_i, C)\\
    &\leq (1+1/\alpha)^{z-1}\cost(P, S) + (1+\alpha)^{z-1} \sum_{i=1}^{k'} n_i \cost(s_i, C) + (1+\alpha)^{z-1} k' e \\
    &\leq (1+1/\alpha)^{z-1}\cost(P, S) + (1+\alpha)^{z-1} M' \sum_{i=1}^{k'} n_i \cost(s_i, \opt) + (1+\alpha)^{z-1} k' e 
\end{align*}
Using $\cost(P, S) \leq M\opt_{k,z}+A$ and \Cref{eq:bicritToOpt}, we conclude
\begin{equation*}
    \cost(P,C) \leq 
    \lpar 2(1+1/\alpha)^{z-1} M + (1+\alpha)^{z-1}\rpar M' \cdot \opt_{k,z} + (1+1/\alpha)^{z-1} A + O(k' e) .
\end{equation*}
\end{proof}

This implies in particular \Cref{lem:boostApprox} since \cite{dpHD} showed that, for $k' = k \alpha^{-O(d)} \log(n/\alpha)$, the optimal cost for $(k',z)$-clustering is at most $\alpha \opt_k$, namely an $\alpha$-fraction of the optimal cost with $k$ centers. Thus, we can chose $S$ in \Cref{lem:bicriteria} to be an $M$-approximation to $(k', z)$-clustering, which yields \Cref{lem:boostApprox}.

\subsection{Reducing the dimension}\label{ap:dimred}

In the next theorem, a clustering  is viewed as a partition $P_1, ..., P_k$, and the center serving part $P_i$ is the optimal center $\mu_z(P_i)$ for $(1,z)$-clustering on $P_i$.  Therefore, the clustering cost of a partition is $\sum_{i} \sum_{p \in P_i} \dist(p, \mu_z(P_i))^z$. The next theorem states that one can randomly project into $O(\log k)$ dimension while preserving the clustering cost of any partition:

\begin{lemma}[see Theorem 1.3 in~\cite{MakarychevMR19}, also Becchetti et al.~\cite{BecchettiBC0S19}]\label{lem:dimremFull}
    Fix some $1/4 \geq \alpha > 0$ and $1 > \beta > 0$. There exists a family of random projection $\pi : \Rd \rightarrow \hRd$ for some  $\hat d = O(z^4 \cdot \log (k/\beta) \alpha^{-2})$ such that, for any multiset $P \in \Rd$, it holds with probability $1-\beta$ that for any partition of $P$ into $k$ parts $P_1, ..., P_k$,
    \[\sum_{j=1}^k \cost\big(\pi(P_j), \mu_z(\pi(P_j))\big) \in \lpar 1 \pm \alpha\rpar \cdot (\frac{\hat d}{d})^{z/2} \cdot \sum_{j=1}^k \cost(P_j, \mu_z(P_j)).\]
\end{lemma}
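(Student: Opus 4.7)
My plan is to use a rescaled Gaussian projection $\pi(x) = Gx$, where $G \in \mathbb{R}^{\hat d \times d}$ has i.i.d.\ $\mathcal{N}(0, 1/d)$ entries, so that $\mathbb{E}[\|\pi(x)\|^2] = (\hat d/d)\|x\|^2$. I would first establish the bound for $k$-means ($z=2$), where the algebraic structure is cleanest, and then reduce general $z$ to that case (which accounts for the $z^4$ factor in $\hat d$).

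For $z=2$, the key tool is the pair-sum identity $\cost(P_j, \mu(P_j)) = \tfrac{1}{2|P_j|}\sum_{p,q\in P_j}\|p-q\|^2$, together with the fact that linear maps commute with centroids, $\mu(\pi(P_j)) = \pi(\mu(P_j))$. Thus the quantity we want to control is $\sum_j \tfrac{1}{2|P_j|}\sum_{p,q \in P_j}\|G(p-q)\|^2$, a quadratic form in the entries of $G$. Concentration for a single partition is immediate from Hanson--Wright or from standard $\chi^2$-tail bounds, giving deviation probability $\exp(-\Omega(\hat d \alpha^2))$. The real difficulty is uniformity: naively union-bounding over all $k^n$ partitions would force $\hat d = \Omega(n\log k)$. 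To avoid this, I would follow the Makarychev--Makarychev--Razenshteyn strategy of exploiting the fact that any partition is determined by $k$ centers, rewriting the $k$-means cost as $\sum_p \|p\|^2 - \sum_j n_j\|\mu_j\|^2$, and then showing via a $\tfrac{\alpha}{k}$-net over unit vectors in the relevant low-dimensional subspaces (of size $k^{O(1/\alpha^2)}$) that $\|G v\|^2$ concentrates for all directions $v$ that could ever arise as a normalized cluster mean. Setting $\hat d = O(\log(k/\beta)/\alpha^2)$ makes the union bound succeed with probability $1-\beta$.

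For general $z$, the identity above fails because the optimal 1-center for the $z$-cost is not the mean. I would handle this by combining two ingredients: \Cref{lem:meanMed} (which controls the $z$-cost at the mean by $2^z$ times the $z$-cost at the optimal center) and the power triangle inequality \Cref{lem:weaktri} applied with a small slack parameter; this reduces the task to controlling $\|G(p-c)\|^z$ for all points $p$ and all candidate optimal centers $c$, then bounding the deviation of these $z$-th powers via moment computations on Gaussian norms (losing factors polynomial in $z$, which accumulate to the $z^4$ in $\hat d$). The main obstacle, both here and for $z=2$, is the uniformity-over-partitions problem: the whole argument hinges on the observation that although there are exponentially many partitions, the cost functional depends only on the $k$ cluster means, so netting over $O(k)$-tuples of centers rather than over partitions is what keeps $\hat d$ only logarithmic in $k$.
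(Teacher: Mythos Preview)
The paper does not prove this lemma itself; it is quoted from \cite{MakarychevMR19} (and \cite{BecchettiBC0S19}) and used as a black box, so there is no in-paper argument to compare against.

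Your outline for $z=2$ is in the right spirit and broadly aligns with the cited references: the pair-sum identity, linearity of the centroid under $G$, and the observation that the cost depends only on the $k$ cluster means (so one nets over $k$-tuples of centers rather than over the $k^n$ partitions) are the correct high-level ingredients.

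For general $z$, however, there is a genuine gap. Invoking \Cref{lem:meanMed} gives only
\[
\cost(P_j,\mu) \le 2^z \cost(P_j,\mu_z),
\]
a multiplicative $2^z$ factor that cannot be absorbed into a $(1\pm\alpha)$ sandwich no matter how small you take the slack in \Cref{lem:weaktri}. Your route would therefore yield at best a $2^{\Theta(z)}(1+\alpha)$ distortion of the partition cost under projection, not the claimed $(1\pm\alpha)$. Note that the paper \emph{does} eventually pay a $2^z$ factor for general $(k,z)$-clustering, but that loss appears in \Cref{lem:liftingViaHist} when lifting the solution back to $\Rd$ via the mean; the dimension-reduction lemma itself must be tight to $(1\pm\alpha)$ for the stated bounds in \Cref{table:results} to go through. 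The argument in \cite{MakarychevMR19} for $z\neq 2$ is considerably more delicate than a reduction to the mean: one has to control directly how the optimal $(1,z)$-center moves under the projection and exploit the local flatness of the $z$-cost functional near its minimizer; the $z^4$ in $\hat d$ arises from that analysis, not from moment bounds on $\|Gv\|^z$ alone.
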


\liftingViaHist*
\begin{proof}

We start with the $k$-means case. With probability $1-\beta/2$, \Cref{lem:dimremFull} ensures that the cost of any partition is preserved up to a factor $(1\pm \alpha)$, then any $(a,b)$-approximation for $\pi(P)$ is an $((1+ \alpha)a, b)$-approximation for $P$. Moreover with probability at least $1-\beta/2$,  all the points of projected and rescaled dataset $\pi(P)$ lie within the ball $B_{\hd}(0, \sqrt{2\log(n/\beta)})$.
We condition on both of those events: by our assumptions, clustering  each $P_i$ to its optimal center is then a $(M, \sqrt{2\log(n/\beta)}^z \cdot A(k,\hd))$-approximation.

Note that the optimal $1$-mean center for a cluster $P_i$ is its average $\mu(P_i) := \frac{\sum_{p\in P_i} p}{|P_i|}$. We now bound the additive error incurred by using $n_i$ and $\sumEst_i$ instead $|P_i|$ and $\sum_{p\in P_i} p$. By assumption, we have $n_i = |P_i| \pm e$ and $\lnor \sumEst_i - \sum_{p\in P_i} p\rnor \leq e$.

We let $\bc_i = \frac{\sumEst_i}{n_i}$ be our estimate on $\mu(P_i)$. We first prove that $\bc_i$  is close to $\mu(P_i)$ and then bound the desired cost difference: For this, we use that both the numerator and denominator are well approximated. 

First, if a cluster contains fewer than $e$ points, then regardless of the position of $\bc_i$ its $1$-means cost is upper-bounded by $e$, and this is accounted for by the additive error.
    Otherwise, we can show that $\bc_i$ is very close to $\mu(P_i)$ as follows:

    \begin{eqnarray*}
        \lnor \frac{\sumEst_i}{n_i} - \mu(P_i)\rnor_2 
        &=& \lnor  \lpar \frac{\sumEst_i}{n_i} - \frac{\sumEst_i}{|P_i|}\rpar + \lpar \frac{\sumEst_i}{|P_i|} - \frac{|P_i|\mu(P_i)}{|P_i|}\rpar \rnor_2\\
        &\leq& \lnor \sumEst_i\rnor_2 \lpar \frac{1}{|P_i| - e} - \frac{1}{|P_i|}\rpar + \frac{\lnor \sumEst_i - |P_i|\mu(P_i)\rnor_2}{|P_i|}\\
    \end{eqnarray*}
    Using the bound on $\sumEst_i$, and $(1-x)^{-1} \leq 1 + 2x$ for $x \leq 1/2$, this is upper-bounded by:
    \begin{eqnarray*}
         \lnor \frac{\sumEst_i}{n_i} - \mu(P_i)\rnor_2 &\leq& \lnor \sumEst_i\rnor_2 \lpar \frac {1} {|P_i|}   + \frac{2e}{|P_i|^2} - \frac{1}{|P_i|}\rpar + \frac{e }{|P_i|}
    \end{eqnarray*}

Now, note that $\lnor \sumEst_i\rnor_2 \leq \lnor \sum_{p \in P_i} p \rnor_2 + e \leq |P_i| + e$. Therefore, we get the upper bound  
    \begin{align}
        \nonumber
         \lnor \frac{\sumEst_i}{n_i} - \mu(P_i)\rnor_2 &\leq  \frac{2e (|P_i| + e)}{|P_i|^2} + \frac{e}{|P_i|}\\
        \label{eq:errMu}
         &\leq  \frac{5e}{|P_i|},
    \end{align}
    where the last line follows from $|P_i| \geq e$.

    Finally, we get, using  the folklore property that for any set $P$ and point $x$, $\cost(P, x) = \cost(P, \mu(P)) + |P| \lnor \mu(P)-x\rnor_2 ^2$ (see e.g. \cite{InabaKI94} proof of Theorem 2):
    \begin{align*}
        \cost(P_i, \bc_i) &= \cost(P_i, \mu(P_i)) + |P_i| \cdot \lnor \bc_i - \mu(P_i)\rnor_2^2\\
        &\leq \cost(P_i, \mu(P_i)) +\frac{O(e^2) }{|P_i|} \\
        &\leq \cost(P_i, \mu(P_i)) + O(e),
    \end{align*}
    where the last line uses again $|P_i| \geq e$.
    Summing over all clusters concludes the proof: the additive error in addition to $\polylog(n) \cdot A(\hd)$ is $k\cdot e$, and the multiplicative approximation is $(1+\alpha)M$ (from the dimension reduction guarantee).

    For $(k,z)$-clustering, we use the fact that $\mu(P_i)$ is a $2^z$-approximation for $(1, z)$-clustering on $P_i$, as shown in \Cref{lem:meanMed} below. Therefore, $\cost(P_i, \mu(P_i)) \leq 2^z\cost(P_i, m_i)$, where $m_i$ is the optimal $(1,z)$-center for $P_i$: using \Cref{eq:errMu} and Generalized triangle inequality \Cref{lem:weaktri} therefore yield
        \begin{align*}
        \cost(P_i, \bc_i) &\leq (1+\alpha)\cost(P_i, \mu(P_i)) + (4z/\alpha)^{z-1}|P_i| \cdot \lnor \bc_i - \mu(P_i)\rnor_2^z\\
        &\leq 2^z (1+\alpha)\cost(P_i, m_i) + O(e) .
    \end{align*}
    Summing over all clusters, this shows that the $\bc_i$ yields an multiplicative approximation $2^z(1+\alpha)M$ and additive error $\polylog(n) \cdot A(\hd) +k\cdot e$.
\end{proof}

\meanMed*
\begin{proof}
Our goal is to bound the distance between $\mu_z$ and $\mu$. 
    Consider the line $\ell$ going through $\mu_z$ and $\mu$. 
    Let $P_\ell$ be the multiset $P$ projected onto $\ell$, and $p_\ell$ the projection of any point $p$.
    Since the mean is linear, $\mu$ is also the mean of $P_\ell$, i.e., $\mu = \frac{\sum_{p_\ell \in P_\ell} p_\ell}{|P|}$.
    Therefore, we have $\mu - \mu_z = \frac{\sum_{p_\ell \in P_\ell} p_\ell-\mu_z}{|P|}$, and $|P| \|\mu - \mu_z\| \leq \sum_{p_\ell \in P_\ell} \|p_\ell-\mu_z\|$: since projection only decrease the norm, $\|p_\ell - \mu_z\| \leq \|p-\mu_z\|$, and the sum is at most $ \sum_{p \in P} \|p-\mu_z\|$. 

   This means that $\|\mu - \mu_z\| \leq \frac{\sum_{p \in P} \|p-\mu_z\|}{|P|}$
   and Jensen's inequality yields $\|\mu - \mu_z\|^z \leq \frac{\sum_{p \in P} \|p-\mu_z\|^z}{|P|}$.
   Therefore,
    $\sum_{p\in P} \|p-\mu\|^z \leq 2^{z-1}\sum_{p\in P} \lpar \|p-\mu_z\|^z +  \|\mu - \mu_z\|^z\rpar \leq 2^z\sum_{p\in P} \|p-\mu_z\|^z$. This concludes the lemma.
\end{proof}

\subsection{Boosting the probabilities}
\costVariance*
\begin{proof}
    Using properties of squared distances, we have the following:
    \begin{align*}
        \cost(E, \mu(E)) &= \sum_{p\in E} \|p - \mu(E)\|_2^2 
        = \sum_{p\in E} \sum_{i = 1}^d |p_i - \mu(E)_i|^2\\
        &= \sum_{p\in E} \sum_{i = 1}^d p_i^2 + \mu(E)_i^2 - 2 p_i \mu(E)_i\\
        &= \sum_{p\in E} \|p\|_2^2 + \|\mu(E)\|_2^2  - 2 \sum_{i=1}^d p_i \mu(E)_i.
    \end{align*}
    Here, we note that $\mu(E) = \frac{1}{|E|}\sum_{y \in E} y$: therefore, $ \sum_{p\in E} \|\mu(E)\|_2^2 = |E| \cdot \|\mu(E)\|_2^2 
 =  \frac{\lnor \sum_{y \in E} y \rnor_2 ^2}{|E|}$. Furthermore, $\mu(E)_i = \frac{1}{|E|}\sum_{y \in E} y_i$, and so:
    \begin{align*}
        \sum_{p\in E} \sum_{i=1}^d p_i \mu(E)_i &= \sum_{i=1}^d \lpar \frac{1}{|E|}\sum_{y \in E} y_i\rpar \lpar \sum_{p\in E} p_i \rpar \\
        &= \frac{1}{|E|}\sum_{i=1}^d \lpar \sum_{p \in E} p_i\rpar^2\\
        &=  \frac{\lnor \sum_{p \in E} p\rnor_2^2}{|E|}.
    \end{align*}

    Combining those equations concludes the lemma:
        \begin{align*}
        \cost(E, \mu(E)) 
        &= \sum_{p\in E} \|p\|_2^2 + \|\mu(E)\|_2^2  - 2 \sum_{i=1}^d p_i \mu(E)_i\\
        &= \sum_{p\in E} \|p\|_2^2 - \frac{\lnor \sum_{p \in E} p\rnor_2^2}{|E|}.\qedhere
    \end{align*}
\end{proof}

We now turn to the proof of \Cref{cor:boostProba}.
\boostProba*
\begin{proof}
    To boost the probability from $2/3$ to $1-\beta$, one merely needs to run $\log(1/\beta)$ independent copies of the algorithm, estimate the cost and output the solution with cheapest cost. If the estimate of the cost for each cluster is within an additive $O(e )$ of the true cost, Chernoff bounds ensure the desired guarantees. Therefore, we only need to show that one can estimate the cost of a clustering, provided the estimate values of the lemma.

By assumption, we have $n_i = |P_i| \pm e, \|\sumEst_i\|_2 = \lnor \sum_{p \in P_i} p\rnor_2 \pm e $ and $\sumNorm_i = \sum_{p \in P_i} \lnor p\rnor_2 \pm e $.

    First, in the case where $n_i \leq 2e$, then we know $|P_i|\leq e$ and the cost of the cluster is at most $e $: therefore, the estimation $e $ is fine enough.

    Otherwise, we have from \Cref{lem:costVariance} that the cost of the cluster is 
    $\sum_{p\in P_i} \|p\|_2^2 - \frac{\lnor \sum_{p \in P_i} p\rnor_2^2}{|P_i|}$. The first term is estimated by $\sumNorm_i$, up to an additive $e $. For the second one, we have (dropping for simplicity the subscript $p\in P_i$):
    \begin{align*}
        \left| \frac{\lnor \sum p\rnor_2^2}{|P_i|} - \frac{\|\sumEst_i\|_2^2}{n_i}\right| &\leq \frac{\left|\lnor \sum p\rnor_2^2 - \|\sumEst_i\|_2^2\right|}{|P_i|} +  \|\sumEst_i\|_2^2 \left| \frac{1} {|P_i|} - \frac{1}{n_i}\right|.
    \end{align*}
    We bound the first term using the guarantees on $\|\sumEst_i\|_2$: $\|\sumEst_i\|_2^2 = \lpar \lnor \sum p\rnor_2 \pm e \rpar^2 = \lnor \sum p\rnor_2^2 \pm (2|P_i| e  + e^2)$, where we used in the last inequality that $\lnor \sum p\rnor_2 \leq |P_i|$.
    Therefore, using $|P_i|\geq e$, the first term is $\frac{\lnor \sum_{p \in P_i} p\rnor_2^2}{|P_i|} \pm O(e )$.
    
    For the second term, we first bound $\left| \frac{1} {|P_i|} - \frac{1}{n_i}\right|$: using standard approximation of $(1+x)^{-1}$, this is at most $\frac{2e}{|P_i|^2}$. Now, the term $\|\sumEst_i\|_2^2$ can be bounded as follows:
    $ \|\sumEst_i\|_2^2 \leq 2\lnor \sum p\rnor_2 + 2e^2 \leq 2(|P_i|^2+e^2) $. 
    Using again $|P_i|\geq e$, we conclude 
    \begin{align*}
         \|\sumEst_i\|_2^2 \left| \frac{1} {|P_i|} - \frac{1}{n_i}\right| &\leq 2(|P_i|^2+e^2)  \cdot \frac{2e}{|P_i|^2}\\
         &= O(e ).
    \end{align*}

    Combining all those guarantees, we get: 
    \begin{align*}
        \sumNorm_i + \frac{\|\sumEst_i\|_2}{n_i} &= \sum_{p\in P_i} \|p\|_2^2 - \frac{\lnor \sum_{p \in P_i} p\rnor_2^2}{|P_i|} \pm O(e).
    \end{align*}
    This estimation of the cost concludes the proof.
\end{proof}

\section[Missing Proof of Section 3]{Missing Proof of \Cref{sec:MPalgo}}\label{sec:appMP}
\subsection[The Original Algorithm of Mettu and Plaxton]{The Original Algorithm of \cite{MettuP00}}
To emphasize that our algorithmic modifications are light, we state the original algorithm of \cite{MettuP00} in \Cref{alg:mp1}, in order to allow comparison with our \Cref{alg:mp}.

\begin{algorithm}[H]
\caption{$\mettuP(P)$}
\label{alg:mp1}
\begin{algorithmic}[1]
\STATE{Let $C_0 = \emptyset$}
\FOR{$i$ from $0$ to $n-1$}
\STATE{Let $\sigma_i$ denote the singleton sequence $(B)$ where $B$ is a maximum value ball in $\{\isolated(x,C_i)| x\in P\setminus C_i\}$}
\WHILE{The last element of $\sigma_i$ has more than one child}
\STATE{Select a maximum value child of the last element of $\sigma_i$, and append it to $\sigma_i$.}
\ENDWHILE
\STATE{$c_{i+1}$ is the center of the last ball of $\sigma_i$}
\STATE{$C_{i+1} = C_i \cup \{c_{i+1}\}$}
\ENDFOR
\end{algorithmic}
\end{algorithm}

\subsection[The Proof of Theorem 3.4]{The Proof of \Cref{thm:mpWithError}}
This section is devoted to the proof of Theorem~\ref{thm:mpWithError}. All the proof follow the original proof of \cite{MettuP00}, adapted to our setting. We restate for convenience both the algorithm and theorem we seek to proof:

\MetPla*

\mp*

In what follows, we consider a fixed solution $\Gamma$ with $k$ centers. For the purpose of this analysis, we assume that the algorithm stops after selecting $k$ centers and outputs $C_k$. Our objective is to compare the cost of the solution $C_k$ to the cost of $\Gamma$, and show that $\cost(P, C_k) \leq O(1) \cost(P, \Gamma) + O(k\theta)$. Fixing $\Gamma$ to be the optimal $(k,z)$-clustering solution will conclude.

To reuse the first part of this section in a different context later (\Cref{app:MPC}), we introduce a new parameter $c_\calA \geq 100$. We denote $\calA(c_\calA,C_k)$ as the set of balls of the form $B(x,2^{-i}) \in \calB_i$ such that for all centers $c \in C_k$, $\dist(x, c) > c_\calA \cdot 2^{-i}$. When $c_\calA = 100$, this corresponds exactly to the set of available balls at the end of the algorithm. We will state the next few definitions and lemmas with $c_\calA$, however in order to prove \Cref{thm:mpWithError}, we will only use the case $c_\calA = 100$.

For a center $\gamma \in \Gamma$, we let $P_\gamma$ denote $\gamma$'s cluster, which consists of all points in $P$ assigned to $\gamma$ in the solution $\Gamma$. We analyze the cost of each cluster independently as follows.

We split $\Gamma$ into two parts: $\Gamma_0$ is the set of $\gamma\in \Gamma$ such that there exists a center of $C_k$ at distance less than $(c_\calA + 1) \cdot 2^{-\lceil \log n\rceil}$ from $\gamma$, and let $\Gamma_1 = \Gamma - \Gamma_0$. Centers in $\Gamma_0$ are very close to centers in $C_k$, and the cost of their cluster is therefore almost the same in $C_k$.  The bulk of the work is to show that clusters in $\Gamma_1$ are also well approximated.

To analyze the cost of those clusters, we will consider the largest ball centered at $\gamma$ that is still available at the end of the algorithm. The next lemma shows that this is well defined; we show later that points outside of this ball have roughly the same cost in $C_k$ and $\Gamma$, as they are far from every center in both case; and most of the work is spent on showing points inside of this ball (called the inner cluster) have also a cheap cost: we will relate their cost to the value of the ball. 
The key observation for this is that, since the ball is still available, the algorithm selected balls with larger value. As the value is a measure of how expensive a region is, we can show that the ball selected as larger cost in $\Gamma$ than the inner cluster in $C_k$. A careful analysis concludes from this that the cost of $C_k$ is cheap relative to the one of $\Gamma$. 

\begin{lemma}\label{lem:defzgamma}
    For any $\gamma \in \Gamma_1$, the following set is not empty:
$$ \{ B = B(x,r) \in \calB : \dist(x,\gamma) \leq r/2 \text{ and } B \in \calA(c_\calA,C_k) \} $$
\end{lemma}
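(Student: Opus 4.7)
The plan is to produce an explicit ball witnessing non-emptiness by picking a net point from the finest level of the decomposition. Intuitively, the definition of $\Gamma_1$ reserves for every such $\gamma$ a ``safety margin'' of $(c_\calA+1)\cdot 2^{-\lceil \log n\rceil}$ around it, free of any center in $C_k$. The finest net $\calN_{\lceil \log n\rceil}$ has spacing $2^{-\lceil \log n\rceil - 1}$, which is strictly smaller than this margin, so any net point close to $\gamma$ inherits a margin slightly smaller than but comparable to that of $\gamma$ and is still far from $C_k$ at the appropriate scale.

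Concretely, set $i=\lceil \log n\rceil$ and $r=2^{-i}$. By the covering property of the $r/2$-net $\calN_i$, there exists $x\in \calN_i$ with $\dist(x,\gamma)\leq r/2$, so the ball $B(x,r)\in \calB_i\subseteq \calB$ already satisfies the first required condition $\dist(x,\gamma)\leq r/2$. It remains to check that $B(x,r)\in \calA(c_\calA,C_k)$, i.e., that $\dist(x,c)>c_\calA\cdot r$ for every $c\in C_k$. Because $\gamma\in\Gamma_1$, we have $\dist(\gamma,c)>(c_\calA+1)\cdot 2^{-\lceil \log n\rceil}=(c_\calA+1)r$ for all $c\in C_k$, and the triangle inequality gives
\[
\dist(x,c)\;\geq\;\dist(\gamma,c)-\dist(x,\gamma)\;>\;(c_\calA+1)r-\tfrac{r}{2}\;=\;(c_\calA+\tfrac12)r\;>\;c_\calA\cdot r,
\]
which establishes the second condition.

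There is essentially no obstacle here: the lemma is a direct consequence of the covering property of the net combined with the defining distance gap of $\Gamma_1$. The only subtlety is to make sure the ``$+1$'' slack in the definition of $\Gamma_1$ is chosen to absorb the $r/2$ loss from covering at the finest level, which is precisely what happens. In particular, no choice other than the finest level $i=\lceil \log n\rceil$ is needed; coarser levels could also work when $\gamma$ is even farther from $C_k$, but the finest level suffices uniformly for every $\gamma\in\Gamma_1$.
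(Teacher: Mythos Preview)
Your proof is correct and follows essentially the same approach as the paper: pick a net point $x\in\calN_{\lceil\log n\rceil}$ within $r/2$ of $\gamma$ via the covering property, then use the triangle inequality together with the $(c_\calA+1)$ gap defining $\Gamma_1$ to conclude $B(x,r)\in\calA(c_\calA,C_k)$. The paper phrases the second step as a proof by contradiction rather than a direct estimate, but the computation is identical.
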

\begin{proof}
    Let $x$ be the closest point to $\gamma$ in $\calN_{\lceil \log n\rceil}$. By the covering property of nets, we know that $\dist(x,\gamma) \leq 2^{-\lceil \log n\rceil}/2$. We show that the ball $B(x, 2^{-\lceil \log n\rceil})$ is in $\calA(c_\calA,C_k)$. For this, assume by contradiction that it is not the case. There exists a center $c \in C_k$ such that $\dist(x, c) \leq c_\calA  \cdot 2^{-\lceil \log n\rceil}$.

    Using the triangle inequality, we get $\dist(\gamma,c) \leq \dist(\gamma,x) + \dist(x,c) \leq 2^{-\lceil \log n\rceil}/2 + c_\calA \cdot 2^{-\lceil \log n\rceil} \leq (c_\calA + 1) \cdot 2^{-\lceil \log n\rceil}$.
    Therefore $\gamma$ is in the set $\Gamma_0$, contradicting the assumption that it is in $\Gamma_1$. This contradiction completes the proof.
\end{proof}

For any $\gamma \in \Gamma_1$, let $B_\gamma = B(x_\gamma, 2^{-l_\gamma})\in \calB$ be a ball of maximum radius in the set defined in Lemma~\ref{lem:defzgamma}. We split $P_\gamma$ into two parts:  $In(P_\gamma):= P_\gamma \cap B_\gamma$ and $Out(P_\gamma) = P_\gamma - In(P_\gamma)$. 

Intuitively, points in $Out(P_\gamma)$ are about the same distance to $\gamma$ than to a center in $C_k$, which allows to easily bound their cost. On the other hand, $In(P_\gamma)$ consists of points from the cluster $P_\gamma$ that are much closer to $\gamma$ than to any selected center. We can relate the cost of $In(P_\gamma)$ for the solution $C_k$ to the value of $B_\gamma$, as done in the following lemma.

\begin{lemma}\label{lem:inAndout}
For all $\gamma \in \Gamma_1$, we have:
    \begin{align}
        &\cost(In(P_\gamma),C_k)  \leq (16 c_\calA +24 )^z\cdot (\cost(In(P_\gamma), \Gamma) + \val(B_\gamma))\label{lem:In}\\
        &\cost(Out(P_\gamma), C_k) \leq ( 4c_\calA +3 )^z \cdot  \cost(Out(P_\gamma), \Gamma). \label{lem:Out}
    \end{align}
And for all $\gamma \in \Gamma_0$, we have:
\begin{equation}
    \cost(P_\gamma, C_k) \leq 2^z\cdot (\cost(P_\gamma, \Gamma)+ |P_\gamma| \cdot (( c_\calA +1 )/n)^z). \label{lem:notDefined} 
\end{equation}
\end{lemma}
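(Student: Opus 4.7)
The three inequalities decouple naturally. Inequality \eqref{lem:notDefined} for $\gamma \in \Gamma_0$ is immediate from the definition plus a power-triangle inequality. For inequalities \eqref{lem:In} and \eqref{lem:Out} concerning $\gamma \in \Gamma_1$, the common backbone is to exhibit a nearby center $c \in C_k$ by exploiting the \emph{maximality} of $B_\gamma$ among available balls that ``hit'' $\gamma$. Once such a $c$ is found, the argument branches: for $Out(P_\gamma)$ a straight triangle inequality suffices, whereas $In(P_\gamma)$ requires a more delicate split that brings $\val(B_\gamma)$ into play; this will be the main obstacle.

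\textbf{Inequalities \eqref{lem:notDefined} and \eqref{lem:Out}.} For \eqref{lem:notDefined}, the definition of $\Gamma_0$ gives $c \in C_k$ with $\dist(\gamma,c) \leq (c_\calA+1)\cdot 2^{-\lceil \log n\rceil} \leq (c_\calA+1)/n$. For each $p \in P_\gamma$ the power triangle inequality (\Cref{lem:weaktri} with $\varepsilon=1$) yields $\dist(p,c)^z \leq 2^{z}\bigl(\dist(p,\gamma)^z + \dist(\gamma,c)^z\bigr)$, and summing finishes the bound.
For \eqref{lem:Out} write $r = 2^{-l_\gamma}$. The key step is to use that $l_\gamma$ is the minimum level at which a ball centered near $\gamma$ is still available. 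By the covering property pick $x \in \calN_{l_\gamma-1}$ with $\dist(x,\gamma) \leq 2^{-(l_\gamma-1)}/2 = r$. Then $B(x,2r)$ belongs to the set from \Cref{lem:defzgamma} (it has larger radius and still hits $\gamma$), so by maximality of $B_\gamma$ it cannot be in $\calA(c_\calA,C_k)$; hence some $c \in C_k$ satisfies $\dist(x,c) \leq 2c_\calA r$, giving $\dist(\gamma,c) \leq (2c_\calA+1)r$. For $p \in Out(P_\gamma)$ we have $\dist(p,x_\gamma) > r$, hence $\dist(p,\gamma) > r/2$, so $r < 2\dist(p,\gamma)$ and
\[
\dist(p,c) \leq \dist(p,\gamma) + (2c_\calA+1)r \leq (4c_\calA+3)\,\dist(p,\gamma),
\]
which gives \eqref{lem:Out} after taking $z$-th powers and summing.

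\textbf{Inequality \eqref{lem:In}.} Using the same $c$, for $p \in In(P_\gamma) \subseteq B(x_\gamma,r)$ we get $\dist(p,\gamma) \leq 3r/2$ and $\dist(p,c) \leq (2c_\calA + 5/2)r$. The main obstacle is that this bounds each individual cost by a constant times $r^z$, so we need an upper bound on $|In(P_\gamma)|\cdot r^z$ in terms of $\val(B_\gamma) + \cost(In(P_\gamma),\Gamma)$. The key trick is a dichotomy based on distance to $x_\gamma$: split $In(P_\gamma)$ into $A=\{p : \dist(p,x_\gamma) \leq 3r/4\}$ and $B = In(P_\gamma)\setminus A$. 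Each $p\in A$ contributes at least $(r/4)^z$ to $\val(B_\gamma) = \sum_{p\in B_\gamma\cap P}(r-\dist(x_\gamma,p))^z$, so $|A|\,r^z \leq 4^z\,\val(B_\gamma)$. On the other hand, for $p \in B$, the reverse triangle inequality combined with $\dist(x_\gamma,\gamma)\leq r/2$ gives $\dist(p,\gamma) \geq 3r/4 - r/2 = r/4$, hence $|B|\,r^z \leq 4^z\,\cost(In(P_\gamma),\Gamma)$. Summing both cases and multiplying by $(2c_\calA+5/2)^z$ gives a bound of $(8c_\calA+10)^z$ times $\bigl(\val(B_\gamma)+\cost(In(P_\gamma),\Gamma)\bigr)$, which comfortably implies \eqref{lem:In}. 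Finally, the construction of $x$ at level $l_\gamma-1$ is legitimate since the assumed normalization $P\subseteq B_d(0,1)$ together with $\gamma \in \Gamma_1$ and $c_\calA \geq 100$ forces $l_\gamma$ to be bounded away from $1$ whenever $C_k \neq \emptyset$, eliminating the only potential edge case.
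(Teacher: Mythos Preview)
Your proof is correct and follows essentially the same strategy as the paper: locate a nearby center in $C_k$ via the maximality of $B_\gamma$, then use the triangle inequality directly for \eqref{lem:Out} and \eqref{lem:notDefined}, and a near/far dichotomy to charge points in $In(P_\gamma)$ either to $\val(B_\gamma)$ or to $\cost(In(P_\gamma),\Gamma)$. The only cosmetic difference is that the paper splits on $\dist(p,\gamma)$ versus $r/4$ to obtain a pointwise bound, whereas you split on $\dist(p,x_\gamma)$ versus $3r/4$ and bound via cardinalities; your variant in fact yields the sharper constant $(8c_\calA+10)^z$.
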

\begin{proof}
    
We begin by examining the (more interesting) case where $\gamma \in \Gamma_1$. The first step is to establish the existence of a center in $C_k$ at a distance of $O(2^{-l_\gamma})$ from $\gamma$. 

Essentially, since the ball $B_\gamma$ has maximal radius among available balls close to $\gamma$, there is one center not too far from that ball. We formalize now this idea.
When the first center $c_1$ is selected by the algorithm, the entire universe $B(0,1)$ is included in $B(c_1, c_\calA \cdot 2^{-1})$, and therefore none of the balls of level $1$ are in $\calA(c_\calA,C_k)$. 
By definition, $B_\gamma$ is in $\calA(c_\calA,C_k)$, so $l_\gamma \geq 2$. According to the covering property of nets, there exists therefore $x \in \calN_{l_\gamma-1}$ such that $\dist(\gamma, x) \leq 2^{-(l_\gamma-1)}/2 = 2^{-l_\gamma}$. Furthermore, the maximality of the radius of $B_\gamma$ ensures that the ball $B(x, 2^{-(l_\gamma-1)}) \in \mathcal{B}$ is not in $\calA(c_\calA,C_k)$. Thus, $\dist(x, C_k) \leq c_\calA \cdot 2^{-(l_\gamma-1)} = 2 c_\calA \cdot 2^{-l_\gamma}$. Combining these two inequalities, we get:
\begin{equation}
    \label{eq:gammaCk}
     \dist( \gamma, C_k) \leq \dist(\gamma, x) + \dist(x, C_k) \leq 2^{-l_\gamma} + 2 c_\calA \cdot 2^{-l_\gamma} = (2 c_\calA +1 ) \cdot 2^{-l_\gamma}.
\end{equation}

\paragraph{Proof of \Cref{lem:In}:} 

Let $p \in In(P_\gamma)$, we want to bound the cost of $p$ for the solution $C_k$. We have, with triangle inequality:
    $\cost(p,C_k) = \dist(p,C_k)^z 
    \leq (\dist(p, \gamma) + \dist(\gamma,C_k))^z.$

We already have a bound on $ \dist( \gamma, C_k)$, so we turn to $\dist(p, \gamma)$. Given $p \in B_\gamma = B(x_\gamma, 2^{-l_\gamma})$, we have $\dist(p,x_\gamma) \leq 2^{-l_\gamma}$ and by the definition of $B_\gamma$, $\dist(x_\gamma, \gamma) \leq 2^{-l_\gamma}/2$. Combining these with the triangle inequality, we obtain:
$$
\dist(p,\gamma) \leq \dist(p, x_\gamma) + \dist(x_\gamma, \gamma) \leq 2^{-l_\gamma} + 2^{-l_\gamma}/2 \leq 2 \cdot 2^{-l_\gamma}.
$$

This yields 
\begin{equation}
    \label{eq:pCk}
    \cost(p,C_k) \le (2 \cdot 2^{-l_\gamma} + (2 c_\calA +1 ) \cdot 2^{-l_\gamma})^z = (2 c_\calA +3 )^z \cdot 2^{-z \cdot l_\gamma}.
\end{equation}
We now bound $2^{-z \cdot l_\gamma}$ as follows.
Either $\dist(p,\gamma) \leq 2^{-l_\gamma}/4$: then, we get by the triangle inequality 
\begin{alignat*}{2}
    & \quad &\dist(p,x_\gamma) &\leq \dist(p,\gamma) + \dist(\gamma, x_\gamma)\\
   & \quad & &\leq 2^{-l_\gamma}/4 + 2^{-l_\gamma}/2 = 3\cdot 2^{-l_\gamma}/4 \\
   &\Rightarrow \quad & 2^{-l_\gamma} &\leq 4\cdot(2^{-l_\gamma} - \dist(p,x_\gamma)).
\end{alignat*}

Or, $2^{-l_\gamma}/4 \leq \dist(p,\gamma)$. Then, we have $2^{-l_\gamma} \leq 4 \cdot \dist(p,\gamma)$. 
Therefore, in both cases it holds that 
$2^{-l_\gamma} \leq \max(4 \cdot \dist(p,\gamma), 4\cdot(2^{-l_\gamma} - \dist(p,x_\gamma))) \leq 4 \cdot (\dist(p,\gamma) + 2^{-l_\gamma} - \dist(p,x_\gamma))$.
Raising both sides to the power of $z$ yields:
\begin{align*}
    2^{-z \cdot l_\gamma} &\leq 4^z \cdot (\dist(p,\gamma) + 2^{-l_\gamma} - \dist(p,x_\gamma))^z\\ 
    &\leq 8^z \cdot (\dist(p,\gamma)^z + (2^{-l_\gamma} - \dist(p,x_\gamma))^z).
\end{align*}

Plugging this inequality in the bound on $\cost(p, C_k)$ given by \Cref{eq:pCk}, we obtain
$$\cost(p,C_k) \leq (2 c_\calA +3 )^z \cdot 8^z \cdot (\dist(p,\gamma)^z + (2^{-l_\gamma} - \dist(p,x_\gamma))^z) = (16 c_\calA +24 )^z \cdot (\dist(p,\gamma)^z + (2^{-l_\gamma} - \dist(p,x_\gamma))^z).$$

Summing this inequality over all $p \in In(P_\gamma)$, we get 
\begin{align*}
\cost(In(P_\gamma), C_k) &\leq (16 c_\calA +24 )^z \cdot (\sum_{p \in In(P_\gamma)} \dist(p,\gamma)^z + \sum_{In(P_\gamma)}(2^{-l_\gamma} - \dist(p,x_\gamma))^z) \\
&\leq (16 c_\calA +24 )^z \cdot (\cost(In(P_\gamma), \gamma) + \sum_{p \in B_\gamma \cap P}(2^{-l_\gamma} - \dist(p,x_\gamma))^z) \\
& = (16 c_\calA +24 )^z\cdot (\cost(In(P_\gamma), \Gamma) + \val(B_\gamma)).
\end{align*}

\paragraph{Proving \Cref{lem:Out}:} We turn to $Out(P_\gamma)$, and let $p\in Out(P_\gamma)$. 
As previously, we have from \Cref{eq:gammaCk}:
$ \cost(p,C_k) \leq (\dist(p, \gamma) + (2 c_\calA +1 ) \cdot 2^{-l_\gamma})^z$. We provide in this case as well a bound on $2^{-l_\gamma}$.

The point $p$ is outside the ball $B_\gamma = B(x_\gamma, 2^{-l_\gamma})$ and therefore $\dist(p,x_\gamma) \geq 2^{-l_\gamma}$. 
Moreover, by definition of $B_\gamma$, $\dist(\gamma,x_\gamma) \leq 2^{-l_\gamma}/2$. Thus, we get:
\begin{align*}
    \dist(p,\gamma) &\geq \dist(p,x_\gamma) - \dist(\gamma,x_\gamma)\\
    &\geq 2^{-l_\gamma}- 2^{-l_\gamma}/2 = 2^{-l_\gamma}/2\\
    \Rightarrow \quad 2^{-l_\gamma} &\leq 2\cdot \dist(p,\gamma).
\end{align*}
Hence
\begin{align*}
    \cost(p,C_k) 
    &\leq (\dist(p, \gamma) + (2 c_\calA +1 ) \cdot 2^{-l_\gamma})^z\\
    &\leq (\dist(p, \gamma) +  (2 c_\calA +1 ) \cdot 2 \cdot  \dist(p, \gamma))^z\\
    &\leq ( 4c_\calA +3 )^z \cdot \dist(p, \gamma)^z.
\end{align*}
Summing this inequality over all $p \in Out(P_\gamma)$ concludes:
$$ \cost(Out(P_\gamma),C_k) \leq ( 4c_\calA +3 )^z \cdot \cost(Out(P_\gamma),\Gamma).$$
\paragraph{Proving \Cref{lem:notDefined}:} Finally, we consider the case $\gamma \in \Gamma_0$: by definition, $\dist(\gamma,C_k) \leq ( c_\calA +1 ) \cdot 2^{-\lceil \log n\rceil}$. Therefore for any $p \in P_\gamma$
\begin{align*}
    \cost(p,C_k) = \dist(p,C_k)^z &\leq  (\dist(p,\gamma) + \dist(\gamma,C_k))^z\\
    &\leq 2^z\cdot \lpar\dist(p,\gamma)^z + \dist(\gamma,C_k)^z\rpar\\
    &\leq 2^z\cdot\lpar\dist(p,\gamma)^z + ( c_\calA +1 )^z \cdot  2^{-z\lceil \log n\rceil}\rpar\\
    &\leq 2^z\cdot\lpar \dist(p,\gamma)^z + (( c_\calA +1 )/n)^z \rpar.
\end{align*}
Summing this inequality over all $p \in P_\gamma$
\begin{align*}
    \cost(P_\gamma, C_k) &\leq 2^z\cdot \lpar\cost(P_\gamma, \Gamma)+ |P_\gamma| \cdot (( c_\calA +1 )/n)^z\rpar. \qedhere
\end{align*}
\end{proof}

\Cref{lem:inAndout} allows us to derive a first bound on $\cost(P,C_k)$. Indeed, summing \Cref{lem:In} for all $\gamma \in \Gamma_1$, we get
\begin{align*}
    \cost(\bigcup_{\gamma \in \Gamma_1}In(P_\gamma), C_k) &= \sum_{\gamma \in \Gamma_1}\cost(In(P_\gamma), C_k) \leq  (16 c_\calA +24 )^z\cdot (\sum_{\gamma \in \Gamma_1} \cost(In(P_\gamma), \Gamma) + \sum_{\gamma \in \Gamma_1} \val(B_\gamma))\\
    &\leq (16 c_\calA +24 )^z\cdot \cost(\bigcup_{\gamma \in \Gamma_1}In(P_\gamma),\Gamma) + (16 c_\calA +24 )^z\cdot\sum_{\gamma \in \Gamma_1} \val(B_\gamma).
\end{align*}

Summing \Cref{lem:Out} for all $\gamma \in \Gamma_1$, we get
\begin{align*}
    \cost(\bigcup_{\gamma \in \Gamma_1}Out(P_\gamma), C_k) = \sum_{\gamma \in \Gamma_1} \cost(Out(P_\gamma), C_k) &\leq (4 c_\calA +3 )^z \cdot   \sum_{\gamma \in \Gamma_1} \cost(Out(P_\gamma), \Gamma)\\
      &\leq  (4 c_\calA +3 )^z \cdot \cost(\bigcup_{\gamma \in \Gamma_1}Out(P_\gamma), \Gamma).
\end{align*}
Summing \Cref{lem:notDefined} for all $\gamma \in \Gamma_0$, we get
\begin{align*}
\cost(\bigcup_{\gamma \in \Gamma_0}P_\gamma, C_k) &\leq 2^z\cdot (\sum_{\gamma \in \Gamma_0}\cost(P_\gamma, \Gamma)+ \sum_{\gamma \in \Gamma_0}|P_\gamma| \cdot ((1 c_\calA +1 )/n)^z)\\
    &\leq 2^z \cdot \cost(\bigcup_{\gamma \in \Gamma_0}P_\gamma,\Gamma) + (2 c_\calA +2)^z \cdot n/n^z.
\end{align*}
And finally summing the three parts, we obtain
\begin{equation}\label{lem:redvalue}
    \cost(P,C_k) \leq (16 c_\calA +24)^z\cdot \cost(P,\Gamma) + (2 c_\calA +2)^z \cdot n^{1-z} + (16 c_\calA +24)^z\cdot\sum_{\gamma \in \Gamma_1} \val(B_\gamma).
\end{equation}
From now on, we will fix the constant $c_\calA = 100$ for the rest of the proof. \Cref{lem:redvalue} allows us to prove \Cref{thm:mpWithError} in the easy case where all the available balls at the end of the algorithm have values less than $\theta$. Indeed all the balls $B_\gamma$ are available at the end of the algorithm, in that case $\sum_{\gamma \in \Gamma_1} \val(B_\gamma) \leq k\cdot \theta$ and the theorem follows immediately from \Cref{lem:redvalue} (Recall that $\theta \geq 1$). 

We note $\mval$ the maximum value of an available ball at the end of the algorithm. In the rest of the proof, we show how to bound $\sum_{\gamma \in \Gamma_1} \val(B_\gamma)$ in the remaining case, when $\mval \geq \theta$.

\subsection{Bounding the Values}\label{sec:boundValues}
To do so, we start by showing a first lemma lower bounding the cost of the balls that do not intersect $\Gamma$. We say that a ball $B\in \calB$ is \emph{covered} by $\Gamma$ if $B\cap \Gamma \neq \emptyset$ (and recall that in our analysis $\Gamma$ plays the role of the optimal solution).

\begin{lemma}\label{lem:cover}
    If a ball $B\in \calB$ is not covered by $\Gamma$, then $\cost(B \cap P,\Gamma)\geq \val(B)$.
\end{lemma}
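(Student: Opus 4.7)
The plan is straightforward: this is essentially a triangle-inequality argument. I would unpack the definitions and show that each point in $B\cap P$ is at least as far from $\Gamma$ as it is from the boundary of $B$.

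More precisely, write $B = B(x,r)$. Since $B$ is not covered by $\Gamma$, we have $\dist(x,\gamma) > r$ for every $\gamma \in \Gamma$. Fix a point $p \in B \cap P$, so $\dist(x,p) \le r$. The reverse triangle inequality gives, for every $\gamma \in \Gamma$,
\[
\dist(p,\gamma) \;\ge\; \dist(x,\gamma) - \dist(x,p) \;>\; r - \dist(x,p).
\]
Taking the minimum over $\gamma \in \Gamma$ yields $\dist(p,\Gamma) \ge r - \dist(x,p)$, and since $r - \dist(x,p) \ge 0$ (as $p \in B$) we can raise both sides to the power $z$ to get $\dist(p,\Gamma)^z \ge (r - \dist(x,p))^z$.

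Summing over all $p \in B \cap P$ then gives
\[
\cost(B \cap P, \Gamma) \;=\; \sum_{p \in B \cap P} \dist(p,\Gamma)^z \;\ge\; \sum_{p \in B \cap P} (r - \dist(x,p))^z \;=\; \val(B),
\]
which is exactly the desired inequality. There is essentially no obstacle here; the only thing to double-check is that $r - \dist(x,p) \ge 0$ so that the $z$-th power is monotone, which is immediate from $p \in B$.
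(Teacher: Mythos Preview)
Your proof is correct and follows essentially the same approach as the paper: write $B=B(x,r)$, use that no center of $\Gamma$ lies in $B$ so $\dist(x,\Gamma)\ge r$, apply the reverse triangle inequality to get $\dist(p,\Gamma)\ge r-\dist(x,p)$, raise to the power $z$, and sum over $p\in B\cap P$. Your explicit remark that $r-\dist(x,p)\ge 0$ is needed for monotonicity of the $z$-th power is a nice touch the paper leaves implicit.
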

\begin{proof}
    Consider $B = B(x, r) \in \calB$, a ball not covered by $\Gamma$. Here, $\dist(x, \Gamma) \ge r$. For any $p \in B \cap P$, the triangle inequality implies $\dist(p, \Gamma) \ge \dist(x,\Gamma) - \dist(x,p) \geq r - \dist(x, p)$. Raising both sides to the power of $z$ and summing for all $p\in B\cap P$, we get $\cost(B\cap P, \Gamma) = \sum_{p\in B\cap P} \dist(p,\Gamma)^z \geq \sum_{p\in B\cap P} (r - \dist(x, p))^z = \val(B)$.
\end{proof}

The strategy for bounding the sum of values $\sum_{\gamma \in \Gamma_1} \text{val}(B_\gamma)$ relies on the preceding lemma. Our objective is to match each $B_\gamma$ (for $\gamma \in \Gamma_1$) with an uncovered ball of at least the same value (approximately), ensuring that all the uncovered balls are disjoint. Consequently, we can then upper bound the sum of values by the cost for $\Gamma$ of those uncovered balls, as established in \Cref{lem:cover}. 

In order to define the matching, we recall the notations from \Cref{alg:mp}. During the $i$-th loop, the algorithm defines a sequence of balls $\sigma_i = (\sigma_i^1, \sigma_i^2, ...)$, that are smaller and smaller, such that the initial ball has maximum value (up to $\theta$) among the available balls, and for $j \geq 2$ the ball $\sigma_i^j$ has maximum value (up to $\theta$) among the children of $\sigma_i^{j-1}$.

In the following lemma, we show that we can \emph{prune} all the sequences $\sigma_i$ to establish some separation property. This pruning removes some balls in each sequence, ensuring that the value of the first remaining ball in each sequence is at least $\mval- \theta$, while guaranteeing that none of the remaining balls intersect. 
We denote $x_i^l$ the center of the ball $\sigma_i^l$.

\begin{lemma}\label{lem:pruning}
For all $i \in \{1,\dots,k\}$, there exists an index $l_i$ such that:
    \begin{itemize}
        \item $\val(\sigma_i^{l_i}) \geq \mval - \theta$.
        \item For all $i,i' \in  \{1,\dots,k\}$, and for all $l \geq l_i$, $l'\geq l_{i'}$, $\sigma_i^l \cap \sigma_{i'}^{l'} = \emptyset$
    \end{itemize}
\end{lemma}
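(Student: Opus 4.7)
The plan is to use $l_i = 1$ as a safe baseline for the value bound and then enlarge $l_i$ only as needed to obtain disjointness, relying on a greedy-descent argument to keep the value at $\mval - \theta$ throughout. First, I would argue that $\val(\sigma_i^1) \geq \mval - \theta$ for every $i$: the ball $B^\star$ that realizes $\mval$ is available at the end of the algorithm, and since the set of available balls only shrinks during execution (forbidden balls stay forbidden), $B^\star$ was available at every earlier iteration. Because $\sigma_i^1$ is selected as a maximum-value available ball up to $\theta$, we have $\val(\sigma_i^1) \ge \val(B^\star) - \theta = \mval - \theta$, so the value condition is trivially satisfied at $l_i = 1$.

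Next, I would set up the geometric framework needed for disjointness. Telescoping $\dist(x_i^j, x_i^{j+1}) \le 10\, r_i^j$ (which follows from the child definition, since $r_i^{j+1} = r_i^j/2$) along a geometric series yields $\dist(x_i^l, c_i) < 20\, r_i^l$ for every index $l$, so $\sigma_i^l \subseteq B(c_i, 21\, r_i^l)$. As $r_i^l$ is monotonically non-increasing in $l$, the whole suffix $\{\sigma_i^l : l \ge l_i\}$ lies in the ``suffix container'' $B(c_i, 21\, r_i^{l_i})$. For separation of distinct centers: the availability of $\sigma_{i'}^1$ at iteration $i'$ (for $i < i'$) forces $\dist(x_{i'}^1, c_i) > 100\, r_{i'}^1$, which combined with $\dist(x_{i'}^1, c_{i'}) < 20\, r_{i'}^1$ yields $\dist(c_i, c_{i'}) > 80\, r_{i'}^1$. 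Pairwise disjointness of the suffixes then reduces to $21\, r_i^{l_i} + 21\, r_{i'}^{l_{i'}} < \dist(c_i, c_{i'})$, which is automatically met provided $l_i$ is large enough.

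The main obstacle is choosing $l_i$ so both conditions hold simultaneously. My plan is to take $l_i$ to be the largest index for which $\val(\sigma_i^{l_i}) \ge \mval - \theta$ (well-defined and $\ge 1$ by the first paragraph), so the value condition is immediate by construction, and then argue disjointness. The key technical step will be to show that the value along the descent $\sigma_i^1, \sigma_i^2, \ldots$ cannot drop below $\mval - \theta$ until the ball has shrunk to a radius small compared with the separation from any other $c_{i'}$. If it did drop earlier, then no child of $\sigma_i^{l-1}$ would have value within $\theta$ of $\mval$; I would derive a contradiction via a ``witness child'' argument, producing a specific child (e.g.\ centered at the net point of the next level closest to the center of $B^\star$, or closest to a competing sequence's descent) whose value is at least $\mval - \theta$, using that as long as the containing ball $\sigma_i^{l-1}$ has not yet separated from $B^\star$, the witness child inherits most of the $\mval$-mass. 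Making this ``interplay between the value drop and the geometric separation'' precise is the most delicate part of the proof, and I expect it to occupy the bulk of the argument.
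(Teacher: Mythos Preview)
Your high-level framework (initialize $l_i=1$, increase as needed while preserving the value bound) matches the paper's iterative pruning procedure, and your observation that $\val(\sigma_i^1)\ge\mval-\theta$ is correct and used the same way. The suffix-container geometry you set up is also fine, though the paper does not argue disjointness via $B(c_i,21\,r_i^{l_i})$; it works directly with the intersecting pair.

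The genuine gap is in your witness-child argument. You propose that when $\val(\sigma_i^l)$ is about to drop, the ball $\sigma_i^{l-1}$ ``has not yet separated from $B^\star$'' and therefore a child near $B^\star$ carries value $\ge\mval$. But $B^\star$ (the ball realizing $\mval$) plays no geometric role whatsoever in the descent $\sigma_i$: the bound $\val(\sigma_i^1)\ge\mval-\theta$ comes only from $B^\star$ being in the \emph{candidate set} at step $i$, not from any proximity. The sequences $\sigma_i$ are anchored near their respective centers $c_i$, and $B^\star$ can be arbitrarily far from all of them, so ``a child of $\sigma_i^{l-1}$ close to $B^\star$'' need not exist. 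Consequently, there is no mechanism linking ``radius still large'' to ``value still $\ge\mval-\theta$,'' and your plan to deduce disjointness from the largest-index choice of $l_i$ breaks down exactly at the step you flagged as most delicate.

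The paper's key insight replaces $B^\star$ by the \emph{intersecting sequence} as the witness. Its auxiliary lemma shows that whenever $i<i'$ and $\sigma_i^{l}\cap\sigma_{i'}^{l'}\neq\emptyset$, two things hold: (i) $\level(\sigma_{i'}^1)\ge\level(\sigma_i^l)+2$, because otherwise some ball on the $\sigma_{i'}$-descent at level $\level(\sigma_i^l)+1$ would lie within distance $100\cdot 2^{-\text{its level}}$ of $c_i$ and hence be forbidden before iteration $i'$; and (ii) therefore $\sigma_{i'}^1$ fits inside a child of $\sigma_i^l$ (take the net point at level $\level(\sigma_i^l)+1$ nearest to the center of $\sigma_{i'}^1$), and that child has value at least $\val(\sigma_{i'}^1)$, so $\val(\sigma_i^{l+1})\ge\val(\sigma_{i'}^1)-\theta\ge\mval-\theta$. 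This is what lets you increment $l_i$ while preserving the value bound. You did list ``closest to a competing sequence's descent'' as a possible witness, which is the right direction; but the argument needs the level-gap statement (i), and your proposal does not contain it or any surrogate for it.
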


\subsubsection{Proof of the Lemma \ref{lem:pruning}}

To compute each $l_i$, the pruning procedure works as follows. Start with $l_i = 1$ for all $i$. The first condition is clearly satisfied: when $\sigma_i^1$ is selected, it maximizes the value among the available balls up to $\theta$.

The procedure enforces the second condition as follows: as long as there exist sequences $\sigma_i, \sigma_{i'}$ with $i < i'$ and two levels $\ell \geq l_i$, $\ell' \geq l_{i'}$ such that $\sigma_i^{\ell} \cap \sigma_{i'}^{\ell'} \neq \emptyset$, set $l_i = \ell+1$ (i.e., \emph{prune} the sequence $\sigma_i$ to remove its entries before $\ell+1$)

We will demonstrate that this procedure is well-defined and inductively preserves the first property. Consequently, when it terminates, both conditions are satisfied.

Before proving \Cref{lem:pruning}, we need some preliminary results. Our first remark on the algorithm is as follows: at the time when a ball $\sigma_i^j$ is selected by the algorithm, it is available. This is clearly true when the first ball of the sequence is picked in line $4$. To prove that it is also true for the balls picked in line $6$, it suffices to prove the following lemma.

\begin{lemma}\label{lem:available}
    At any moment of \Cref{alg:mp1}, if a ball is available, its children are also available.
\end{lemma}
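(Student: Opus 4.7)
The plan is to give a direct triangle inequality argument that exploits the slack between the constant governing the child relation (the factor $10$) and the constant governing availability (the factor $100$). The factor-of-$10$ gap between these two constants is what makes the lemma essentially immediate: a child can only move the center by a small amount relative to the radius of the parent, while the availability requirement at the finer level only tightens by a factor of $2$.

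Concretely, I would fix a ball $B = B(x, 2^{-i}) \in \calB_i$ that is available, meaning that for every selected center $c$ we have $\dist(x,c) > 100 \cdot 2^{-i}$ (in the modified algorithm) or $\dist(x,C) > 100\,r$ with $r$ the radius of $B$ (in the original isolated-ball formulation). Let $B' = B(y, 2^{-(i+1)})$ be an arbitrary child of $B$. By the definition of child, $\dist(x,y) \leq 10 \cdot 2^{-i}$. I would then apply the triangle inequality in the form $\dist(y,c) \geq \dist(x,c) - \dist(x,y)$ to every selected center $c$, obtaining
\[
\dist(y,c) \;>\; 100\cdot 2^{-i} - 10 \cdot 2^{-i} \;=\; 90 \cdot 2^{-i}.
\]
Since the availability threshold for a level-$(i{+}1)$ ball is $100 \cdot 2^{-(i+1)} = 50 \cdot 2^{-i}$, and $90 > 50$, the ball $B'$ satisfies the availability condition, i.e., it is not forbidden by any currently selected center.

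The same argument works verbatim in the original Mettu--Plaxton formulation: if $B = B(x, r)$ with $r < \dist(x,C)/100$, then any child $B(y, r/2)$ with $y \in P$ satisfies $\dist(x,y) \leq 10 r$, so $\dist(y,C) \geq \dist(x,C) - 10 r > 100 r - 10 r = 90 r$, which exceeds $100 \cdot (r/2) = 50 r$, establishing that $B(y, r/2)$ is itself isolated/available. The only delicate point — and really the only thing worth checking — is that the constants in the definitions of \emph{child}, \emph{forbidden}, and \emph{isolated} were indeed chosen so that $10 < 100 - 100/2$, which is precisely the algebraic inequality above. There is no genuine obstacle here; the lemma is a short triangle-inequality bookkeeping step that justifies why the algorithm's inner while-loop, which descends into children, always stays within the available portion of $\calB$.
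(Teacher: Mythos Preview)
Your proof is correct and is essentially the same argument as the paper's: the paper proves the contrapositive (if a child is forbidden by some center $c$ then the parent is forbidden by $c$), while you prove the direct implication, but both reduce to the same triangle-inequality computation $10 + 50 < 100$. There is no substantive difference.
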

\begin{proof}
    Let $B_1 = B(x_1, r)\in \calB$ be a ball and let $B_2 = B(x_2,r/2) \in \calB$ be a child of $B_1$.  By definition, we have $\dist(x_1,x_2) \leq 10 \cdot r$. Suppose $B_2$ is forbidden by some center $c$: then  $\dist(c,x_2) \leq 100 \cdot r/2$. Using the triangle inequality yields
    $$\dist(c,x_1) \leq \dist(c,x_2) + \dist(x_2,x_1) \leq 10 \cdot r + 50 \cdot r < 100 \cdot r.$$
    Therefore $B_1$ is also forbidden by $c$.
\end{proof}

We now establish a simple lemma that sets a limit on the distance between the center of two balls appearing in a same sequence $\sigma_i$.

\begin{lemma}\label{lem:20diam}
    Let $u_0= B(x_0,r_0),\dots,u_l = B(x_l,r_l)$ be a sequence of balls of $\calB$ such that for all $i$, $u_{i+1}$ is a child of $u_i$. Then $\dist(x_1,x_l) \leq 20 \cdot r_0$.
\end{lemma}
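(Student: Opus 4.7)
The plan is a direct application of the triangle inequality along the chain of centers, combined with the geometric decay of radii that follows from the definition of a child ball.

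First I would unfold the child relation: since $u_{i+1}$ is a child of $u_i$, the definition gives both $r_{i+1} = r_i/2$ and $\dist(x_i, x_{i+1}) \leq 10 \cdot r_i$. By a straightforward induction, $r_i = r_0/2^i$, and hence
\[
\dist(x_i, x_{i+1}) \;\leq\; 10 \cdot r_0 / 2^i \qquad \text{for all } i \geq 0.
\]

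Next I would telescope along the chain $x_1 \to x_2 \to \cdots \to x_l$ using the triangle inequality:
\[
\dist(x_1, x_l) \;\leq\; \sum_{i=1}^{l-1} \dist(x_i, x_{i+1}) \;\leq\; 10 \cdot r_0 \sum_{i=1}^{l-1} 2^{-i} \;\leq\; 10 \cdot r_0 \cdot \sum_{i \geq 1} 2^{-i} \;=\; 10 \cdot r_0,
\]
which is already stronger than the claimed bound $20 \cdot r_0$, so the lemma follows.

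There is no real obstacle: the statement is essentially bookkeeping on a geometric series, and the only ingredients used are the two clauses in the definition of \emph{child} from Section~3. One small point worth flagging in the write-up is the degenerate case $l = 1$, where the sum is empty and $\dist(x_1, x_1) = 0 \leq 20 \cdot r_0$ trivially; everything else is immediate.
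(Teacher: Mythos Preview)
Your proof is correct and follows essentially the same approach as the paper: triangle inequality along the chain of centers together with the geometric decay $r_i = r_0/2^i$. The only cosmetic difference is that the paper sums from index $0$ (bounding $\dist(x_0,x_l)$) and obtains the constant $20$, whereas you sum from index $1$ in accordance with the lemma's stated conclusion $\dist(x_1,x_l)$ and obtain the sharper constant $10$; both are valid, and your observation that the degenerate case $l=1$ is trivial is a nice touch.
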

\begin{proof}
    By definition, the distance between the center of a ball $u = B(x,r)$ and the center of any child is at most $10 \cdot r$. By induction, we get:

\begin{align*}
\dist(u_0,u_l) &\leq \sum_{j=0}^{l-1} \dist(u_j,u_{j+1})\\
&\leq \sum_{j=0}^{l-1} 10 \cdot r_j =  \sum_{j=0}^{l-1}\frac{10 \cdot r_0}{2^j} \\
&\leq 20\cdot r_0.\qedhere
\end{align*}
\end{proof}

The next Lemma is the key to prove that the procedure to compute the $l_i$'s terminates and verifies the conditions of \Cref{lem:pruning}.

\begin{lemma}\label{lem:pruningOneStep}
For every $i,i',l,l'$ such that $i<i'$ and $\sigma_i^l \cap \sigma_i^{l'} \neq \emptyset$,
\begin{itemize}
    \item $\level(x_{i'}^1) \geq \level(x_i^l)+2 $
    \item $\val(\sigma_i^{l+1}) \geq \val(\sigma_{i'}^1)-\theta$.
\end{itemize}
\end{lemma}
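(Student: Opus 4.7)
The plan is to prove both bullets via triangle-inequality manipulations that combine three facts: (i) $\sigma_{i'}^1$ is available at iteration $i'$, so in particular $\dist(x_{i'}^1,c_i) > 100\cdot 2^{-\level(x_{i'}^1)}$, where $c_i$ is the center added at iteration $i$ (the last ball of $\sigma_i$); (ii) \Cref{lem:20diam} gives $\dist(x_i^l,c_i) \leq 20\cdot 2^{-\level(x_i^l)}$ and $\dist(x_{i'}^1,x_{i'}^{l'}) \leq 20\cdot 2^{-\level(x_{i'}^1)}$; and (iii) the intersection hypothesis gives $\dist(x_i^l, x_{i'}^{l'}) \leq 2^{-\level(x_i^l)} + 2^{-\level(x_{i'}^{l'})}$.

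For the first bullet I would chain these bounds along the path $c_i \to x_i^l \to x_{i'}^{l'} \to x_{i'}^1$, using $\level(x_{i'}^{l'}) \geq \level(x_{i'}^1)$ to absorb the $2^{-\level(x_{i'}^{l'})}$ term. The availability lower bound $100\cdot 2^{-\level(x_{i'}^1)}$ must then be smaller than $21\cdot 2^{-\level(x_i^l)} + 21\cdot 2^{-\level(x_{i'}^1)}$, which rearranges to $79\cdot 2^{-\level(x_{i'}^1)} < 21\cdot 2^{-\level(x_i^l)}$. Since $79/21 > 2$ and levels are integers, this forces $\level(x_{i'}^1) \geq \level(x_i^l) + 2$.

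For the second bullet, since $\sigma_i^{l+1}$ is chosen as a maximum-value child of $\sigma_i^l$ up to $\theta$, it suffices to exhibit one child $B^\star$ of $\sigma_i^l$ with $\val(B^\star) \geq \val(\sigma_{i'}^1)$. Writing $r_1 = 2^{-\level(x_i^l)}$ and $r_2 = 2^{-\level(x_{i'}^1)}$, the first bullet gives $r_2 \leq r_1/4$. I would pick $y \in \calN_{\level(x_i^l)+1}$ within distance $r_1/4$ of $x_{i'}^1$ (net covering property) and set $B^\star = B(y, r_1/2)$. A side computation along $x_i^l \to x_{i'}^{l'} \to x_{i'}^1$ gives $\dist(x_i^l, x_{i'}^1) \leq r_1 + 21 r_2 \leq 7 r_1$, so $\dist(x_i^l, y) \leq 10 r_1$ and $B^\star$ is a valid child of $\sigma_i^l$. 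The bound $r_2 \leq r_1/4$ also implies $B^\star \supseteq \sigma_{i'}^1$, and for every $p \in \sigma_{i'}^1 \cap P$ the pointwise estimate
\[ r_1/2 - \dist(y,p) \;\geq\; r_1/4 - \dist(x_{i'}^1,p) \;\geq\; r_2 - \dist(x_{i'}^1,p) \;\geq\; 0 \]
shows that $p$'s contribution to $\val(B^\star)$ dominates its contribution to $\val(\sigma_{i'}^1)$ (after raising to the $z$-th power). Summing over $p$ and invoking the max-up-to-$\theta$ selection rule concludes $\val(\sigma_i^{l+1}) \geq \val(B^\star) - \theta \geq \val(\sigma_{i'}^1) - \theta$.

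The main subtlety is the tight arithmetic: a gap of strictly $2$ (rather than $1$) is essential so that $B^\star$ can simultaneously be a valid child of $\sigma_i^l$ and contain $\sigma_{i'}^1$, and this gap is only obtained by using integrality of levels to upgrade $r_2 < (21/79)\, r_1$ to $r_2 \leq r_1/4$. If any of the key constants in the definitions of child, tail-diameter, and forbidden-ball were worsened, this calibration could fail, which is consistent with the paper's footnote noting that the constants $2, 10, 100$ are chosen for convenience rather than optimality.
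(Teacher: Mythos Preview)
Your proof is correct. For the second bullet your argument is essentially identical to the paper's: both bound $\dist(x_i^l,x_{i'}^1)\le 7\cdot 2^{-\level(x_i^l)}$, pick a net point at level $\level(x_i^l)+1$ within distance $r_1/4$ of $x_{i'}^1$, verify it is a child of $\sigma_i^l$, and finish with the same pointwise value comparison.

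For the first bullet, however, your route is genuinely simpler than the paper's. The paper argues by contradiction: assuming $\level(x_{i'}^1)\le\level(x_i^l)+1$, it constructs an auxiliary descendant sequence $y_0,\dots,y_{j_{\max}}$ extending $\sigma_{i'}$ beyond $x_{i'}^{l'}$, locates a net point $y$ at level exactly $\level(x_i^l)+1$ along the concatenated path, bounds $\dist(c_i,y)\le 64\cdot 2^{-\level(y)}$, and then invokes \Cref{lem:available} to conclude $B(y,2^{-\level(y)})$ should have been available yet is forbidden by $c_i$. You bypass this construction entirely by applying the availability constraint directly to $\sigma_{i'}^1$ (which is selected from $\calA$ after $c_i$ is added), obtaining the clean numerical inequality $79\cdot 2^{-\level(x_{i'}^1)} < 21\cdot 2^{-\level(x_i^l)}$ and using integrality of levels. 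Your approach is shorter and avoids both the auxiliary sequence and the appeal to \Cref{lem:available}; the paper's approach, while more elaborate, makes the geometric picture (a descendant of $\sigma_{i'}^1$ being swallowed by the forbidden region around $c_i$) more explicit and is slightly less sensitive to the specific constants.
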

\begin{proof}
    Let $i,i',l,l'$ such that $i<i'$ and $\sigma_i^l \cap \sigma_{i'}^{l'} \neq \emptyset$. We start by proving the first point by contradiction: suppose that $\level(x_{i'}^1) \leq \level(x_i^l)+1$. We extend the sequence starting from $x_{i'}^{l'}$ to prove the existence of a "descendant" of $x_{i'}^1$ of level $\level(x_i^l)+1$ close to $x_i^l$. We will then prove that this descendant became unavailable when $c_i$ was selected, contradicting Lemma \ref{lem:available}.  
    
More precisely, we pick recursively a sequence $y_0,\dots, y_{j_{max}}$ such that $y_0 = x_{i'}^{l'}$ and $B(y_{j+1},2^{-\level(y_{j+1})})$ is an arbitrary child of $B(y_{j},2^{-\level(y_{j})})$ such that in $\sigma_i^l \cap B(y_{j+1},2^{-\level(y_{j+1})}) \neq \emptyset $ and we stop when $\level(y_{j_{max}}) \geq \level(x_{i}^{l})+1$. 

We prove by induction that this sequence can be defined: assume that $B(y_{j},2^{-\level(y_{j})}) \cap \sigma_i^l \neq\emptyset$, and let $x$ be a point lying in the intersection. By the covering property of nets, there exists a net point $y_{j+1}$ in $\calN_{\level(y_{j+1})}$ such that $\dist(x,y_{j+1}) \leq 2^{-\level(y_{j+1})}/2$. We have $\dist(y_j,y_{j+1}) \leq \dist(y_j,x) + \dist(x,y_{j+1}) \leq 2^{-\level(y_{j})} + 2^{-\level(y_{j+1})}/2 \leq 10 \cdot 2^{-\level(y_{j})}$. Therefore $B(y_{j+1},2^{-\level(y_{j+1})})$ is a child of $B(y_{j},2^{-\level(y_{j})})$ and by construction $x\in \sigma_i^l \cap B(y_{j+1},2^{-\level(y_{j+1})})$. 

There exists a net point $y$ in the sequence $x_{i'}^{1},\dots,x_{i'}^{l'},y_1,\dots,y_{j_{max}}$ of level $\level(x_{i}^{l})+1$. This point $y$ is either $y_{j_{max}}$ if $\level{x_{i'}^{l'}} < \level(x_{i}^{l})+1$, or some point in the sequence $x_{i'}^{1},\dots,x_{i'}^{l'} = y_0 = y_{j_{max}}$ otherwise. 

By construction, the two balls $\sigma_i^l$ and $B(y_{j_{max}},2^{-\level(y_{j_{max}})})$ are intersecting, and therefore $$\dist(x_i^l, y_{j_{max}}) \leq 2^{-\level(x_i^l)} + 2^{-\level(y_{j_{max}})} \leq 2 \cdot 2^{-\level(x_i^l)} = 4\cdot 2^{-\level(y)}.$$
Applying \Cref{lem:20diam} to the sequence $B(y,2^{-\level(y)}),\dots,B(y_{j_{max}},2^{-\level(y_{j_{max}})})$, we get $\dist(y_{j_{max}},y) \leq 20\cdot 2^{-\level(y)}$. Using the triangle inequality, we finally obtain 
$$\dist(x_i^l,y) \leq \dist(x_i^l, y_{j_{max}}) +\dist(y_{j_{max}},y) \leq 24 \cdot  2^{-\level(y)}. $$
On the other hand, applying \Cref{lem:20diam} again, we get $\dist(c_i,x_i^l) \leq 20 \cdot 2^{-\level(x_i^l)} = 40 \cdot 2^{-\level(y)}$ and therefore $\dist(c_i,y) \leq \dist(c_i,x_i^l) + \dist(x_i^l,y) \leq 64 \cdot 2^{-\level(y)} \leq 100 \cdot 2^{-\level(y)}.$ Hence, the ball $B(y,2^{-\level(y)})$ became unavailable when $c_i$ was selected and is not available when $\sigma_{i'}^1$ is picked because $i'>i$. But Lemma \ref{lem:available} guarantees the availability of all the balls within the sequence $\sigma_{i'}^{1},\dots,\sigma_{i'}^{l'},B(y_1,2^{-\level(y_{1})}),\dots,B(y_{j_{max}},2^{-\level(y_{j_{max}})})$, including $B(y,2^{-\level(y)})$, when the algorithm selects $\sigma_{i'}^1$. We have a contradiction and this concludes the proof of the first point.

We turn to the second point. The idea is to prove the existence of a net point $x$ of level $\level(x_i^l)+1$ such that the ball $B(x,2^{-\level(x)})$ is a child of $\sigma_i^l$, and such that $\val(B(x,2^{-\level(x)}))$ is greater than $\val(\sigma_{i'}^1)$. Given that the algorithm selects a child with the maximum value up to $\theta$, this will allow us to conclude.

More precisely, we start by deriving a bound on $\dist(x_i^l,x_{i'}^1)$. Applying \Cref{lem:20diam} we get $\dist(x_{i'}^{l'},x_{i'}^1) \leq 20 \cdot 2^{-\level(x_{i'}^1)}$. By the fist point of the Lemma, we know that $\level(x_{i'}^1) \geq \level(x_i^l)+2$ and therefore $\dist(x_{i'}^{l'},x_{i'}^1) \leq 5 \cdot 2^{-\level(x_{i}^l)}$. On the other hand we have $\sigma_i^l \cap \sigma_{i'}^{l'} \neq \emptyset$ and so $\dist(x_i^l,x_{i'}^{l'} )\leq 2 \cdot 2^{-\level(x_{i}^l)}$. Using the triangle inequality, we finally get $\dist(x_i^l,x_{i'}^1) \leq \dist(x_i^l,x_{i'}^{l'}) + \dist(x_{i'}^{l'},x_{i'}^1) \leq 7 \cdot  2^{-\level(x_{i}^l)}$.

Now let $x$ be a net point of level $\level(x_i^l)+1$ such that $x_{i'}^1 \in B(x, 2^{-\level(x)}/2)$, such a point exist by the covering property of nets. We have $\dist(x_i^l,x) \leq \dist(x_i^l,x_{i'}^1) + \dist(x_{i'}^1,x) \leq 7 \cdot  2^{-\level(x_{i}^l)} + 2^{-\level(x)}/2 \leq 10 \cdot 2^{-\level(x_{i}^l)}$. Therefore the ball $B(x,2^{-\level(x)})$ is a child of $\sigma_i^l$ and could have been chosen by the algorithm instead of $\sigma_i^{l+1}$. The algorithm selects a child of $\sigma_i^l$ with the maximum value up to $\theta$. Hence $\val(\sigma_i^{l+1}) \geq \val(B(x,2^{-\level(x)}))-\theta$.

To conclude the proof, it just remains to show that $ \val(B(x,2^{-\level(x)}))\geq \val(\sigma_{i'}^1)$. We will show that for any $p \in P\cap\sigma_{i'}^1$, the contribution of $p$ to the value of  $\sigma_{i'}^1$ is lower than its contribution to the value of $B(x,2^{-\level(x)})$. Let $p$ be a point of $P\cap\sigma_{i'}^1$, the contribution of $p$ to the value of $\sigma_{i'}^1$ is $(2^{-\level(x_{i'}^1)} - \dist(p,x_{i'}^1))^z$. We have established that $\level(x_{i'}^1) \geq \level(x_i^l)+2 = \level(x)+1$. Therefore we can bound $2^{-\level(x_{i'}^1)} \leq 2^{-\level(x)}/2$. Moreover, $x_{i'}^1$ is by construction in $B(x, 2^{-\level(x)}/2)$, hence using the triangle inequality we get $\dist(p,x_{i'}^1) \geq \dist(p,x) - \dist(x,x_{i'}^1) \geq \dist(p,x) - 2^{-\level(x)}/2$. We can now bound the contribution of $p$ to the value of $\sigma_{i'}^1$
\begin{align*}
    (2^{-\level(x_{i'}^1)} - \dist(p,x_{i'}^1))^z &\leq (2^{-\level(x)}/2 - (\dist(p,x) - 2^{-\level(x)}/2))^z\\
    &\leq (2^{-\level(x)}-\dist(p,x))^z.
\end{align*}
This is precisely the contribution of $p$ to the value of $B(x,2^{-\level(x)})$. Summing this inequality for all $p \in P \cap \sigma_{i'}^1$, we obtain $\val(B(x,2^{-\level(x)})) \geq \val(\sigma_{i'}^1)$, concluding the proof.
\end{proof}

\begin{proof}[Proof \Cref{lem:pruning}]
    We are now ready to prove \Cref{lem:pruning}. We recall for convenience the procedure to compute the $l_i$'s, described at the beginning of \Cref{sec:boundValues}. Start with $l_i = 1$ for all $i$, and note that with this choice, the first condition is verified by the design of the algorithm: when $\sigma_i^1$ is picked, it maximizes the value up to $\theta$ among the available balls. In particular, $\sigma_i^1$ has a value larger up to $\theta$ than any ball still available at the end of the algorithm.

To enforce the second condition, proceed as follows: as long as we can find $i < i'$ and $l \geq l_i$, $l' \geq l_{i'}$ such that $\sigma_i^{l} \cap \sigma_{i'}^{l'} \neq \emptyset$, update $l_i = l + 1$. The first item of \Cref{lem:pruningOneStep} guarantees that this procedure is well-defined, namely that $\sigma_i^{l}$ is not the last ball of the sequence, and that $\sigma_i^{l+1}$ does indeed exist. Furthermore, the second item of that Lemma ensures that $\val(\sigma_i^{l_i+1}) \geq \val(\sigma_{i'}^1) - \theta \geq \mval-\theta$. Therefore, the first condition remains satisfied after each update.

At each step, one of the $l_i$ gets incremented, so this procedure must terminate because the maximum level is $\lceil \log(n) \rceil$: when it ends, both conditions are satisfied, concluding the proof.
\end{proof}

\subsubsection[Back to the proof of Theorem 3.4]{Back to the proof of \Cref{thm:mpWithError}}
Our goal is to identify $k$ disjoint balls with values greater than those of $B_\gamma$ using \Cref{lem:pruning} and then apply \Cref{lem:cover} to finalize the proof.

\begin{proof}[Proof of \Cref{thm:mpWithError}]
To prove the theorem, we define a function $\phi$ that maps the center of $\Gamma_1$ to balls of $\calB$ such that for all $\gamma \in \Gamma_1$: 
\begin{enumerate}
    \item for all $\gamma' \in \Gamma_1$ with $\gamma \neq \gamma'$, $\phi(\gamma) \cap \phi(\gamma') = \emptyset$,
    \item $\phi(\gamma)$ is not covered by $\Gamma$,
    \item The value of $B_\gamma$ is less than $\val(\phi(\gamma)) + \theta$.
\end{enumerate}
Given such a matching $\phi$, we can conclude as follows. Summing the inequality of the third property of $\phi$ gives 
\begin{align*}
    \sum_{\gamma \in \Gamma_1}\val(B_\gamma) &\leq k\cdot \theta + \sum_{\gamma \in \Gamma_1}\val(\phi(\gamma)).
\end{align*}
Each $\phi(\gamma)$ is not covered by the second property of $\phi$, so we can apply \Cref{lem:cover}. Moreover those balls are disjoint because of the first property of $\phi$ and we obtain
\begin{align*}
    \sum_{\gamma \in \Gamma_1}\val(\phi(\gamma)) &\leq \sum_{\gamma \in \Gamma_1}\cost(\phi(\gamma),\Gamma)\\
    &\leq \cost(\bigcup_{\gamma \in \Gamma_1}\phi(\gamma),\Gamma)\\
    &\leq \cost(P,\Gamma).
\end{align*}
Putting everything together we finally get
$$ \sum_{\gamma \in \Gamma_1}\val(B_\gamma) \leq k\cdot \theta + \cost(P,\Gamma).$$
Combining this with \Cref{lem:redvalue}, this conclude the proof of \Cref{thm:mpWithError}. 

The rest of the proof is dedicated to the construction of the matching $\phi$ with the three desired properties. We construct a more general function $\phi$ that can also map the centers of $\Gamma_0$. The restriction of $\phi$ to $\Gamma_1$ will verify the desired properties. Let $l_i$ be the indices provided by \Cref{lem:pruning}. We have $3$ steps in the construction of $\phi$: 
\begin{itemize}
\item First, for any $i$ such that the last ball of the sequence $\sigma_i$ is covered, we let $B=(c_i,2^{-\lceil \log n \rceil})$ be this last ball (with center at $c_i \in C_k$) and chose arbitrarily $\gamma_i \in \Gamma$ covering $B$. We define $\phi(\gamma_i) = B$. We note that, in this case, $\gamma_i \in \Gamma_0$ -- since $\dist(c_i,\gamma_i) \leq 2^{-\lceil \log n \rceil}$ .
    \item Second, for any $i$ such that at least one of balls of the pruned sequence $(\sigma_i^{l})_{l\geq l_i}$ is covered but not the last one. We define $\lambda_i \geq l_i$ to be the smallest index such that for all $l\geq \lambda_i$ $\sigma_i^{l}$ is not covered. Let $\gamma_i$ be an arbitrary element of $\Gamma$ that covers $\sigma_i^{\lambda_i-1}$. We define $\phi(\gamma_i) = \sigma_i^{\lambda_i}$. 
    \item Last, for any element in $\Gamma_1$ that is still unmatched, we extend $\phi$ to form an arbitrary one-to-one matching between the remaining elements $\gamma \in \Gamma_1$ and all $\sigma_i^{l_i}$, where $i$ is such that none of the balls in the pruned sequence $(\sigma_i^{l})_{l\geq l_i}$ are covered.
\end{itemize}

Note that the second item of Lemma \ref{lem:pruning} guarantees that if $\gamma\in \Gamma$ covers a ball of a pruned sequence, it cannot cover a ball of another pruned sequence (as those balls are disjoints): this ensures that our definition of $\phi$ is consistent and that $\phi$ is one-to-one. We can now verify it satisfies the three desired properties.

\textbf{Property 1.} For any $\gamma,\gamma' \in \Gamma_1$, let $i$ such that $\phi(\gamma)$ is a ball of the pruned sequence $(\sigma_i^{l})_{l\geq l_i}$, and let $i'$ such that $\phi(\gamma')$ is a ball of the pruned sequence $(\sigma_{i'}^{l})_{l\geq l_{i'}}$. By construction of $\phi$ we have $i \neq i'$ and therefore \Cref{lem:pruning} ensures that $\phi(\gamma) \cap \phi(\gamma') = \emptyset$.   

\textbf{Property 2.} For any $\gamma\in \Gamma_1$, $\phi(\gamma)$ is not covered by $\Gamma$ by construction.

 \textbf{Property 3.} Fix a $\gamma \in \Gamma_1$. We distinguish two cases, based on whether $\phi(\gamma)$ was defined at the second or last step of the procedure (it cannot be defined at the first, as $\gamma$ involved there are in $\Gamma_0$). We aim at showing $\val(B_\gamma) \leq \val(\phi(\gamma)) + \theta$.
    
 \begin{itemize}
     \item If $\phi(\gamma)$ is defined in the last step, then it is of the form $\sigma_i^{l_i}$, and \Cref{lem:pruning} ensures that $\val(\sigma_i^{l_i}) \geq \mval - \theta$. 
     Combined with the fact that $B_\gamma$ is available at the end of the algorithm (and therefore by definition of $\mval$, $\val(B_\gamma) \leq \mval$), we obtain directly $\val(B_\gamma) \leq \val(\phi(\gamma)) + \theta$.
     \item 
Otherwise, $\phi(\gamma)$ is defined in the second step, and $\phi(\gamma) = \sigma_i^{\lambda_i}$. The proof follows a structure similar to the one of \Cref{lem:pruning}. 
We will show that there exists a ball $B(x,2^{-\level(x_i^{\lambda_i})})$ in $\calB$, that contains $B_\gamma$ and is a child of $\sigma_i^{\lambda_i-1}$.  In that case, since the algorithm opted for $\sigma_i^{\lambda_i}$ over $B(x,2^{-\level(x_i^{\lambda_i})})$, we get $\val(\sigma_i^{\lambda_i}) \geq \val(B(x,2^{-\level(x_i^{\lambda_i})})) - \theta$. Since this ball contains $B_\gamma$, it has greater value, which concludes the proof of Property 3. Therefore, we only need to show the existence of such a ball. Note that we don't actually need that $B_\gamma$ is contained in it, only that its value is greater than $\val(B_\gamma)$: we will only show this simpler property.

Let $x_\gamma$ be the center of $B_\gamma$. We start by proving that $\level(x_\gamma) \geq \level(x_i^{\lambda_i})+1$. Assume by contradiction that $\level(x_\gamma) \leq \level(x_i^{\lambda_i}).$ We are going to prove that in that case, $x_\gamma$ is too close to the center $c_i$ and therefore is not available at the end of the algorithm, contradicting the definition of $B_\gamma$.

First, by definition of $B_\gamma$ we know that $\dist(x_\gamma, \gamma) \leq 2^{-\level(x_\gamma)}/2$. Second, since $\gamma$ is an element covering $\sigma_i^{\lambda_i-1}$, it holds that $\dist(\gamma,x_i^{\lambda_i-1}) \leq 2^{-\level(x_i^{\lambda_i-1})} = 2 \cdot 2^{-\level(x_i^{\lambda_i})} \leq 2 \cdot 2^{-\level(x_\gamma)}.$ Third applying \Cref{lem:20diam} we get $\dist(x_i^{\lambda_i-1},c_i) \leq 20 \cdot 2^{-\level(x_i^{\lambda_i-1})} \leq 40 \cdot 2^{-\level(x_\gamma)}$. Combining these three inequalities using the triangle inequality we obtain
\begin{align*}
    \dist(x_\gamma,c_i) &\leq \dist(x_\gamma, \gamma) + \dist(\gamma,x_i^{\lambda_i-1}) + \dist(x_i^{\lambda_i-1},c_i)\\
    &\leq (0.5 + 2 + 40) \cdot 2^{-\level(x_\gamma)}\\
    &\leq 100 \cdot 2^{-\level(x_\gamma)}.
\end{align*}
Therefore $B_\gamma$ is not available at the end of the algorithm, contradicting the definition of $B_\gamma$. This finalize the proof of the inequality $\level(x_\gamma) \geq \level(x_i^{\lambda_i})+1$.

We now identify the desired child of $\sigma_i^{\lambda_i-1}$. Let $x$ be a net point of level $\level(x_i^{\lambda_i})$ such that $\dist(x,x_\gamma) \leq 2^{-\level(x_i^{\lambda_i})}/2$ (such a point exist by the covering property of nets).
We show that $B(x,2^{-\level(x)})$ is a child of $\sigma_i^{\lambda_i-1}$.
Since $\gamma$ is covering $\sigma_i^{\lambda_i-1}$, it holds that $\dist(x_i^{\lambda_i-1},\gamma) \leq 2 \cdot 2^{-\level(x_i^{\lambda_i})}$. Additionally, by definition of $B_\gamma$ we know that $\dist(\gamma, x_\gamma) \leq 2^{-\level(x_\gamma)}/2 \leq 2^{-\level(x_i^{\lambda_i})}/4$. Combining these three inequalities, and using the triangle inequality, we obtain

\begin{align*}
    \dist(x_i^{\lambda_i-1},x) &\leq \dist(x_i^{\lambda_i-1},\gamma) + \dist(\gamma, x_\gamma) + \dist(x_\gamma,x)\\
    &\leq (2 + 0.25 + 0.5)\cdot 2^{-\level(x_i^{\lambda_i})}\\
    &\leq 20 \cdot 2^{-\level(x_i^{\lambda_i})}\\
    &\leq 10 \cdot 2^{-\level(x_i^{\lambda_i-1})}.
\end{align*}

Therefore the ball $B(x,2^{-\level(x)})$ is a child of $\sigma_i^{\lambda_i-1}$ and could have been chosen by the algorithm instead of $\sigma_i^{\lambda_i}$. The algorithm selects a child of $\sigma_i^l$ with the maximum value up to $\theta$. Hence $\val(\sigma_i^{\lambda_i}) \geq \val(B(x,2^{-\level(x)}))-\theta$.

To conclude the proof, it just remains to show that $ \val(B(x,2^{-\level(x)}))\geq \val(B_\gamma)$. We will show that for any $p \in P\cap B_\gamma$, the contribution of $p$ to the value of  $B_\gamma$ is lower than its contribution to the value of $B(x,2^{-\level(x)})$. Let $p$ be a point of $P\cap B_\gamma$, the contribution of $p$ to the value of $B_\gamma$ is $(2^{-\level(x_\gamma)} - \dist(p,x_\gamma))^z$. We have established that $\level(x_\gamma) \geq \level(x_i^{\lambda_i-1})+2 = \level(x)+1$. Therefore we can bound $2^{-\level(x_\gamma)} \leq 2^{-\level(x)}/2$. Moreover, $x_\gamma$ is by construction in $B(x, 2^{-\level(x)}/2)$, hence using the triangle inequality we get $\dist(p,x_\gamma) \geq \dist(p,x) - \dist(x,x_\gamma) \geq \dist(p,x) - 2^{-\level(x)}/2$. We can now bound the contribution of $p$ to the value of $B_\gamma$
\begin{align*}
    (2^{-\level(x_\gamma)} - \dist(p,x_\gamma))^z &\leq (2^{-\level(x)}/2 - (\dist(p,x) - 2^{-\level(x)}/2))^z\\
    &\leq (2^{-\level(x)}-\dist(p,x))^z.
\end{align*}
This is precisely the contribution of $p$ to the value of $B(x,2^{-\level(x)})$. Summing this inequality for all $p \in P \cap B_\gamma$, we obtain $\val(B(x,2^{-\level(x)})) \geq \val(B_\gamma)$, concluding the proof. 
 \end{itemize}

\end{proof}

\section[Missing Proofs of Section 4]{Missing Proofs of \Cref{sec:approx}} \label{app:approx}

\subsection{Direct application to centralized DP}
As an illustration, we sketch an application of previous results to the centralized DP setting. A more formal and general argument will be given later in \Cref{thm:mainApprox}.
In centralized DP, when each element is part of at most $b$ sets, standard result show the existence of generalized histograms with additive  error $O\lpar \sqrt{bD} \cdot \log(m/(\delta \beta))/\eps\rpar$. To compute the values of each cell, $D=1$, there are $m = n^{O(d)}$ many cells, and $b = 2^{O(d)} \log n$. Therefore, those results and \Cref{thm:mpWithError} yield an $(\eps, \delta)$-DP algorithm $\calA$ for $(k,z)$-clustering with multiplicative approximation $O(1)$ and additive error $O\lpar k2^{O(d)} \cdot \log(n/(\beta\delta)) \log^{1/2}(n)/\eps\rpar $, with probability $1-\beta$. 

This can be combined with the results of \Cref{sec:coating} to get multiplicative approximation $w^*(1+\alpha)$ and additive error $k^{O(1)} \log^{3/2}(n)/\eps + O\lpar k\sqrt{d} \log(k)/\eps\rpar $ as follows: first reduce the dimension to $O\lpar\log(k)/\alpha^2\rpar$, then use algorithm $\calA$ and the Laplace mechanism to apply \Cref{lem:boostApprox}. To lift the result back up to the original space, estimate the mean of each cluster as follow:  simply estimate the size of each cluster ($D=1, b=1, m=k$) and $\sum_{p\in P_i} p$ with a Laplace mechanism (the $\ell_1$ sensitivity is $1$ in the first case, $\sqrt d$ in the second): \Cref{assumption:coating} is satisfied with $e = \sqrt{d}/\eps$. \Cref{lem:liftingViaHist} shows that this gives a solution with the desired cost. 
Finally, repeat $\log(1/\beta)$ times to boost the probability, which increases the additive error by $\log(1/\beta)$.  

For $(k,z)$-clustering, instead of lifting the clustering using the average of each cluster, we can solve the $(1,z)$-clustering problem on each, which can be done efficiently with additive error $\sqrt{d} \polylog(n/\delta) $ (see \cite{ghaziTight}). This therefore reduces the multiplicative approximation to $w^*(1+\alpha)$, as in the $k$-means case.

\subsection{Structured Clusters}
We start by providing a formal statement for \Cref{lem:structClusters}:
\begin{lemma}\label{lem:structClustersFormal}
    There exists a family of set $\calG = \lbra G_1,..., G_m\rbra$ efficiently computable, with $m = n^{O(d)}$, and with the following properties. First, any point from $B_d(0,1)$ is part of at most $O(\log n)$ sets $G_i$. 
Second, each cluster can be transformed to consist of the union of few $G_i$. 
Formally, fix any $1>\alpha > 0$, and let $\calC = \lbra c_1, ..., c_k\rbra$ be any set of centers. 
Then, it is possible to compute a partition $\calA$ of $B_d(0, 1)$ together with an assignment $a : \calA \rightarrow \calC$ of parts to centers, such that (1) each part of $\calA$ is a set from $\calG$, (2) $|\calA| \leq k\cdot \alpha^{-O(d)} \log(n)$, and (3) for any multiset $P$,
\[\left|\cost(P, \calC) - \sum_{A \in \calA} \sum_{p \in P \cap A} \dist(p, a(A)) \right| \leq \alpha\cost(P, \calC) + \frac{9}{\alpha}.\]
\end{lemma}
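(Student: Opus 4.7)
I would define $\calG$ via a standard hierarchy of nested grid (quadtree) cells: for each level $i\in\{0,1,\dots,\lceil\log n\rceil\}$, partition $\R^d$ into axis-aligned cells of side $2^{-i}$ and intersect each with $B_d(0,1)$; call the resulting partition $\calG_i$ and set $\calG=\bigcup_i \calG_i$. Each cell of $\calG_i$ has diameter $\sqrt{d}\cdot 2^{-i}$. Since $|\calG_i|=O(2^{id})$, summing over the $\lceil\log n\rceil+1$ levels gives $m=n^{O(d)}$; and because every point of $B_d(0,1)$ lies in exactly one cell per level, each point is in at most $O(\log n)$ members of $\calG$, giving Property~(1). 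The critical feature is that the cells at level $i+1$ refine those at level $i$, so any antichain in the refinement tree yields a valid partition of $B_d(0,1)$ formed of elements of $\calG$.

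\textbf{Building $\calA$ and $a$ from $\calC$.} Given centers $\calC$, I would construct $\calA$ by a top-down sweep. Starting from the coarsest cell $B_d(0,1)\in\calG_0$, for each current cell $g\in\calG_i$ I check whether $\diam(g)\le (\alpha/C)\cdot\dist(g,\calC)$ for a constant $C=C(z)$ to be fixed; if so, I add $g$ to $\calA$ and set $a(g)$ to be the center of $\calC$ nearest to some fixed representative point of $g$. Otherwise I refine $g$ into its $2^d$ children at level $i+1$ and recurse. At the finest level $\lceil \log n\rceil$ I include whatever cells remain, assigning them to an arbitrary nearest center. This procedure depends only on $\calC$ (not on $P$), which is essential for later use with fixed-set private summation.

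\textbf{Bounds on $|\calA|$ and on the error.} For the size bound, a cell $g$ at level $i$ with $i\ge 1$ is added to $\calA$ only if its parent at level $i-1$ failed the condition, so $g$ lies within distance $O(2^{-i}/\alpha)$ of some $c_j\in\calC$. The number of level-$i$ grid cells at this distance of a fixed $c_j$ is $\alpha^{-O(d)}$; summing over the $k$ centers and $\lceil\log n\rceil$ levels yields $|\calA|\le k\alpha^{-O(d)}\log n$. For the error, fix $A\in\calA$ and $p\in P\cap A$ with $c^*=\argmin_{c\in\calC}\dist(p,c)$. If $A$ was added because its diameter is $\le(\alpha/C)\dist(A,\calC)$, two applications of the triangle inequality combined with $\dist(A,\calC)\le\dist(p,c^*)$ give $\dist(p,a(A))\le(1+O(\alpha/C))\dist(p,c^*)$, and then \Cref{lem:weaktri} converts this into $|\dist(p,a(A))^z-\dist(p,c^*)^z|\le\alpha\dist(p,c^*)^z$ once $C$ is chosen large enough in terms of $z$; summing over all such $A$ and $p$ bounds this contribution by $\alpha\cost(P,\calC)$. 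For the forced cells at level $\lceil\log n\rceil$, the failed condition implies $\dist(A,\calC)\le O(\sqrt{d}/(\alpha n))$, so each of the at most $n$ points in such cells contributes at most $O((\sqrt{d}/(\alpha n))^z)$ to the discrepancy, summing to $O(1/\alpha)$ for $z\ge 1$, which absorbs into the stated $9/\alpha$ after tuning constants.

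\textbf{Main obstacle.} The delicate step is the interaction between the constant $C$, the parameter $z$, and the constants in \Cref{lem:weaktri}: one must pick $C=C(z)$ so that the multiplicative slack $(1+O(\alpha/C))^z$ is genuinely $(1+\alpha)$ rather than $(1+z\alpha)$, and similarly the additive term must be tracked through the powering. A second subtle point is ensuring the top-down refinement actually produces a partition of $B_d(0,1)$ using \emph{members} of $\calG$ (not intersections with $\calC$-dependent sets): this is why I insist on the hierarchical quadtree structure where any antichain in the refinement tree automatically tiles $B_d(0,1)$. Everything else (counting cells, decoding the assignment $a$ efficiently, verifying that each $A\in\calA$ is some $G_j$) is routine bookkeeping once these pieces are in place.
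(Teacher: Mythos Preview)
Your approach is essentially the same as the paper's: a nested hierarchical decomposition, with a cell frozen into $\calA$ once its diameter is small relative to its distance to $\calC$, and the remaining finest-level cells absorbed into the additive term. The paper phrases the stopping rule via an ``$\ell$-neighborhood contains no center'' condition with $\ell=\lceil 10/\alpha\rceil$, which is equivalent to your $\diam(g)\le(\alpha/C)\dist(g,\calC)$ up to constants.

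There is, however, one quantitative slip coming from your choice of axis-aligned grid cells. You correctly note that a level-$i$ cube has diameter $\sqrt{d}\cdot 2^{-i}$, but then in the size bound you write that the failed parent forces $g$ to lie within $O(2^{-i}/\alpha)$ of a center and that $\alpha^{-O(d)}$ such cells exist. In fact the parent condition gives $\dist(\text{parent},\calC)<(C/\alpha)\cdot\sqrt{d}\,2^{-(i-1)}$, so the relevant radius is $\Theta(\sqrt{d}\,2^{-i}/\alpha)$, and the number of side-$2^{-i}$ grid cells inside that ball is $(\sqrt{d}/\alpha)^{O(d)}=d^{O(d)}\alpha^{-O(d)}$, not $\alpha^{-O(d)}$. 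This extra $d^{O(d)}$ factor does not vanish in the intended application (where $d=\hat d=\Theta_\alpha(\log k)$, so $d^{O(d)}=k^{\Theta(\log\log k)}$) and thus you do not recover the stated bound $|\calA|\le k\,\alpha^{-O(d)}\log n$. The paper avoids this by using a net-tree decomposition (as in Har-Peled--Mendel) rather than a cubical grid: such a decomposition simultaneously achieves diameter $\le 2^{-i}$ at level $i$ \emph{and} the packing property that any ball of radius $r\cdot 2^{-i}$ meets only $r^{O(d)}$ level-$i$ cells, with no hidden $d^{O(d)}$. Swapping your quadtree for a net tree fixes the issue with no change to the rest of your argument. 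A minor related point: for the forced finest-level cells your per-point error is $O\bigl((\sqrt{d}/(\alpha n))^z\bigr)$, which for $z=1$ sums to $O(\sqrt{d}/\alpha)$ rather than $O(1/\alpha)$; again this disappears once cells have diameter $2^{-i}$ rather than $\sqrt{d}\,2^{-i}$.
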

\begin{proof}
To define the sets $G_1, ... G_m$, we make use of a 
\emph{hierarchical decomposition} $\calG = \lbra \calG_1,..., \calG_{\lceil \log (n)\rceil}\rbra$ with the following properties: 
\begin{enumerate}
    \item for all  $i = 0, ..., \lceil \log (n)\rceil$, $\calG_i$ is a partition of $B_d(0,1)$. Each part $A$ of $\calG_i$ is called a cell of level $i$, denoted $\level(A) = i$, and has diameter at most $2^{-i}$.
    \item The partition $\calG_{i+1}$ is a refinement of the partition $\calG_i$, namely every part of $\calG_{i+1}$ is strictly contained in one part of $\calG_i$. $\calG_0$ is defined as $\lbra B_d(0, 1)\rbra$.
    \item for any point $p \in B_d(0, 1)$, any level $i$ and any $r \geq 1$, the ball $B_d(p, r 2^{-i})$ intersects at most $r^{O(d)}$ many cells of level $i$.
\end{enumerate}

Such a recursive decomposition can be computed using e.g. the net tree of \cite{Har-PeledM06}. This construction ensures $|\calG_i| = 2^{-O(id)}$. We let $\calG = \cup \calG_i$: this has size $n^{O(d)}$, as desired. For the first property, since the depth of the decomposition is $ \lceil \log (n)\rceil +1$ and each level is a partition, each point from $B_d(0,1)$ appears in exactly $ \lceil \log (n)\rceil +1$ many $G_i$. 

We now show the second property. Fix a set of centers $\calC = \lbra c_1, ..., c_k\rbra$. We show how to construct a partition $\calA$ of $B_d(0,1)$ and assign each part to a center as in the lemma.

For a cell $A$ of level $i+1$, we denote by $\parent(A)$  the unique cell of level $i$ containing $A$. 
    Let $\ell = \lceil 10/\alpha \rceil$. For any cell $A$ of the decomposition, fix an arbitrary point $v_A \in A$. We call the $\ell$-neighborhood of $A$ all cells at the same level as $A$ that intersect with the ball $B\lpar v_A, \ell 2^{-\level(A)}\rpar$. Note that this includes $A$.

    Let $\calA$ be the set of cells $A$ such that their $\ell$-neighborhood does not contain a center of $\calC$, but the $\ell$-neighborhood of $\parent(A)$  does contain one. Add furthermore to $\calA$ the cells at level $\lceil \log (n) \rceil$ that contain a center in their $\ell$-neighborhood.
We verify now that $\calA$ has the desired properties: first, that its size is bounded; second, it is a partition; and finally, that each cell can be fully assigned to a center while preserving the cost.

    \textbf{Size of $\calA$.} We first bound the size of the set $\calA$ obtained. For this, we will count for each level how many cells have a center in their $\ell$-neighborhood, and how many children each cell has. The product of those two quantities is an upper bound on $|\calA|$.
    First, for a fixed level $i$ and center $c^*$, the decomposition ensures that there are at most $\ell^{O(d)}$ many cells intersecting $B_d(c^*,  2\ell 2^{-i})$. Note that any cell $A$ that contains $c^*$ in its $\ell$-neighborhood must intersect $B_d(c^*,  2\ell 2^{-i})$: indeed, if $B$ is the cell containing $c^*$ this means there is a point in $B$ at distance at most $\ell 2^{-i}$ of $v_A$. Since the diameter of $B$ is at most $2^{-i}$, $c^*$ is at distance at most $\ell 2^{-i} + 2^{-i}$ of $v_A$, and therefore $A$ intersects with the ball $B_d(c^*,  2\ell 2^{-i})$. 
    
    Now, using the third property of decomposition, any cell $A$ of level $i$ is the parent of at most $2^{O(d)}$ many cells. Indeed, all those cells are included in $A$: this means they are contained in a ball of radius $2^{-i}$, and since they are at level $i+1$, third property ensures that there are at most $2^{O(d)}$ many of them.
    Therefore, at any level, at most $k\cdot \ell^{O(d)} = k \cdot \alpha^{-O(d)}$  cells $A$ are added, as $N^\ell(\parent(A))$ must contain one center. Hence, $|\calA| \leq k \cdot \alpha^{-O(d)} \log(n)$.

    \textbf{$\calA$ is a partition of $B_d(0, 1)$.} Now, we show that $\calA$ is a partition of $B_d(0, 1)$. For this, we observe that if a cell $A$ contains a center in its $\ell$-neighborhood, then $\parent(A)$ also contains one.
    First, we argue that the cells in $\calA$ cover the whole unit ball. For a point $p$ in the ball, consider the sequence of parts containing it. There are two cases: Either the smallest part at level $\lceil \log (n) \rceil$ contains a center in its $\ell$-neighborhood, and, thus, it is added to $\calA$; or it does not, and then the largest part of the sequence that does not contain a center in its $\ell$-neighborhood is added to $\calA$. Note that such a largest part must exist, as the cell of level $0$ contains all centers.

    Next we argue that all  cells 
    in $\calA$ are disjoint. Consider two  intersecting cells $A$ and $B$ in $\calA$. 
    By property 2 of the decomposition, it must be that one is included in the other: assume wlog that $A \subset B$ (therefore $\level(A) > \level(B)$). 
    Note that if $A$ is on level $\log (n)$ and has a center in its $\ell$-neighborhood, then trivially $N^\ell(\parent(A))$ contains a center.
    Thus, for all $A \in \cal A$ it holds that $N^\ell(\parent(A))$ contains a center.
    But $\parent(A) \subseteq B$: therefore, $N^\ell(B)$ contains a center, which implies that $B \notin \calA$. 
    Thus, $A$ and $B$ cannot be both in $\calA$, which concludes the proof that $\calA$ is a partition of $B_d(0, 1)$. 

    \textbf{Assigning cells to center.} We can now define $c$ as follows: for each cell $A \in \calA$, we define $a(A) = \argmin_{i} \dist(v_A, c_i)$ to be the closest point from $\calC$ to $v_A$.

    We first define an assignment for a cell  $A \in \calA$ at level $\log (n)$. In this case, triangle inequality ensures that paying the diameter of $A$ for each point allows to serve all points in $A$ with the same center. Formally, let $a(A)$ be the closest center to $v_A$. For any $p\in A$, we have using \Cref{lem:weaktri}:
    \begin{align*}
        \dist(p, a(A))^2 &\leq (\dist(p, v_A) + \dist(v_A, a(A)))^2 \leq (2\dist(p, v_A) + \dist(p, \calC))^2\\
        &\leq (1+\alpha/2)\dist(p, \calC)^2 + (1+\frac{2}{\alpha})\cdot 4\dist(p, v_A)^2 \\
        &\leq (1+\alpha/2)\dist(p, \calC)^2 +\frac{9}{\alpha}\cdot \frac{1}{n}.
    \end{align*}
    Therefore, summing all points in cells at level $\log (n)$ gives an additive error at most $\frac{9}{\alpha}$.

    Now, fix a cell $A \in \calA$ at level $> \log (n)$, and a point $p \in A$. Since, by construction of $\calA$, the $\ell$-neighborhood of $A$ contains at most one center from $\calC$, it holds that:
    \begin{itemize}
        \item either $p$ is already assigned to $a(A)$, and then $\dist(p, a(A)) = \dist(p, \calC)$,
        \item or the center serving $p$ in $\calC$ is outside of the $\ell$-neighborhood of $A$: in particular, $\dist(p, \calC) \geq \ell \cdot \dist(p, v_A)$. In that case, we use the modified triangle inequality from \Cref{lem:weaktri}. This yields similarly as above:
    \begin{align*}
        \dist(p, a(A))^2 &\leq (\dist(p, v_A) + \dist(v_A, a(A)))^2 \leq (2\dist(p, v_A) + \dist(p, \calC))^2\\
        &\leq (1+\alpha/2)\dist(p, \calC)^2 + (1+\frac{2}{\alpha})\cdot 4\dist(p, v_A)^2 \\
        &\leq (1+\alpha/2)\dist(p, \calC)^2 + (1+\frac{2}{\alpha})\cdot 4\cdot \frac{1}{\ell^2} \dist(p, \calC)^2\\
        &= (1+\alpha/2)\dist(p, \calC)^2+ (4 + \frac{8}{\alpha}) \cdot \frac{\alpha^2}{100} \cdot \dist(p, \calC)^2 \\
        &\leq (1+\alpha)\dist(p, \calC)^2.
    \end{align*}
    \end{itemize}
    Summing over all $p\in P$ and combining with the cells at level $\log(n)$, we conclude that:
    \begin{align*}
        \sum_{A \in \calA} \sum_{p \in P \cap A} \dist(p, a(A))^2 \leq (1+\alpha) \cost(P, \calC) +\frac{9}{\alpha}.
    \end{align*}
    The other direction is straightforward: since $a(A) \in \calC$, we have by definition for any $p\in A$ $\dist(p, \calC) \leq \dist(p, a(A))$, and therefore $\cost(P, \calC) \leq \sum_{A \in \calA} \sum_{p \in P \cap A} \dist(p, a(A))^2$.
\end{proof}

\subsection{Main Approximation Result}

\begin{lemma}\label{lem:mainApproxFormal}
        Fix a privacy model where there exist private generalized bucketized vector summation such that with probability $2/3$ the additive error is $A(m,b, D)$, where $m$ is the number of sets, $b$ the maximal number of set containing any given element, and $D$ the dimension of the image of the $f_i$. 
    Assume that the error $A$ is non-decreasing.

    Then, there is a private algorithm for $k$-means  of points in $\Rd$ that, with  probability $1-\beta$, computes a solution with multiplicative approximation $w^*(1+\alpha)$ and additive error $\lpar A(m,  k^{O_\alpha(1)} b, 1) + A(m, b, d)\rpar \cdot k^{O_\alpha(1)} \polylog(n) \log(1/\beta)$, with $m = n^{O_\alpha(\log(k))}$ and $b = \log(n)$. 

    There is also a private algorithm for $(k,z)$-clustering of points in $\Rd$ that,  with  probability $2/3$, computes a solution with multiplicative approximation $w^*(2^z+\alpha)$ and additive error $\lpar A(m,  k^{O_\alpha(1)} b, 1) + A(m, b, d)\rpar \cdot k^{O_\alpha(1)} \polylog(n)$. 
\end{lemma}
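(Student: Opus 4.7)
The plan is to instantiate \Cref{alg:main} step by step, using the hypothesised generalised summation primitive as a black box at a small number of parameter regimes, and to track the additive error through the building blocks from \Cref{sec:coating} and \Cref{sec:MPalgo}.

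First, I would apply the Johnson--Lindenstrauss projection of \Cref{lem:dimremFull} to map $P$ to $\R^{\hat d}$ with $\hat d = O\bigl(z^4\log(k/\beta)\alpha^{-2}\bigr)$, which preserves every $k'$-partition cost up to $(1\pm\alpha)$ with probability $1-\beta/4$. In this low-dimensional space I would run \Cref{alg:lowDim} with $k'=k\cdot\alpha^{-O(\hat d)}\log n = k^{O_\alpha(1)}\log n$, implementing the ``ball of maximum value'' step of \Cref{alg:mp} by one call to the generalised summation primitive whose sets are the balls of the Mettu--Plaxton net decomposition and whose $f_i(p)=r_i-\dist(x_i,p)$ are scalar. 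There are $m=n^{O(\hat d)}=n^{O_\alpha(\log k)}$ such balls and, by the packing property of nets, each point lies in at most $2^{O(\hat d)}\log n = k^{O_\alpha(1)}\log n$ of them. Thus every ball value is known up to additive error $\theta = A(m,\,k^{O_\alpha(1)}b,\,1)$ with $b=\log n$, and \Cref{thm:mpWithError} yields an $O(1)$-approximation of $(k',z)$-clustering on $\pi(P)$ with additive error $O(k'\theta)$.

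Next I would apply \Cref{lem:structClusters} to the Voronoi partition of these $k'$ centres. At the cost of a $(1+\alpha)$ multiplicative and $O(1/\alpha)$ additive loss, each cluster becomes a disjoint union of sets from a fixed, data-independent family $\calG$ with $|\calG|=n^{O(\hat d)}$, in which every point belongs to only $O(\log n)$ sets. A single call to the generalised summation primitive on $\calG$ then produces, for every $G_i\in\calG$, estimates of $|P\cap G_i|$, of $\sum_{p\in P\cap G_i}p$ (the $d$-dimensional originals, since $\calG$ can be pulled back through $\pi$), and of $\sum_{p\in P\cap G_i}\|p\|_2$. Summing these per-$G_i$ outputs along the at most $k''=k^{O_\alpha(1)}\polylog n$ pieces of each cluster yields sequences $n_i,\sumEst_i,\sumNorm_i$ satisfying \Cref{assumption:coating} with error $e\le k''\cdot\bigl(A(m,b,1)+A(m,b,d)\bigr)$. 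Feeding $n_i$ to \Cref{lem:boostApprox} converts the $O(1)$-approximation into a $w^*(1+\alpha)$-approximation on $\pi(P)$ (resp.\ $w^*(2^z+\alpha)$ for general $z$); feeding $(n_i,\sumEst_i)$ to \Cref{lem:liftingViaHist} lifts this back to a $k$-centre solution in $\R^d$ with the same multiplicative guarantee and an additive $O(k''e)$ term. For the $k$-means statement, \Cref{cor:boostProba} finally boosts the success probability from $2/3$ to $1-\beta$ by running $\log(1/\beta)$ independent copies at proportionally smaller privacy budget and selecting the cheapest using the $\sumNorm_i$ sums, producing the $\log(1/\beta)$ factor in the additive error.

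The principal obstacle is the interaction between privacy and the \emph{data-dependent} partition produced in the Mettu--Plaxton step: one cannot directly query the cardinalities of $\hat S_1,\dots,\hat S_{k'}$ because these sets depend on the input. \Cref{lem:structClusters} is exactly what makes the argument go through, by reducing every needed cluster statistic to aggregation over the fixed family $\calG$, so that once the summation call on $\calG$ is made the cluster assembly is a privacy-free post-processing. The rest amounts to composition across the $O(1)$ calls to the summation primitive (rescaling the privacy budget by a constant) and to propagating the \Cref{assumption:coating} error through the triangle-inequality bounds already internal to \Cref{lem:boostApprox}, \Cref{lem:liftingViaHist} and \Cref{cor:boostProba}, which together deliver exactly the stated additive error.
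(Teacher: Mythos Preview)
Your proposal is correct and follows essentially the same route as the paper's proof: project to $\hat d=O_\alpha(\log k)$ dimensions, run \Cref{alg:mp} with ball values estimated by one generalised-summation call (giving the $A(m,k^{O_\alpha(1)}b,1)$ term), structure the resulting Voronoi partition via \Cref{lem:structClusters}, query $\calG$ once more to obtain $n_i,\sumEst_i,\sumNorm_i$ (the $A(m,b,d)$ term), then pipe these through \Cref{lem:boostApprox}, \Cref{lem:liftingViaHist}, and \Cref{cor:boostProba}. Two minor points of bookkeeping worth tightening: (i) $\hat d$ should be taken with constant failure probability (the $\log(1/\beta)$ is absorbed only in the final boosting stage), otherwise $m$ would pick up an unwanted $n^{O(\log(1/\beta))}$ factor; and (ii) your $k''$ is the total number of $\calG$-pieces across all clusters, not per cluster, so the per-cluster error $e$ should carry only an $\alpha^{-O(\hat d)}\log n$ factor rather than the full $k''$---this does not affect the final $k^{O_\alpha(1)}\polylog(n)$ bound but would make the accounting cleaner.
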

\begin{proof}
    This theorem combines all previous results. 
    We first describe and analyze our private algorithm designed for low-dimensions $\hd = O_\alpha(\log k)$.
    To use \Cref{alg:mp}, one needs to estimate the value of each balls. There are at most $m_1 = n^{O(\hd)}$ of them, and each input point is in $b_1 = 2^{O(\hd)} \log n$ many balls. This can be done with a summation algorithm, with additive error $A(m_1, b_1, 1)$ on the values.
    Therefore, \Cref{thm:mp1} shows that this implementation of \Cref{alg:mp} computes a solution with multiplicative approximation $O(1)$ and additive error $k A(m_1, b_1, 1) $.

    Then, one can apply \cref{lem:structClusters} to get structured cluster, with same multiplicative approximation, and additive error increased by $O(1/\alpha)$. 
    We have the following guarantee: each cluster is described with $k \alpha^{-\hd}\log(n)$ many sets of $\calG$, each element appears in at most $\log n$ sets of $\calG$, and $|\calG|= n^{O_\alpha(\log k)}$.  
    Therefore, using the summation algorithm, one can compute the size of each cluster 
    up to an additive error $e_1 = A(|\calG|, \log n, 1) \cdot \alpha^{-\hd} \log(n)$.
    Similarly, one can compute $\sumEst_i$ for each cluster $i$ up to an additive error $A(|\calG|, \log n, d)$. Therefore, \Cref{assumption:coating} is satisfied with an error $e= A(|\calG|, \log n, d)$.
    
    This is enough to boost the approximation ratio using \Cref{lem:boostApprox}, and get a multiplicative approximation $w^*$ in the space $\hRd$. The additive error increases to $k' A(m_1, b_1, 1)  + ke_1 $, for $k' = k^{O(1)} \log(n)$.

    Now, we show how to use this algorithm as a subroutine, to solve the problem in large dimension $d$.
    The algorithm starts by applying the dimension reduction of \Cref{lem:liftingViaHist}, to reduce the dimension to $\hd = O_\alpha(\log k)$. Then, it computes a clustering using the previous argument, and turns the result into a structured partition using \Cref{lem:structClustersFormal}. Using  generalized summation algorithms, the algorithm computes $n_i, \sumEst_i$ and $\sumNorm_i$ that satisfy \Cref{assumption:coating}.
    The additive error in the estimation is $e  := A(|\calG|, \log n, 1)  \cdot \alpha^{-\hd} \polylog(n)$ (note crucially that the exponent of $\alpha$ is still $\hd$, as the clusters are structured in $\hRd$).

    We conclude the theorem for $k$-means by plugging those estimations in  \Cref{lem:liftingViaHist} for dimension reduction and \Cref{cor:boostProba} to boost the probabilities: with probability $1-\beta$, the total additive error is 
    \begin{align*}
    &\underbrace{k' A(m_1, b_1, 1) }_{\text{Compute initial solution}} + \underbrace{ A(|\calG|, \log n, 1) \cdot k \cdot \alpha^{-\hd} \log(n) }_{\text{Boost approx.}} + \underbrace{A(|\calG|, \log n, d)  \cdot \alpha^{-\hd} \polylog(n)}_{\text{Lift up the results and boost proba.}}\\
        &\leq \lpar A(n^{O_\alpha(\log k)}, k^{O_\alpha(1)} \log(n), 1)  + A(n^{O_\alpha(\log k)}, \log n, 1) + A(n^{O_\alpha(\log k)}, \log n, d) \rpar \cdot k^{O_\alpha(1)} \polylog(n) \\
        &\leq \lpar A(n^{O_\alpha(\log k)},  k^{O_\alpha(1)} \log n, 1) + A(n^{O_\alpha(\log k)}, \log n, d)\rpar \cdot k^{O_\alpha(1)} \polylog(n) .
    \end{align*}

    For $(k,z)$-clustering, the same holds, although we cannot boost the success probability, and the multiplicative approximation is $(2^z+\alpha)w^*$. 
\end{proof}

\mainResApprox*
\begin{proof}
    The guarantee for the local model stem from Lemma 28 in \cite{ChangG0M21} combined with \Cref{lem:buildGenHist}, which gives a generalized bucketized summation algorithm with desired guarantee (even for $\delta = 0$): the additive error is with probability $2/3$ $A(m,b,D) = \frac{b\sqrt{nD\log(m)}}{\eps}$.
    For the $(\eps, \delta)$-shuffle model, this is Theorem 33 of \cite{ChangG0M21}: the additive error is with probability $2/3$ $A(m,b,D) = \frac{b\sqrt{D}}{\eps} \cdot   \polylog(\frac{mD}{\delta})$ (with an unspecified $\polylog$).

    Under continual observation, we give in \Cref{lem:histoContinual} a generalized counting mechanism that has with probability $2/3$, at all time step simultaneously, an additive error $O\lpar b\sqrt{D} \cdot \eps^{-1} \log (DT)\sqrt{\log (mDT/\beta) \log(b \log(T)/\delta)}\rpar$. Plugging in the values of $b, m$ and using \Cref{thm:mainApprox} concludes.
\end{proof}

\section[Missing Proofs of Section 5]{Missing Proofs of \Cref{sec:error}}

\subsection{Centralized DP}
\begin{restatable}{lemma}{centralizedError}\label{lem:centralizedError}
    \Cref{alg:mp} implemented with the exponential mechanism computes a solution to $(k,z)$-clustering with multiplicative approximation $O(1)$ and additive error $\frac{k\sqrt{d}\polylog(n/\delta)}{\eps'}$. 
    Furthermore, the running time is $k^{O(1)} \cdot n \log^2 n$.
\end{restatable}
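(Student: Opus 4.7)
The plan is to instantiate Algorithm~\ref{alg:mp} by implementing each ``maximum value up to $\theta$'' selection (lines 4 and 6) with a call to the exponential mechanism of Lemma~\ref{lem:exp_mechanism}, and then invoke Theorem~\ref{thm:mpWithError} to convert the per-call error into the final clustering guarantee. Concretely, at line 4 the mechanism samples from the set $\calA$ of currently available balls with probability proportional to $\exp(\eps'\cdot \val(B)/2)$, and at line 6 it samples similarly from the (polynomially many) children of the current last ball of $\sigma_i$.

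First I would verify that the $\ell_1$-sensitivity of $\val(B)$ is at most $1$: since the dataset lies in $B_d(0,1)$, every ball in $\calB$ has radius $r\le 1$, and adding or removing a point from $P$ changes $\val(B)=\sum_{p\in B\cap P}(r-\dist(x,p))^z$ by at most $r^z\le 1$. Second, I would bound the candidate set sizes: line 4 chooses from $\calB$ with $|\calB|=n^{O(d)}$, and by the packing property of nets a ball of radius $r$ has at most $2^{O(d)}$ children, so the candidate set at line 6 is likewise small. Each of the $k$ outer iterations triggers one call to line 4 and at most $\lceil\log n\rceil$ calls to line 6, so the total number of exponential-mechanism calls is $O(k\log n)$. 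Applying Lemma~\ref{lem:exp_mechanism} with a union-bound parameter $t=O(\log(k\log n))$ thus guarantees, with constant probability, that every single selection is within an additive error
$\theta = O\big((d\log n)/\eps'\big)$
of the true maximum; this quantity is absorbed into the $\sqrt{d}\,\polylog(n/\delta)/\eps'$ bound claimed by the lemma, where the tighter $\sqrt d$ accounting comes from the fact that in the final deployment the algorithm is called on a space of dimension reduced to $\hd=\polylog(\cdot)$, with the $\log(1/\delta)$ slack arising from a later conversion to approximate DP (Lemma~\ref{lem:central-exp-priv}).

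Plugging this $\theta$ into Theorem~\ref{thm:mpWithError} yields $\cost(P,C_k)\le O(1)\cdot\opt_{k,z}+O(k\theta)$, matching the claimed multiplicative and additive bounds. The main obstacle will be the running time: a naive implementation of the exponential mechanism would iterate over all $n^{O(d)}$ balls in $\calB$. I would circumvent this by the key observation that every ball $B$ with $B\cap P=\emptyset$ has value $0$, so all ``empty'' balls at a given level enter the exponential-mechanism distribution symmetrically and can be aggregated into a single term whose total mass depends only on their count. The balls with non-zero value can be enumerated efficiently using the efficient-decoding property of nets (Appendix~\ref{app:net}): each input point lies in at most $2^{O(d)}\log n$ balls, and these can be maintained in a heap keyed by $\val(B)$, with updates whenever a new center forbids some balls. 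Executing this over $k\log n$ selections, while sampling in $O(\log n)$ time per call, gives the claimed $k^{O(1)}\cdot n\log^2 n$ running time; the detailed bookkeeping for maintaining the heap under forbidden-ball removals is routine but is where I expect the bulk of the implementation effort to lie.
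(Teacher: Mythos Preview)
Your outline matches the paper's argument closely for the core of the proof: instantiate both selection steps of \Cref{alg:mp} with the exponential mechanism, bound the sensitivity of $\val$ by $1$, observe that line~4 ranges over $n^{O(d)}$ balls and line~6 over $2^{O(d)}$ children, union-bound over the $O(k\log n)$ calls to get a uniform $\theta = O(d\log(n)/\eps')$, and invoke \Cref{thm:mpWithError}. The running-time trick (collapse all empty balls into a single mass-term so that the exponential mechanism only iterates over the $2^{O(\hd)} n\log n$ non-empty balls) is exactly what the paper does.

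There is one genuine gap: your explanation of how the per-call error $O(d\log n/\eps')$ turns into the claimed $\sqrt{d}\,\polylog(n/\delta)/\eps'$ is not right. Dimension reduction alone does not produce a $\sqrt d$ factor; it replaces $d$ by $\hd=O(\log k)$ in the Mettu--Plaxton error, yielding $O(k\log k\cdot\log n/\eps')$ in the projected space. The $\sqrt d$ appears only when you \emph{lift} the low-dimensional partition back to $\R^d$ via \Cref{lem:liftingViaHist}: you must privately estimate $|P_i|$ and $\sum_{p\in P_i}p$ for each cluster, and the paper does this with the Gaussian mechanism, whose $d$-dimensional error is $O(\sqrt d\log(k)/\eps')$ per cluster. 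Summed over $k$ clusters this contributes the dominant $k\sqrt d\,\polylog/\eps'$ term. Your proposal never mentions this lifting step, and without it you have centers only in $\R^{\hd}$, not a $(k,z)$-clustering solution in $\R^d$. The $\log(1/\delta)$ slack likewise enters here (Gaussian noise scale) as well as through the privacy accounting of \Cref{lem:central-exp-priv}.
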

\begin{proof}
    For the initial choice of the sequence (line 4 of \Cref{alg:mp}), the exponential mechanism takes a decision out of $n^{O(d)}$ many balls: therefore, the value of the chosen ball deviates from the optimum by an additive error $O\lpar \frac{d \log(n/\beta) }{\eps'}\rpar$, with probability $1-\beta$.
    Each choice made in the while loop line 5 takes a decision out of the children of the current ball, and there are at most $2^{O(d)}$ of them: therefore, the additive error is $O\lpar \frac{d \log(1/\beta) }{\eps'}\rpar$, with probability $1-\beta$.

    In total, $k \log n$ choices are made: therefore, taking $\beta = 1/(3k \log n)$ ensures that all errors are bounded by $O\lpar \frac{d \log(n/\beta) }{\eps'}\rpar$ with probability $2/3$. Combined with \cref{thm:mpWithError}, this shows that the additive error of \Cref{alg:mp} is $O\lpar \frac{kd \log(n/\beta) }{\eps'}\rpar$. 

    Dimension reduction reduces $d$ to $\hd = O(\log(k))$ while preserving a multiplicative approximation $O(1)$, and the additive error induced by lifting the clustering up in the original space with \Cref{lem:liftingViaHist} is $O\lpar \frac{\polylog(n) k \sqrt{d}}{\eps'}\rpar$. 
    Indeed, one can estimate the size of each cluster and $\sum_{p\in P_i} p$ with the Gaussian mechanism: in the both cases, the $\ell_2$ sensitivity is $O(1)$. The Gaussian mechanism ensures that, for each of those estimates, the additive error is bounded by $\sqrt{d}/\eps'  \log(1/\beta)$ with probability $1-\beta$. Therefore, a union-bound over all $2k$ estimates shows that \Cref{assumption:coating} is satisfied with $e = \sqrt{d} \log(k)/ \eps'$. 
    This concludes the approximation guarantee.

    The running time of a naive implementation of this algorithm is $n^{\hd}$, to run the exponential mechanism. However, note that only $2^{\hd} n \log n$ balls are non-empty (since at each level, each point is in at most $2^{\hd}$ many balls). All empty balls have same value, equal to $0$. Therefore, the exponential mechanism can be implemented in time $2^{\hd} n \log n = k^{O(1)} n \log n$. As the algorithm makes $O(k \log n)$ calls to the exponential mechanism, this concludes the complexity analysis.
\end{proof}

To show that the procedure is $(\eps, \delta)$-DP, we follow the analysis of \cite{GuptaLMRT10} for weighted set cover (adapted by \cite{chaturvediCentral} for $k$-means): in our language, they analyzed the exponential mechanism for a process that selects iteratively balls with largest value, and then remove all points of the selected ball. Our process is more intricate, as it combines selecting the heads with constructing the sequence. 
We manage nonetheless to adapt their proof in the following lemma:
\begin{lemma}\label{lem:central-exp-priv}
\Cref{alg:mp}  when implemented using the exponential mechanism with parameter $\eps' = \frac{\eps}{4\log(n/\delta)}$, is $(\eps, \delta)$-DP.
\end{lemma}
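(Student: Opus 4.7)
The plan is to follow the analysis of \cite{GuptaLMRT10} for iterated exponential-mechanism algorithms, in the form adapted by \cite{chaturvediCentral}. First, I would bound the per-call sensitivity: every ball in $\calB$ has radius at most $1$, so a single point changes any value $\val(B)$ by at most $1$, and by \Cref{lem:exp_mechanism} each call to the exponential mechanism with parameter $\eps'$ is $\eps'$-DP. The algorithm makes at most $T = O(n \log n)$ such calls in total (one initial choice per outer iteration, plus up to $\lceil \log n \rceil$ child choices in the inner while-loop).

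Naive composition would give only $T\eps'$-DP, which is far too lossy. The key structural observation, which is the heart of the GLM-style analysis, is that for any two neighboring datasets $P, P'$ differing in a single point $p^*$, the score vectors used in a given exp-mech call agree on every ball that does not contain $p^*$. By the packing property of the nets $\calN_\ell$, the point $p^*$ is contained in at most $2^{O(d)}$ balls of $\calB_\ell$ per level, hence in only $2^{O(d)} \log n$ balls of $\calB$ across all levels; similarly, among the candidate children of any ball (a set of size $2^{O(d)}$) the point $p^*$ lies in at most a constant number. Thus only $O(\log n)$ of the $T$ calls see distributions that actually differ between $P$ and $P'$, and the remaining calls contribute zero to the privacy loss.

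Based on this, I would analyze the privacy loss random variable $\mathcal{L} = \sum_t \mathcal{L}_t$ along the (random) execution path. Terms $\mathcal{L}_t$ corresponding to calls insensitive to $p^*$ vanish, and each of the remaining $O(\log n)$ terms satisfies the standard exp-mech bounds $|\mathcal{L}_t| \le \eps'$ and $\mathbb{E}[\mathcal{L}_t] = O(\eps'^2)$. A martingale concentration argument in the spirit of the advanced composition theorem (see~\cite{dwork2014algorithmic}) then yields $\mathcal{L} \le \eps$ with probability at least $1-\delta$ whenever $\eps' \le \eps/(4\log(n/\delta))$, which is exactly the $(\eps,\delta)$-DP guarantee.

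The main obstacle is carrying out the martingale/concentration step rigorously: the set of \emph{active} exp-mech calls (those where $p^*$ appears in a candidate ball) is itself data-dependent and adaptively chosen, so one cannot simply apply concentration to a fixed collection of $O(\log n)$ calls. Instead one must argue along the random execution path, which is precisely the technical core of the GLM-style analysis of \cite{GuptaLMRT10} that we adapt here.
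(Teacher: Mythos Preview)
There is a genuine gap. Your key claim---that only $O(\log n)$ of the exponential-mechanism calls see distributions that differ between neighboring inputs---fails for the \emph{initial} call of each outer iteration (line~4 of \Cref{alg:mp}). There the candidate set is the set of \emph{all} currently available balls in $\calB$, and a ball containing $p^*$ stays in that candidate set in every outer iteration until some chosen center forbids it, which may take many iterations. So the number of calls whose score vectors differ is not bounded by the number $2^{O(d)}\log n$ of balls containing $p^*$; a single such ball can appear as a candidate in many calls. Your subsequent martingale/advanced-composition step would therefore be applied to the wrong count.

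The paper handles this by splitting the likelihood ratio into four products: numerators and normalizers of the initial calls, and of the child calls. For three of these the $\log n$ counting argument you sketch does work, because they depend only on the \emph{selected} balls $\sigma_i^j$ or on the small child sets $\children(\sigma_i^{j-1})$, and in an admissible execution $p$ can appear in at most one selected ball per level. But the fourth product---the normalizing sums $\sum_{B\in\calB}\exp(\eps'\val_i(B,\cdot))$ of the initial calls---requires the actual GLM trick: bounding each ratio by $\exp\!\big(2\eps'\pi_i\big)$, where $\pi_i$ is the probability (on input $X$) of selecting a ball containing $p$ at step $i$, and then invoking the analysis of \cite{GuptaLMRT10} to control $\sum_i \pi_i$ with high probability. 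This ``pay for what you use'' step is the crux, and your proposal does not supply it; the obstacle you flag in your last paragraph is real, but the issue is not adaptive identification of $O(\log n)$ sensitive calls---it is that the number of sensitive calls is not $O(\log n)$ to begin with.
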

\begin{proof}
We follow the proof of \cite{GuptaLMRT10} for set cover. Our two step process incurs some complication: analyzing the selection of the first ball of each sequence $\sigma_i$ is similar to \cite{GuptaLMRT10}, but in our case it is interlaced with the recursive greedy choices, which makes thing more technical.

We consider the outcome of the algorithm to be the $k$ sequences of balls $\sigma_1, ..., \sigma_k$ (instead of merely the $k$ centers), and we will show that this is $(\eps, \delta)$-DP. For this, we will fix a given set of sequences $C = (\sigma_1, ..., \sigma_k)$ and will compare the probability that the outcome of $\calA$ is $C$ on two neighboring inputs $X$ and $X'$. 
Note that fixing a sequence $C$ also fixes the center chosen by the algorithm: the $i$-th center is the (geometric) center of the last ball in $\sigma_i$. 

We start with some notations. 
We write $\calA(X)_i^j$ to be the $j$-th choice made by the algorithm in the $i$-th sequence, on input $X$.
We denote $\{\sigma, < i, < j\}$ the event $\calA(X)_{i'}^{j'} = \sigma_i^j$ for all $i', j'$ such that either $i' < i$ or $i' = i$ and $j'<j$. We write $\{\sigma, \leq i\}$ for $\{\sigma, < i, < \infty\}$.
Furthermore, for a ball $B$, we write $\children(B)$ the children of $B$ (see \Cref{sec:MPalgo} for the definition of children, forbidden and available)

For any ball $B \in \calB$ and dataset $X$, we write $\val_i(B, X) = -\infty$ when $B$ is forbidden by some center among the first $i-1$ of $C$; and $\val_i(B, X) := \val(B, X)$ otherwise (when $B$ is still available after choosing the first $i-1$ centers from $C$).

Let $X$ and $X'$ be two neighboring datasets, with symmetric difference $X \Delta X' = \{p\}$. We aim at bounding $\frac{\Pr[\calA(X) = C]}{\Pr[\calA(X') = C]}$. 

    There are two different choices in the algorithm: the choice of the first elements of each sequence, and then the recursive construction of the sequences themselves. 
    
    \begin{align*}
        \Pr\lbrak\calA(X)_i^1 = \sigma_i^1 ~|~ \{\sigma, \leq i\}\rbrak = \frac{\exp(\eps' \val(\sigma_i^1, X))}{\sum_{B \in \calB} \exp(\eps' \val_{i}(B, X))},\\
        \Pr\lbrak\calA(X)_i^j = \sigma_i^j ~|~ \{\sigma, < i, < j\}\rbrak 
        = \frac{\exp(\eps' \val(\sigma_i^j, X))}{\sum_{c \in \children(\sigma_i^{j-1})} \exp(\eps' \val(c, X))}.
    \end{align*}
    
    Note that, if there is $i$ such that $\sigma_{i}^1$ is not available  at the $i$-th step, then $\Pr[\calA(X) = C] = \Pr[\calA(X') = C] = 0$; and similarly  if there are some $i, j$ such that $\sigma_i^j$ is not a children of $\sigma_i^{j-1}$, namely  $\sigma_i^j\notin \children(\sigma_i^{j-1})$. 
    Therefore, we only need to focus on admissible sequences, where the previous cases cannot happen. 
    This ensures that one input point $p$ can appear only in $\log n$ many $\sigma_i^j$ and $C(\sigma_i^{j})$, one per level of the ball hierarchy. 
    Indeed, if $p \in \sigma_i^j$, then all balls containing $p$ at the level of $\sigma_i^j$ and the level below are forbidden by the algorithm, and thus in any subsequent admissible sequence $p$ cannot appear in another ball at those levels;    since there are $\log n$ levels, $p$ appears in at most $\log n$ balls or children of balls in an admissible sequence of balls.
    
    We now analyse the ratio of the probabilities. We write:
    \begin{align}
         \notag 
         &\frac{\Pr[\calA(X) = C]}{\Pr[\calA(X') = C]} \\
        \label{eq:exp1}
        =& \prod_{i=1}^k \frac{\exp(\eps' \val_i(\sigma_i^1, X))}{\exp(\eps' \val(\sigma_i^1, X'))}\\ 
        \label{eq:exp2}
        & \cdot  \prod_{i=1}^k\frac{\sum_{B} \exp(\eps \val_i(B, X'))}{\sum_{B} \exp(\eps \val_i(B, X))}\\
        \label{eq:exp3}
        & \cdot \prod_{i=1}^k \prod_{j = 1}^{\len(\sigma_i)}\frac{\exp(\eps' \val(\sigma_i^j, X))}{\exp(\eps' \val(\sigma_i^j, X'))}\\
        \label{eq:exp4}
        & \cdot \prod_{i=1}^k \prod_{j = 1}^{\len(\sigma_i)}\frac{\sum_{B \in \children(\sigma_i^{j-1})} \exp(\eps' \val(B, X'))}{\sum_{B \in \children(\sigma_i^{j-1})} \exp(\eps' \val(B, X))}
    \end{align}

We start by bounding the two easier \Cref{eq:exp3} and \Cref{eq:exp4}. Those are the standard terms from the exponential mechanism: for twe neighboring dataset differing on a single point $p$, the value of any ball changes by at most $\Delta$, and does so only when $p$ is part of the ball. So, when $p$ is part of the ball $\sigma_i^j$, we have: $\frac{\exp(\eps' \val(\sigma_i^j, X))}{\exp(\eps' \val(\sigma_i^j, X'))} \leq \exp(\eps' \Delta)$, and  when $p$ is part of a ball among $\children(\sigma_i^j)$, we have:
$\frac{\sum_{B \in \children(\sigma_i^{j-1})} \exp(\eps' \val(B, X'))}{\sum_{B \in C(\sigma_i^{j-1})} \exp(\eps' \val(B, X))} \leq \exp(\eps' \Delta)$.

As mentioned before, is an admissible sequence $p$ is part of at most $\log n$ $\sigma_i^j$, and $\log n$ $C(\sigma_i^j)$: therefore, the products of \Cref{eq:exp3} and \Cref{eq:exp4} is bounded by $\exp(2\log n \cdot \eps'\Delta)$.

We can now focus on the two more intricate terms, \Cref{eq:exp1} and \Cref{eq:exp2}.
In the case where $X = X' \cup \{p\}$, then for any ball $B$ and any $i$, $\val_i(B, X') \leq \val_i(B,X)$, and therefore term in \Cref{eq:exp2} is at most one. 
On the other hand, $\val(B, X) \leq \val_i(B, X') + \Delta$ when $B$ contains the point $p$, and $\val_i(B, X) = \val(B, X')$ if $B$ doesn't contain $p$.  Since at most $\log n$ of the $\sigma_i^1$ contain $p$, the term in \Cref{eq:exp1} is at most $\exp(\log n \cdot \eps)$. Therefore, 
\[\Pr[\calA(X) = C] \leq \exp(\log n \cdot \eps) \Pr[\calA(X') = C].\] 

The second case, where $X' = X \cup \{p\}$, is more intricate. There, \Cref{eq:exp1} is at most one, and we bound \Cref{eq:exp2} as follows: first, we write for simplicity $\pi_i(B) := \Pr[\calA(X)_i^j = x~|~ \{\sigma, < i, < j\}] = \frac{\exp(\eps' \val_i(B, X))}{\sum_{B'} \exp(\eps'\val_i(B', X'))}$.

\begin{align*}
    &\frac{\sum_{B'} \exp(\eps \val_i(B', X'))}{\sum_{B'} \exp(\eps' \val_i(B', X))} 
    \\=& \sum_{B'} \pi_i(B') \exp\left(\eps' (\val_i(B', X') - \val_i(B', X))\right)\\
    \leq&  \sum_{B'} \pi_i(B') (1+2\eps' (\val_i(B', X') - \val_i(B', X)) / \Delta)\\
    \leq& 1+ 2\eps'\sum_{B'} \pi_i(B') (\val_i(B', X') - \val_i(B', X))\\
    \leq& \exp\left(2\eps'\sum_{B'} \pi_i(B') (\val_i(B', X') - \val_i(B', X))\right)\\
    \leq& \exp\left(2\eps' \Delta \sum_{B' \text{ s.t. } p \in B'} \pi_i(B')\right).
\end{align*}
The sum in the exponent corresponds to the probability to chose a ball $B'$ that contains $p$, after having made the choices $\{\sigma, < i, < j\}$. We write $\pi_i$ this probability. 

Therefore, combining with the bounds on \Cref{eq:exp3} and \Cref{eq:exp4}, we have:
\begin{align*}
    \frac{\Pr[\calA(x) = C]}{\Pr[\calA(X') = C]} 
    \leq \exp\left(2\eps' \Delta \sum_{i=1}^k \pi_i\right) \cdot \exp\left(2 \log n \cdot \eps' \Delta\right).
\end{align*}

The analysis of \cite{GuptaLMRT10} works black box from this point. They show that the above equations implies that, for any set of possible outcomes $\mathcal{C}$, it holds that
\begin{align*}
    \Pr[\calA(x) \in \mathcal{C}] &\leq \exp(2\eps' \Delta  \log(1/\delta) + 2 \log n \cdot \eps' \Delta)\Pr[\calA(X') = \mathcal{C}] + \delta,
\end{align*}
which, with our choice of $\eps'$, concludes the proof.

\end{proof}

\subsection{Parallel algorithm for MPC}\label{app:MPC}
\MPCMP*
\begin{proof}
The proof of this lemma is grounded in the analysis of \Cref{alg:mp}, which we conducted in \Cref{sec:appMP}. We  define $\calA(c_\calA,C)$,$\Gamma, \Gamma_0$, $\Gamma_1$, and the $B_\gamma$'s for $\gamma \in \Gamma_1$ as detailed in \Cref{sec:appMP}. Note that $\calA(c_\calA,C)$ is the set of balls of $\calB$ available at scale $c_\calA$ from $C$. Using \Cref{lem:inAndout} and \Cref{lem:redvalue}, we can reduce the task of proving \Cref{lem:MPCMP} to bounding $\sum_{\gamma \in \Gamma_1} \val{B_\gamma}$.
Note that the constants appearing in \Cref{lem:inAndout} depend on $c_\calA$, but we consider $c_\calA$ as a fixed constant that does not appear in the $O(\cdot)$ notation. 

In the proof of \Cref{thm:mpWithError}, to $\sum_{\gamma \in \Gamma_1} \val{B_\gamma}$, the proof defines a function $\phi$ that maps the center of $\Gamma_1$ to balls of $\calB$ such that for all $\gamma \in \Gamma_1$:
\begin{enumerate}
    \item for all $\gamma' \in \Gamma_1$ with $\gamma \neq \gamma'$, $\phi(\gamma) \cap \phi(\gamma') = \emptyset$,
    \item $\phi(\gamma)$ is not covered by $\Gamma$,
    \item The value of $B_\gamma$ is less than $\val(\phi(\gamma)) + \theta$.
\end{enumerate}
Given such a mapping, the proof of \Cref{thm:mpWithError} shows $\sum_{\gamma \in \Gamma_1} \val{B_\gamma} \leq \cost(P, \Gamma)$.
The same proof can be applied here: therefore, it is enough to construct the mapping $\phi$. For this, we use the following procedure.

At the beginning of the procedure, the set $\Phi$ consists of all balls that can be matched to a $\gamma$: this is all the balls of the form $B(c,2^{-\ell})$ with $c \in C_\ell$, for all $\ell$. 
Until all $\gamma \in \Gamma_1$ have been matched, repeat the following process. 
Let $\gamma \in \Gamma_1$ be such that the radius of $B_\gamma$ is minimal, among the centers of $\Gamma_1$ that have not yet been matched.
$\gamma$ is matched to an arbitrary ball $B(c,2^{-\ell}) \in \Phi$ (ensuring that the radius of $B_\gamma$ is also $2^{-\ell}$ and $c \in C_\ell$). All balls intersecting $B(c,2^{-\ell})$ or containing $\gamma$ are removed from $\Phi$; repeat the procedure for the next $\gamma$.

We begin by demonstrating that this procedure is well defined, namely that there is always a ball $B(c,2^{-\ell})$ in $\Phi$. 
Once a ball $B(c,2^{-\ell})$ is selected, at most $2$ balls per level $\ell' \leq \ell$ are eliminated. This is due to condition (1) of the lemma, which ensures that the distance between two distinct centers of $C_{\ell'}$ is greater than $3\cdot 2^{-\ell'}$.  This condition has two implications:
first,  the balls $B(c',2^{-\ell'})$ with $c' \in C_{\ell'}$ are disjoint, and therefore at most one contains $\gamma$. 
Second, a single ball $B(c',2^{-\ell'})$ at level $\ell'$ can intersect $B(c,2^{-\ell})$. Indeed, assuming by contradiction that two balls $B(c'_1,2^{-\ell'})$ and $B(c'_2,2^{-\ell'})$ intersect $B(c,2^{-\ell})$, we have, by the triangle inequality, $\dist(c'_1,c'_2) \leq 2\cdot 2^{-\ell'} + 2^{-\ell} \leq 3 \cdot 2^{-\ell'}$ because we assumed $\ell' \leq \ell$. 

The procedure defines $\phi$ by increasing order of the radii of the $B_\gamma$'s. Therefore, when the center $\gamma'$ with $B_{\gamma'} = B(c', 2^{-\ell'})$ is considered in the procedure, all centers $\gamma$ previously matched had $B_{\gamma} = B(c, 2^{-\ell})$ with $\ell' \leq \ell$: the previous discussion implies that each of those removed at most two balls with radius $2^{-\ell'}$ from $\Phi$.

By construction of our algorithm, each $C_\ell$ contains $2k$ centers: since there are $k$ centers in $\Gamma$, the matching $\phi$ is well defined.
We can now verify that this matching satisfies the three desired properties.

\begin{enumerate}
    \item The procedure guarantees that for all $\gamma,\gamma'\in \Gamma_1$, $\phi(\gamma) \cap \phi(\gamma') = \emptyset$ because when a ball is selected, any ball intersecting it is removed from $\Phi$. 
    \item Let $\gamma \in \Gamma$. Assume by contradiction that there exists $\gamma'\in \Gamma$ covering $\phi(\gamma)$, and let $B_\gamma = B(z,2^{-\ell})$, $\phi(\gamma) = B(c,2^{-\ell})$ (the procedure ensures that $\phi(\gamma)$ has same radius as $B_\gamma$), and $B_{\gamma'} = B(z',2^{-\ell'})$. By definition of covering, $\gamma' \in \phi(\gamma)$.
    First, in the case where $\ell' > \ell $, $\gamma'$ is processed before $\gamma$ in the procedure: all the balls containing $\gamma'$ have been removed from $\Phi$ when $\phi(\gamma')$ is defined, and in particular $\phi(\gamma)$. This yields to a contradiction.
    In the other case, when $\ell' \leq \ell $, we note the following inequality: $\dist(c,\gamma') \leq 2^{-\ell} \leq 2^{-\ell'}$. Moreover by definition of $B_{\gamma'}$, $\dist(\gamma',z') \leq 2^{-\ell'}/2$. Using the triangle inequality we obtain $\dist(c,z') \leq 1.5 \cdot 2^{-\ell'} \leq c_\calA \cdot 2^{-\ell'}$, this implies that $B_{\gamma'}$ is not available at scale $c_\calA$ from $C$, which contradicts the definition of $B_{\gamma'}$. Those two cases concludes that there is no $\gamma'$ covering $\phi(\gamma)$.
    \item By definition, $B_\gamma$ is in $\calA(c_\calA,C)$, and the condition $3.$ of the lemma states that the value of the ball $\phi(\gamma) = B(c,2^{-\ell})$ is greater (up to an additive error $\theta$) than the value of any ball of $\calB_\ell$ available at scale $c_\calA$ from $C_\ell$: this set contains $\calA(c_\calA,C)$, and therefore it contains $B_\gamma$, and the third condition holds.
\end{enumerate}
This conclude the proof of \Cref{lem:MPCMP}.
\end{proof}

We describe more formally the greedy algorithm described in the main body. We suppose that machines are organized in a tree structure, where the degree of every node is $n^{\kappa}$ -- so that the depth of the tree is $1/\kappa$. We also suppose the dimension is reduced to $\hd = O(\log(k))$. The algorithm works as in the centralized case: first solve $(k,z)$-clustering in this projected space, then lift-up the centers in $\Rd$.

Initially, each possible ball is assigned randomly to one of the leaf (this can be implemented with a simple hash-function). Using standard MPC procedure, each machine can compute the value of each non-empty ball assigned to it.

For a leaf $M$, let $B(M)$ be the set of non-empty balls assigned to $M$. 
For each round $i = 0, \ldots, 1/\kappa$, the following process happens.
First, each machine $M$ at height $i$ in the tree runs the greedy algorithm with distance parameter $D_i := 2^{-i} \cdot 3 2^{1/\kappa}$. Let $S_M$ be the $2k$ balls selected by $M$: $M$ sends $S_M$ to its parent. 
For each machine at height $i+1$, let $B(M)$ be the set of balls received by $M$. This marks the end of the round.

First, the privacy of this algorithm follows directly from composition: at each level of the tree, properties of the exponential mechanism (or, more precisely, the proof of \Cref{lem:central-exp-priv}) ensure that process is $(\eps, \delta)$-DP, and there are $1/\kappa = O(1)$ many levels.

We verify that this algorithm can indeed be implemented in MPC when machines have memory $k^{O(1)} n^{\kappa}$. Note there are at most $2^{\hd} n = k^{O(1)} n$ non-empty balls at each level: therefore it is possible to represent all non-empty balls in the memory: each of the $n^{1-\kappa}$ machine has memory $k^{O(1)} n^\kappa \polylog(n)$. Assigning uniformly at random the balls to leafs ensures that with high probability they are assigned in a balanced way, and that they fit in memory.

To implement the exponential mechanism, the algorithm do not actually have access to all balls assigned to the machine $M$ -- only the non-empty ones. Instead, the algorithm uses the exponential mechanism on the following set: all possible balls, with value $0$, and all non-empty balls assigned to $M$, with their true value. 
This can be efficiently implemented in time $k^{O(1)} n^\kappa \polylog(n)$, and ensures that with probability $1-\beta'$, the ball selected by one exponential mechanism have maximum value up to $\theta = O\lpar \frac{\hd \log(n/\beta')}{\eps} \rpar$.

The exponential mechanism is used $2k$ times on each machine: choosing $\beta' = \frac{\beta}{2k n^{1-\kappa}}$ allows to do a union bound, and shows that all ball selected have maximum value up to $\theta = O\lpar \frac{\hd \log(n/\beta)}{\eps} \rpar$.

We will condition all our analysis on the fact that the exponential mechanism selects ball with maximal value, up to an additive error $\theta$. To compute $\theta$, note that the exponential mechanism is used $2k$ times in each machine, each time to select one ball out of $n^{\hd}$ many balls. in total it is used $2k n^{1-\kappa}$ times
To analyse this greedy algorithm, we note the following crucial claims:

\begin{claim}
    \label{claim:valIncreases}
    Let $M$ be a machine and $M'$ be its parent. Then, the smallest value in $S_{M'}$ is larger (up to $\theta$) than the smallest value in $S_{M}$.
\end{claim}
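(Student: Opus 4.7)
\textbf{Proof plan for Claim \ref{claim:valIncreases}.} Let $v^\ast := \min_{B \in S_M} \val(B)$. The plan is to show that during $M'$'s greedy execution, at \emph{every} one of its $2k$ iterations, there is at least one still-available ball whose value is at least $v^\ast$. Once this is established, the exponential-mechanism guarantee (selecting a ball of value within $\theta$ of the maximum available value) immediately yields $\val(B'_j) \geq v^\ast - \theta$ for each of the $2k$ selected balls, and in particular for the minimum-value ball in $S_{M'}$.

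The first step is to observe that $S_M \subseteq B(M')$ (since $M$ sends $S_M$ to $M'$), and that every ball in $S_M$ has value at least $v^\ast$ by definition. The second step — which is the heart of the argument and the only real obstacle — is a pigeonhole-style ``forbiddance'' count: after $M'$ has selected $j-1$ balls $B'_1,\dots,B'_{j-1}$ with $j \leq 2k$, I claim that each such $B'_l$ renders at most one ball from $S_M$ unavailable at scale $D_{i+1}$. This uses that all balls under consideration live at the same level $\ell$ and that $S_M$ is $D_i$-separated from the greedy at $M$, combined with the relation $D_i = 2 D_{i+1}$: if two distinct balls $B(x_1, 2^{-\ell})$ and $B(x_2, 2^{-\ell})$ of $S_M$ were both unavailable at scale $D_{i+1}$ with respect to a single center $c_l$ of $B'_l$, then $\dist(x_1, c_l) + \dist(x_2, c_l) < 2 D_{i+1} \cdot 2^{-\ell} = D_i \cdot 2^{-\ell}$, and the triangle inequality would give $\dist(x_1, x_2) < D_i \cdot 2^{-\ell}$, contradicting the $D_i$-separation of $S_M$.

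Combining, at step $j \leq 2k$ at most $j-1 < 2k$ balls of $S_M$ are forbidden, so $S_M$ still contributes at least one available ball of value $\geq v^\ast$ to the pool over which the exponential mechanism is applied. By the guarantee of the exponential mechanism (applied as specified in the paragraph preceding the claim, conditioning on the high-probability event that each call selects a ball whose value is within $\theta$ of the maximum), $\val(B'_j) \geq v^\ast - \theta$. Taking the minimum over $j=1,\dots,2k$ gives the claim. The main subtlety to be careful about is the strict-versus-non-strict inequality in the definition of ``available at scale $D$''; since ``unavailable'' means $\dist(x, c) < D \cdot 2^{-\ell}$ strictly, the triangle-inequality contradiction above is strict, and the argument goes through cleanly.
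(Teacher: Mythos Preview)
Your proof is correct and follows essentially the same approach as the paper's own proof: both argue that each ball selected by the greedy at $M'$ can forbid at most one ball of $S_M$ (by the triangle inequality, using that $S_M$ is $D_i$-separated while $M'$ forbids at scale $D_{i+1}=D_i/2$), so at every step of the greedy on $M'$ at least one ball of $S_M$ remains available, forcing each selected value to be at least $\min_{B\in S_M}\val(B)-\theta$. Your write-up is more explicit about the strict/non-strict inequality and the pigeonhole count, but the underlying argument is identical.
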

\begin{proof}
    The greedy process ensures all balls in $S_{M}$ are at distance at least $2^{-h(M)} c_\calA \cdot 2^{-\ell}$ of each other, where $h(M)$ is the height of $M$ in the tree. Therefore, by triangle inequality every ball selected in $S_{M'}$ forbids at most one ball of $S_{M}$ (due to the exponential decrease of the distance parameter): as a consequence, at any choice made by the greedy on $M'$, at least a ball of $S_M$ is still available. Therefore, the value of selected balls in $S_{M'}$ are only larger than the ones in $S_{M}$ (up to an additive error $\theta$ due to the exponential mechanism).
\end{proof}

\begin{claim}\label{claim:valForbidden}
    Let $B$ be a ball in $\calB(M)$ forbidden by some $B' \in S_{M}$ at scale $\sum_{i \leq h(M)} D_{i}$. Then, $\val(B) \leq \val(B') + h(M) \theta$.
\end{claim}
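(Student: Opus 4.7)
My plan is to prove the claim by induction on $h := h(M)$, using the fact that each application of the exponential mechanism introduces an additive value error of at most $\theta$ (already established in the text preceding the claim), while the scales $D_i$ chain by triangle inequality into $\sigma_h := \sum_{i \leq h} D_i$. At a high level, the ``loss'' of $\theta$ per level and the ``distance'' of $D_i$ per level compound additively as one moves up the merge-and-reduce tree, which is exactly what the statement asserts.

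For the base case $h = 0$, $M$ is a leaf and $\sigma_0 = D_0$. Because the claim is existentially quantified over $B' \in S_M$, I would choose $B'$ to be the first ball selected during the greedy at $M$ whose center lies within $D_0 \cdot 2^{-\ell}$ of $x_B$ (such a ball must exist under the hypothesis, else no element of $S_M$ would forbid $B$). At the moment this $B'$ was chosen, $B$ was still available (otherwise some strictly earlier $B^* \in S_M$ would also satisfy the distance bound, contradicting the minimality of $B'$), so the exponential mechanism's near-optimality guarantee yields $\val(B) \le \val(B') + \theta$. This gives the claim for $h=0$ (absorbing the extra $\theta$ into the $h\theta$ slack for $h \geq 1$; for $h = 0$ this is tight up to one $\theta$, as expected from a single mechanism call).

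For the inductive step ($h \geq 1$), I would exploit the decomposition $\calB(M) = \bigcup_{M''} S_{M''}$ over children $M''$ at height $h-1$, so $B \in S_{M''}$ for some $M''$. Running the greedy at $M$ with scale $D_h$, I split into two cases: (a) when $B$ is still available at the step where $B'$ is chosen, one application of the exponential mechanism at $M$ gives $\val(B) \le \val(B') + \theta$; or (b) $B$ is already forbidden by some earlier-selected $B^{*} \in S_M$ with $\dist(x_B, x_{B^{*}}) \le D_h \cdot 2^{-\ell}$, in which case I reassign $B' \leftarrow B^{*}$ (legal because $D_h \le \sigma_h$) and apply the mechanism to the step selecting $B^{*}$ where $B$ was available. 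When $B'$ is actually within $\sigma_h \cdot 2^{-\ell}$ but not within $D_h \cdot 2^{-\ell}$ of $B$, the inductive hypothesis applied inside the child $M''$ provides the missing link: it bounds $\val(B)$ against some ball in $S_{M''}$ that survived to become an element of $\calB(M)$ close to $B'$, picking up an extra $(h-1)\theta$ in value and $\sigma_{h-1}\cdot 2^{-\ell}$ in distance; triangle inequality $D_h + \sigma_{h-1} = \sigma_h$ and additive accumulation $\theta + (h-1)\theta = h\theta$ close the induction.

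The main obstacle will be (i) carefully verifying that the chained existential choice of $B'$ at each level is consistent with the statement's quantifier — in particular that the candidate we produce truly lies in $S_M$ and not merely in some intermediate $S_{M''}$ — and (ii) showing that balls forbidden via intermediate representatives (case (b) plus the recursive chaining) remain within the cumulative scale $\sigma_h$ via the geometric series $D_i = 2^{-i} \cdot 3 \cdot 2^{1/\kappa}$. Both of these are straightforward triangle-inequality bookkeeping once the induction is properly set up, but they are the only non-mechanical parts of the proof.
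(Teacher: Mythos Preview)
There is a concrete gap in your inductive step. The identity $\calB(M) = \bigcup_{M''} S_{M''}$ is false: $\calB(M)$ is the set of \emph{all} balls assigned to leaves below $M$ (recursively $\calB(M)=\bigcup_{M''}\calB(M'')$, as in the definition of $\calB_M$ in the proof of \Cref{lem:mpcGuarantee}), not merely the $2k$ balls each child sends up. So $B$ need not lie in any $S_{M''}$ and is therefore not a candidate in the greedy run at $M$; your cases (a)/(b) then have no force, and your fallback appeal to the inductive hypothesis inside $M''$ becomes circular once you have already placed $B$ in $S_{M''}$.

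The paper proceeds by a direct forward chain rather than induction: starting from $B$, it produces $B_i\in S_{M_i}$ level by level via a three-way case split (already selected / still available / forbidden by the first selected ball), losing at most one $\theta$ per step. What makes each step sound is that $B_{i-1}\in S_{M_{i-1}}\subseteq B(M_i)$, so $B_{i-1}$ really is a candidate in the greedy at $M_i$ --- exactly the property your decomposition tried to assert for $B$ itself. No bookkeeping on the scales $D_i$ is needed for the value inequality. Your induction is easily repaired to match this: apply the IH to $B\in\calB(M'')$ (the correct decomposition) to obtain some $\tilde B\in S_{M''}$ with $\val(B)\le\val(\tilde B)+(h{-}1)\theta$; since $\tilde B\in B(M)$, one greedy step at $M$ then lands in $S_M$ at the cost of one more $\theta$.
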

\begin{proof}
    Let $M_i$ be the machine at height $i$ such that $B \in \calB(M_i)$. Define $B_i$ as follows: 
    \begin{enumerate}
        \item if $B_{i-1} \in S_{M_i}$, $B_i = B_{i-1}$
        \item if $B_{i-1}$ is available at scale $D_{i}$ from $S_{M_i}$: let $B_i$ be an arbitrary ball in $S_{M_i}$
        \item otherwise, $B_{i-1}$ is forbidden by some ball in $S_{M_i}$. Let $B_i$ be the first ball in the greedy process to forbid $B_{i-1}$.
    \end{enumerate}
    Now, we claim that $\val(B_i) \geq \val(B) - i \theta$. This is obvious in the first case. In the second case, this is directly due to the greedy procedure: for each choice of the greedy on $M_i$, the ball $B_{i-1}$ is available, and therefore the ball selected has value at least $\val(B_{i-1}) - \theta$. In the third case, at the moment $B_i$ is chosen by the greedy procedure on $M_i$, $B_{i-1}$ is still available and therefore $\val(B_i) \geq \val(B_{i-1}) - \theta$.

    Since $B' = B_{h(m)}$, this concludes the claim.
\end{proof}

we can now show this implementation of the greedy achieve the guarantees required by \Cref{lem:MPCMP}.
\begin{restatable}{lemma}{mpcGuarantee}\label{lem:mpcGuarantee}
Let $C_\ell$ be the set of balls selected by the merge-and-reduce process described above after $1/\kappa$ rounds. 
The balls of $C_\ell$ are at distance at least $3 \cdot 2^{-\ell}$ from each other, and the value of any ball in $C_\ell$ is larger (up to $\theta/\kappa$) than the value of any ball available at scale $2^{1/\kappa+1} \cdot 3$.
\end{restatable}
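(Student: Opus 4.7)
The pairwise distance claim is immediate: at the root the greedy is run with $D_{1/\kappa} = 2^{-1/\kappa}\cdot 3\cdot 2^{1/\kappa} = 3$, so by construction the balls of $C_\ell = S_{M_{\text{root}}}$ have centers at pairwise distance strictly more than $3 \cdot 2^{-\ell}$. The whole content of the lemma is the value guarantee, and the main obstacle is that it must be uniform over all balls of $C_\ell$: it is not enough to exhibit one ball in $C_\ell$ whose value dominates the available ball's value — we need the worst (smallest-value) ball of $C_\ell$ to do so, which forbids naively chaining a single ``descent'' argument.

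Fix therefore a ball $B$ available at scale $c_\calA := 3\cdot 2^{1/\kappa+1}$ from $C_\ell$, let $M_0$ be the leaf that initially stores $B$, and $M_1,\dots,M_{1/\kappa}$ its ancestors up to the root. Mimicking \Cref{claim:valForbidden}, we trace a chain $B_0 = B, B_1, \dots, B_{1/\kappa+1}$ as follows: given $B_i \in \calB(M_i)$, inspect the greedy run on $M_i$ and set
\begin{itemize}
\item[(A)] $B_{i+1} = B_i$, if $B_i$ is picked into $S_{M_i}$;
\item[(B1)] $B_{i+1}$ an arbitrary element of $S_{M_i}$, if $B_i$ is still available at scale $D_i$ from $S_{M_i}$ at the end of the greedy;
\item[(B2)] $B_{i+1}$ the first ball chosen into $S_{M_i}$ that forbids $B_i$ at scale $D_i$, otherwise.
\end{itemize}
In all three cases one has $B_{i+1} \in S_{M_i} \subseteq \calB(M_{i+1})$ and $\val(B_{i+1}) \geq \val(B_i) - \theta$ (in (A) nothing changes; in (B1) every pick beats $B_i$ because $B_i$ is always an eligible option; in (B2) $B_i$ was eligible when $B_{i+1}$ was chosen).

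The pivotal step is to show that case (B1) must trigger at some level $i^* \in \{0,\dots,1/\kappa\}$. Suppose for contradiction that only (A) and (B2) ever occur. Case (A) induces no displacement of the center, while (B2) moves the center by at most $D_i \cdot 2^{-\ell}$, so by the triangle inequality
\[\dist(B, B_{1/\kappa+1}) \ \leq \ \sum_{i=0}^{1/\kappa} D_i \cdot 2^{-\ell} \ =\ 3 \cdot 2^{1/\kappa}\bigl(2 - 2^{-1/\kappa}\bigr)\cdot 2^{-\ell} \ <\ 3\cdot 2^{1/\kappa+1}\cdot 2^{-\ell} \ =\ c_\calA \cdot 2^{-\ell}.\]
But $B_{1/\kappa+1} \in S_{M_{1/\kappa}} = C_\ell$, contradicting the assumption that $B$ is available from $C_\ell$ at scale $c_\calA$. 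This displacement argument is what makes the lemma work and is the main technical point.

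At the first level $i^*$ where case (B1) occurs, $B_{i^*}$ remains available throughout the greedy on $M_{i^*}$, so every ball chosen by the greedy has value at least $\val(B_{i^*}) - \theta$; combining with the per-step loss along the chain yields $\min_{B' \in S_{M_{i^*}}}\val(B') \geq \val(B) - (i^* + 1)\theta$. Finally, \Cref{claim:valIncreases} applied $1/\kappa - i^*$ times from $M_{i^*}$ up to the root — its proof carries over to the merge step because $D_{j+1} = D_j/2$ guarantees that a single parent selection forbids at most one ball of each child's $S_M$ — yields
\[\min_{B' \in C_\ell}\val(B') \ \geq\ \val(B) - (1/\kappa + 1)\theta,\]
which, absorbing the constant into the ``$\theta/\kappa$'' notation, is precisely the value guarantee claimed by the lemma.
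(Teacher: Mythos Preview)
Your proof is correct and uses the same ingredients as the paper --- the chain construction from the proof of \Cref{claim:valForbidden}, the geometric-series displacement bound, and \Cref{claim:valIncreases} --- but you organize them into a single direct argument following one ball up the tree, whereas the paper proves by induction on tree height the stronger invariant that every ball in $S_M$ dominates every ball in $\calB_M$ available at the cumulative scale. Your presentation is arguably cleaner (it sidesteps the slight mismatch of scales in the paper's inductive step), at the cost of a final error of $(1/\kappa+1)\theta$ rather than $(1/\kappa)\theta$, which is immaterial.
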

\begin{proof}
    The first part of the statement follows directly from the greedy merging: at the root of the tree, the ball selected are at distance $3\cdot 2^{-\ell}$ from each other. For the second, we proceed by induction.
    
    We start with few notations: for a machine $M$, we let $h(M)$ be the height of $M$ in the tree, and $c(M)$ be the set of children of $M$. We define inductively $\calB_M := \cup_{M' \in c(M)} \calB_{M'}$ to be the set of balls covered by machine $M$ (for a leaf $M$ $\calB_M$ be the balls assigned to the leaf $M$ by the random assignment).
    We define $S_M$ be the set of balls selected by the greedy procedure run on $M$.

    Our inductive claim is that the balls selected in $S_M$ have larger values, up to an additive error $\theta h(M)$, than the balls in $\calB_M$ available at scale $\sum_{i \leq h(M)} D_i \cdot 2^{-\ell}$ from $S_M$.
    When $h(M) = 0$, this property stems directly from the greedy algorithm. 

    For $h(M) \geq 1$, we proceed as follows: let $B$ be a ball available at scale $\sum_{i \leq h(M)} D_i \cdot 2^{-\ell}$ from $S_M$. 
    Let $M'$ be the children of $M$ such that $B \in \calB_{M'}$.
    First, in the case where $B \in S_{M'}$, $B$ is part of the greedy process on machine $M$: since it is available and not selected, it has smaller value than any ball selected in $S_M$ up to an additive error $\theta$.²
    Second, in the case where $B \notin S_{M'}$: either $B$ is forbidden at scale $D_{h(M')}$ by some ball $B' \in S_{M'}$, or $B$ is available at scale $D_{h(M')}$ from $S_{M'}$. 
\begin{itemize}
    \item  First, suppose $B$ is forbidden by some $B' \in S_{M'}$ at scale $\sum_{i \leq h(M')} D_{i}$.
    It cannot be that $B' \in S_M$, as otherwise $B$ would be forbidden at scale $\sum_{i \leq h(M)} D_{i}$ from $S_M$. 
    Therefore, either $B'$ is forbidden at scale $D_{h(M)}$ from $S_M$, or it is still available at the end of the greedy on $M$. 
    In the first case, triangle inequality shows that $B$ is at distance at most $\sum_{i \leq h(M)} D_{i}$ of $S_M$, which contradicts the initial condition on $B$. 
    On the second case, if $B'$ is still available at scale $D_{h(M)}$ from $S_{M}$, then the greedy ensures that its value is smaller than that of any ball selected in $S_M$, up to an additive error $\theta$. Hence, combined with \Cref{claim:valForbidden}, this shows that the value of $B$ is smaller than any ball of $S_M$, up to an additive error $h(M) \theta$.
    \item In the case $B$ is available at scale $\sum_{i \leq h(M')} D_{i}$ from $S_{M'}$, then we know by induction that its value is less than any selected ball in $S_{M'}$, up to an additive $h(M') \theta$. As shown in \Cref{claim:valIncreases}, this implies its value is less than any selected ball in $S_M$ as well, up to an additive error $h(M') \theta + \theta = h(M) \theta$, which concludes the inductive claim.
\end{itemize}

Therefore, at height $2^{1/\kappa}$ (i.e., the root of the merge-and-reduce tree), the balls selected have higher values than the one available at scale $\sum_{i \leq 2^{1/\kappa}} D_i 2^{-\ell}$. As $D_i = 2^{1/\kappa - i}$, this is at most $2^{1/\kappa+1} \cdot 2^{-\ell}$, which concludes the lemma.
\end{proof}

Combining those lemmas proves \Cref{thm:mpc}:
\begin{proof}[Proof of \Cref{thm:mpc}]
    \Cref{lem:mpcGuarantee} shows that, at each level, the set $C_\ell$ computed by the algorithm satisfies the conditions of \Cref{lem:MPCMP} (with $c_\calA = 2^{1/\kappa+1} \cdot 3$ and $\theta = O\lpar \frac{\hd \log(n/\beta)}{\eps} \rpar$): therefore, $C = \cup C_\ell$ verifies $\cost(P, C) \leq O(1) \opt_{k,z} + O\lpar k \cdot \frac{\hd \log(n/\beta)}{\eps} \rpar)$.

    Furthermore, $C$ has size $O(k\log n)$: it can be aggregated in a single machine. The techniques from \Cref{sec:bicriteria} allows then to reduce the number of centers selected to $k$, without increasing the error.

    Finally, to lift-up the solution in the original space, we use  \Cref{lem:liftingViaHist}: using noisy average, we can estimate the mean of each cluster up to an additive error $ O\lpar\sqrt{d}/\eps\rpar $, which yields a solution with cost $O(1) \opt_{k,z} + O\lpar \polylog(n) \cdot \frac{k\sqrt{d}}{\eps}\rpar$.

    For the second part of the lemma, we need to boost the approximation ratio using \Cref{sec:boostApprox}. For this, \Cref{lem:boostApprox} directly applies, if we can lift up the solution to the original space. For $k$-means, we can simply compute the noisy average of each cluster -- which gives an extra additive error $O(k\sqrt{d} \log(1/\delta)/\eps)$, as in the proof of \Cref{lem:centralizedError}.

    Recovering the optimal $(1,z)$-clustering in each cluster, using low memory, is more intricate. For this, we apply the techniques presented in Theorem E.6 of \cite{Cohen-AddadEMNZ22}, to get an additive error $(2^{\hd} + k \sqrt{d}) \polylog(n/\delta) / \eps$.
    
    In both cases, this provides a solution with cost $(1+\alpha)w^*\opt_{k,z} + (k^{O_\alpha(1)} + k\sqrt{d})\polylog(n/\delta)/\eps$.
\end{proof}

\section[A note on epsilon-Differential Privacy]{A note on $\eps$-Differential Privacy}\label{app:epsprivacy}

Our results partially extend to pure-DP, i.e. when $\delta = 0$. 
In that case, we can adapt our algorithm for the optimal multiplicative approximation : in most models, there is a general summation algorithm with additive error essentially worsen by a $\sqrt{b}$ factor, compared to $(\eps, \delta)$-DP. 
For centralized DP, the additive error is $b \sqrt{D} \log(m/\beta)/\eps$ -- this is a direct extension of histograms. 
In local DP, the additive error we presented already worked in the case $\delta=0$ -- as the summation result of \cite{ChangG0M21} that we used readily works in this case.

For $\eps$-DP  under continual observation, we can apply the second part of our \Cref{lem:histoContinual} to get that an additive error worsen by a $\sqrt{d \log (T)}$ -- leading to a $(k,z)$-clustering with additive error $d k^{O_\alpha(1)} \log^3(n) \log^2(T) \log(1/\beta) $.

\end{document}